\renewcommand{\phi}{\varphi}
\renewcommand{\ker}{\Ker}
\newcommand{\mc}[1]{\mathcal{#1}}
\newcommand{\mb}[1]{\mathbb{#1}}
\newcommand{\tint}{{\textstyle\int}}
\def\be{\begin{equation}}
\def\ee{\end{equation}}
\def\bea{\begin{eqnarray}}
\def\eea{\end{eqnarray}}
\def\nn{\nonumber}
\def\T{\mathbb{T}}
\def\Z{\mathbb{Z}}
\DeclareMathOperator{\Mat}{Mat}
\DeclareMathOperator{\Ker}{Ker}
\DeclareMathOperator{\Tr}{Tr}
\DeclareMathOperator{\dd}{dd}
\DeclareMathOperator{\Span}{span}
\DeclareMathOperator{\ord}{ord}
\DeclareMathOperator{\Var}{Var}
\newcommand{\E}[1]{\mathbb{E}\left[#1\right]}
\def\ie{\textit{i.e. }}
\def\a{\alpha}
\def\e{\varepsilon}
\def\d{\delta}
\def\g{\gamma}
\def\b{\beta}
\def\D{\Delta}
\def\l{\lambda}
\def\r{\rho}
\def\s{\sigma}
\def\R{\mathbb{R}}
\def\C{\mathbb{C}}
\def\E{\mathbb{E}}
\newcommand{\ass}[1]{\stackrel{#1}{\longleftrightarrow}}
\theoremstyle{plain}
\newtheorem{theorem}{Theorem}[section]
\newtheorem{lemma}[theorem]{Lemma}
\newtheorem{proposition}[theorem]{Proposition}
\newtheorem{corollary}[theorem]{Corollary}
\theoremstyle{definition}
\newtheorem{definition}[theorem]{Definition}
\theoremstyle{remark}
\newtheorem{remark}[theorem]{Remark}
\numberwithin{equation}{section}
\definecolor{light}{gray}{.9}
\author{Giuseppe Genovese}
\address{Giuseppe Genovese: Institut f\"ur Mathematik, Universit\"at Z\"urich,
CH-8057 Z\"urich, Switzerland.}
\email{giuseppe.genovese@math.uzh.ch}
\author{Renato Luc\`a}
\address{Renato Luc\`a: Instituto de Ciencias Matem\'{a}ticas, CSIC, 28049 Madrid, Spain.}
\email{renato.luca@icmat.es}
\author{Daniele Valeri}
\address{Daniele Valeri: Yau Mathematical Sciences Center, Tsinghua University, 100084 Beijing, China.}
\email{daniele@math.tsinghua.edu.cn}
\title[Gibbs measures and DNLS equation]
{Gibbs measures associated to the integrals of motion of the periodic
derivative nonlinear Schr\"odinger equation}
\subjclass[2000]{35Q30, 35BXX, 37K05, 37L50, 35Q55, 37K10, 37K30, 17B69, 17B80}
\keywords{Gibbs measures, DNLS, integrable systems}
\begin{document}

\begin{abstract}
We study the one dimensional periodic derivative nonlinear Schr\"odinger (DNLS) equation. 
This is known to be a completely integrable system, in the sense that there is an infinite sequence
of formal integrals of motion $\tint h_k$, $k\in \mathbb{Z}_{+}$.
In each $\tint h_{2k}$ the term with the highest regularity involves the Sobolev norm $\dot H^{k}(\T)$
of the solution of the DNLS equation.
We show that
a functional measure on $L^2(\T)$, absolutely continuous w.r.t. the Gaussian measure
with covariance $(\mathbb{I}+(-\D)^{k})^{-1}$,
is associated to
each integral of motion $\tint h_{2k}$, $k\geq1$.

%
\end{abstract}

\maketitle


\section{Introduction}\label{sec:intro}

In this paper we consider the derivative nonlinear Schr\"odinger equation (DNLS) in the space periodic setting: 
\begin{equation}\label{eq:dnls}
\left\{
\begin{array}{lcl}
i\psi_t & = & \psi''+i\beta\left(\psi|\psi|^2\right)'  \\
\psi(x,0) & = &  \psi_{0}(x) \,, 
\end{array}
\right.
\end{equation}
where $\psi(x,t) : \mathbb{T} \times \mathbb{R} \rightarrow \mathbb{C}$,
$\psi_{0}(x) : \mathbb{T} \rightarrow \mathbb{C}$,
$\psi'(x,t)$ denotes the derivative with respect to $x$, and 
$\b\in\R$ is a real parameter.

The DNLS is a dispersive nonlinear equation coming from magnetohydrodinamics. It describes the motion along the longitudinal direction of a circularly polarised wave, generated in a low density plasma by an external magnetic field \cite{rog, moj} (see also \cite{sulemsulem}). It is known to be an integrable system \cite{KN78} (see also \cite{DSK13})
in the sense that there is an infinite sequence of linearly independent quantities (integrals of motion) which 
are conserved by the flow of \eqref{eq:dnls} for sufficiently regular solutions. The 
integrals of motion are functionals defined on Sobolev spaces of increasing regularity.

The aim of this paper is to construct an infinite sequence of functional Gibbs measures associated to the integrals of motion.
These measures turn out to be absolutely continuous with respect to the standard Gaussian measures 
with covariance $(\mathbb{I}+(-\D)^{k})^{-1}$, thus different measures are disjointly supported (see Appendix A).

The program of statistical mechanics of PDEs begins with the seminal paper by Lebowitz, Rose and Speer
\cite{Leb}. The authors study the periodic one dimensional NLS equations and introduce the statistical
ensembles naturally associated to the Hamiltonian, as in a classical field theory.
Successively, Bourgain completed this study: in \cite{B94} by proving the invariance of the Gibbs measure
for the cubic periodic case and in \cite{Bou2} extending the results to the real line. Similar achievements have 
been obtained with different methods in \cite{MV2} for cubic NLS, in \cite{MV1} for the wave equation, and 
in two dimensions in \cite{Bou4} for defocusing cubic NLS equation, in \cite{BS} for the focusing case, in 
three dimensions for the Gross-Pitaevskii equation in \cite{BGP}.

For integrable PDEs one can carry out the same study by profiting from an infinite number of higher 
Hamiltonian functionals. This was originally noted by Zhidkov \cite{Zh01}, who analysed the
Korteweg-de Vries (KdV) and cubic nonlinear Schr\"odinger (NLS) equation on $\T$.
The main idea, already contained in \cite{Leb}, is to restrict the measure associated 
to a given 
integral of motion to the set of solutions with fixed values for all the other 
integrals of motion involving less regularity (in a sense that will be clarified below). The invariance of such a 
set of measures gives interesting informations on the long time behavior of the regular solutions, for instance through 
the Poincar\'e Recurrence Theorem (see \cite{Zh01, BTT14}).
In the last years this approach has been adopted in a series of papers by 
Tzvetkov, Visciglia and Deng \cite{TzPTRF,TV13a,TV13b,TV14,D14,DTV14} for the Benjamin Ono equation 
on $\T$. In this case 
a more careful construction of the measure is 
required compared to KDV and NLS. 
We find similar difficulties in studying the DNLS equation.

Despite the extensive investigation in the past decades on integrable PDEs, a limited attention has been given
to the integrability properties of the DNLS equation. An infinite sequence of integrals of motion for this equation 
has been found in \cite{KN78} using the inverse scattering method. More recently, another proof of the 
integrability of the DNLS equation has been achieved using the so-called the Lenard-Magri scheme \cite{Mag78} within the 
framework of (non local) Poisson Vertex Algebras \cite{DSK13}.

The first few integrals of motion of the DNLS equation are:
\begin{align}
\label{Eq:Legge0}
\tint h_0 & 
=
\frac12\|\psi\|_{L^2}^2\,,
\\
\nonumber
\tint h_1 &
=
\frac i2\int\psi\bar\psi'
+\frac\beta4\|\psi\|_{L^4}^4\,,
\\
\label{Eq:Legge1}
\tint h_2 &
=
\frac12 \|\psi\|_{\dot{H}^1}^2
+\frac{3i}2\beta\int\psi^2\bar\psi\bar\psi'
+\frac{\beta^2}4\|\psi\|_{L^6}^6\,,
\\
\nonumber
\tint h_3 &
=\frac i2\int\psi'\bar\psi''
+\frac\beta4\int\left((\psi')^2\bar\psi^2
+8\psi\bar\psi\psi'\bar\psi'+\psi^2(\bar\psi')^2
\right)
+\frac{5 i}{4}\beta^2\int\psi^3\bar\psi^2\bar\psi'
+\frac 5{16}\beta^3\|\psi\|_{L^8}^8\,,
\\
\nonumber
\tint h_4 
&
= 
\frac12\|\psi\|_{\dot{H}^2}^2
+\frac{5i}4\beta\int\left(
\psi\bar\psi\psi'\bar\psi''-\psi\bar\psi\psi''\bar\psi'
\right)
\\
\nonumber
&
+
\frac{5}{4}\beta^2\int\left(
\psi\bar\psi^3(\psi')^2+5\psi^2\bar\psi^2\psi'\bar\psi'+\psi^3\bar\psi(\bar\psi')^2
\right)
+\frac{35i}{16}\beta^3\int\psi^4\bar\psi^3\bar\psi'
+\frac{7}{16}\beta^4\|\psi\|_{L^{10}}^{10}\,.
\end{align}
Here and further, we denote by $\int f=\frac{1}{2\pi}\int_\T f$.
Note that, while for $k$ even the term of highest regularity in the integral of motion $\int h_{k}$
is the $\dot H^{\frac k2}(\T)$ norm, for odd $k$ this term is not definite in sign. This prevents us to 
associate an invariant Gibbs measure to every integral of motion $\tint h_k$, $k\in\mb Z_+$.
The same does not occur for KdV, NLS or Benjamin-Ono equations. 

The DNLS equation has been shown to be locally well posed for initial data in $H^{s\geq 1/2}$ both for
periodic and non periodic settings (see \cite{Herr} and respectively \cite{Tk}). The global well-posedness 
has been proven for $H^{s\geq 1/2}(\R)$ in 
\cite{miao} and in $H^{s>1/2}(\T)$ in \cite{Win}. The global results hold for initial data with small $L^2(\T)$ norm.
For instance a standard procedure (see \cite{Herr}) allows to globalize the local $H^{1}(\mathbb{T})$ solutions
provided that $\| \psi_{0} \|_{L^{2}(\mathbb{T})} < \delta$ with $\delta$ small enough, by using the conservation law $\int h_{2}$ and the  
Gagliardo--Nirenberg inequality
\be\label{eq:GN}
\|u\|^3_{L^6(\mathbb{T})}\leq
\|u\|_{\dot{H}^{1}(\mathbb{T})}\|u\|_{L^2(\mathbb{T})}^{2}
+
\frac{1}{2 \pi} \|u\|^3_{L^2(\mathbb{T})}\,.
\ee
On the other hand this approach does not give the best possible value for $\delta$, which is still unknown. 
In particular in the case of DNLS on $\mathbb{R}$ the
sharp Gagliardo--Nirenberg inequality
$$
\|u\|^3_{L^6(\mathbb{R})}\leq
\frac{2}{\pi}
\|u\|_{\dot{H}^{1}(\mathbb{R})}\|u\|_{L^2(\mathbb{R})}^{2} \, ,
$$
proved in \cite{Wei82},  gives the value $\delta = \sqrt{2\pi / |\beta|}$ for global well-posedness \cite{HO92, HO93}, which has been actually improved to $\delta=2\sqrt{\pi / |\beta|}$ 
in \cite{Wu13, Wu15}
by different techniques.
We point out that $\delta=2\sqrt{\pi / |\beta|}$ is also sufficient on $\mathbb{T}$ \cite{MO15}.
All this results are originally stated for $\beta = \pm 1$, the case of general $\beta$ can be easily deduced by
using the transformation $u(t,x) \to |\beta|^{-1/2} u(t, \frac{\beta}{|\beta|} x)$.

The lack of well-posedness at low regularity makes hard to construct an invariant measure associated to
the lowest order integrals of motion. 
For $\tint h_2$ the main issue is that there is no well-posedness in 
$\bigcap_{\e>0}H^{\frac12-\e}(\T)$ which is the support of the Gaussian measure with covariance 
$\mathbb{I}-\D$. A very delicate analysis is necessary to deal with this problem. In \cite{NOR-BS12} the authors constructed a functional measure in the Fourier-Lebesgue space $\mathcal{F}L^{s,r}(\T)$, 
$r\in(2,4)$, $s\in[1/2,1-r^{-1})$, for which there is a local existence theorem \cite{GH}. They are 
able to prove the invariance of this measure with respect to the DNLS flow (studying in fact the gauged 
DNLS equation). Then in \cite{NR-LSS11} the study is completed, by proving the absolute continuity of this measure with respect to the Gibbs measure constructed in \cite{TT10}, which would be a more natural candidate for the invariant measure associated to the energy functional $\tint h_2$. 
Similar results for the DNLS equation have been obtained, with different methods, in \cite{BTT14}.  
To the best of our knowledge, so far these are the sole known results for Gibbs measures associated to the DNLS equation.

%
%
\subsection{Set up and Main Result}

The main goal of this paper is to construct Gaussian measures supported on Sobolev spaces with increasing 
regularity, associated to the integrals of motion of the DNLS. Let us introduce now the main objects we are 
going to deal with. 

As usual we denote by $H^k(\T)$, $k\in\mb Z_+$, the completion of $C^{\infty}(\T)$ with respect to the norm induced by the scalar product
$$
(u,v)_{H^k}:=\sum_{n\in\Z} (1+|n|)^{2k} \bar u_n v_n \,.
$$
where $u_n$ are the Fourier coefficients of $u$. For every $k\in\mb Z_+$, $H^k(\T)$ is a separable
Hilbert space, and we note that $H^{0}(\T)=L^2(\T)$. A function in $H^k(\T)$ is represented as a 
sequence $\{u_n\}_{n\in\mb Z_+}$ such that $\sum_{|n|\leq N}(1+ |n|)^{2k}|u_n|^2$ is finite uniformly in $N\in\mathbb{Z}_+$.

We also use the homogeneous Sobolev spaces $\dot H^k(\T)$, defined as the completion of $C^{\infty}(\T)$ with respect to the norm induced by the homogenous scalar product
$$
(u,v)_{\dot H^k}:=\sum_{n\in\Z} n^{2k} \bar u_n v_n \,.
$$



Now we consider the Hilbert space $L^2(\T)$.
For any $k\in\mb Z_+$, let us denote by $\mathbb{I}+(-\D)^k$ the closure in $L^2(\T)$ of the operator 
$1+\left(-\frac{d^{2}}{d x^{2}}\right)^k$ acting on $C^{\infty}(\T)$. As it is well known this is a positive, self adjoint 
operator with a trivial kernel. Therefore its inverse $(\mathbb{I}-\D^k)^{-1}$ is bounded and moreover it 
can be shown that it is of trace class. 

In virtue of this last property we can construct a Gaussian measure on $L^2(\T)$ as follows.
We denote by $e_n=e^{inx}$ the eigenvectors of $\mathbb{I}+(-\D)^k$:
$$
(\mathbb{I}+(-\D)^k)e_n= (1+n^{2k})e_n.
$$
Since $\mathbb{I}+(-\D)^k$ is self-adjoint the set of its eigenvectors spans the space $L^2(\T)$, and so each function $u(x)\in L^2(\T)$ can be written as
$$
u(x)=\sum_{n\in\mathbb{Z}} u_ne_n,
$$
that is nothing but Fourier series. We consider at first finite dimensional truncations, looking only at the components of the expansion for $|n|\leq N$. 
We define
$$
\g_k^N(A):=\frac{\prod_{|n|\leq N} \sqrt{1+n^{2k}}}{(2\pi)^{2N+1}}\int_A du_{-N}d\bar u_{-N}...du_Nd\bar u_N e^{-\frac12\sum_{|n|\leq N}(1+n^{2k})|u_n|^2}
$$
to be the complex Gaussian measure of a set $A\subseteq \C^{2N+1}$. This measure can be extended in infinite dimensions following a standard method \cite{Sko, Zh01}.
For any Borel subset $B\subset\C^{2N+1}$ we introduce the corresponding cylindrical set in $L^2(\T)$ as
$$
M_N(B)=\{u\in L^2(\T)\mid [
u_{-N},\bar u_{-N},\dots,u_N, \bar u_{N}]\in B\}\,.
$$
Since $\mathbb{I}+(-\D)^k$ is of trace class, we can extend the Gaussian measure $\g_k^N$ to $L^2(\T)$ functions by setting $\g_k(M_N):=\g_k^N(M_N)$ and then using Kolmogorov reconstruction theorem. It can be verified that this defines a countably additive measure on $L^2(\T)$. We refer to \cite{Bog, Sko} for a more detailed presentation (see also Appendix A for some properties that will be used in the paper). We denote by $L^p_{\g_k}$ the Banach space of functionals $F\,:\,L^2(\T)\to\mathbb{C}$ such that
$$
\int d\g_k(d\psi) |F(\psi)|^p<\infty.
$$
For the ease of notation we simply denote as $\E[\cdot]$ (instead of $\E_{\g_k}[\cdot]$) the expectation value w.r.t. the measure $\g_k$. Anyway the particular $\g_k$ considered will be always clear from the context. 

\medskip

For $N\geq1$, we set $E_{N}=\Span_{\mb C}\{e^{inx}\mid |n|\leq N\}$,
and we denote by $P_{N} : L^{2}(\T) \to E_N$ the projection map onto the
space $E_N$. Namely, for $u=\sum_{n\in\mb Z}u_ne^{inx} \in L^{2}(\T)$,
we have
\be\label{eq:proiettore}
P_Nu:=\sum_{|n|\leq N}u_ne^{inx}\,.
\ee
When there is no confusion, we simply denote 
\be \label{eq:u_N}
u_N:=P_Nu\,.
\ee

For $\psi\in L^2(\T)$, we show in Section \ref{sec:1} that 
$$
\tint h_{2k}[\psi]=\frac12 \|\psi\|^2_{\dot H^{k}}+\tint q_k[\psi]\,,
\qquad
k\in\mb Z_+\,,
$$
where $\tint q_k$ is a sum of terms of the form
$$
\tint\bar\psi^{(\a_1)}\dots\bar\psi^{(\a_l)}\psi^{(\b_1)}\dots\psi^{(\b_l)},
$$
with $l\leq 2k+2$, $\a_i,\b_i\in\mb Z_+$ and $\sum_{i\leq l}\a_i+\b_i\leq 2k-1$.

Let now $R>0$, and let $\chi_R:\mb R\to[0,1]$ be a smooth function such that
$\chi = 0 \in \R \setminus[-R,+R]$ and $\chi=1$ in $[-R/2, +R/2]$.
For $k\geq2$, let us fix $R_m>0$, for $m=0,\dots,k-1$.
Thus we can define the density 
\begin{equation}\label{eq:G_k}
G_{k,N}(\psi)
=\Big(
\prod_{m=0}^{k-1}
\chi_{R_m}\left(\tint h_{2m}(\psi_{N})\right)
\Big)
e^{-\tint q_{k}(\psi_N)}
\,.
\end{equation}
The associated measure $d\rho_{k,N}$ is
$$
\rho_{k,N}(d\psi)
=G_{k,N}(\psi)\gamma_{k}(d\psi)\,.
$$

The main result of the paper is the following:
\begin{theorem}\label{thm:k_even}
Let $k\geq2$ and $R_0 \leq \sqrt{\frac{2}{9 | \beta |}}$ sufficiently small. 
The sequence $G_{k,N}(\psi)$ defined by equation \eqref{eq:G_k}
converges in measure, as $N\to\infty$, w.r.t. the measure
$\gamma_{k}$.
Denote by $G_k(\psi)$ its limit.
Then, there exists $p_0(R_0,\dots,R_{k-1},k,|\beta|)>1$ such that, for all
$1\leq p<p_0$, $G_k(\psi)\in L^p(\g_k)$ and
$G_{k,N}(\psi)$ converges to $G_k(\psi)$ in $L^p(\g_k)$.
\end{theorem}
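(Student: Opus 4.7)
The plan is the Lebowitz--Rose--Speer--Bourgain scheme in the integrable-hierarchy form of Tzvetkov--Visciglia: first produce a $\gamma_k$-a.e. limit $G_k$ for $G_{k,N}$ along a subsequence (giving convergence in measure), and then upgrade to $L^p$ via Vitali's theorem by showing
$$
\sup_{N} \mathbb{E}\bigl[|G_{k,N}(\psi)|^{p}\bigr] < \infty
\quad \text{for all } 1\leq p<p_0 .
$$

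For the pointwise convergence, decompose $\tint q_k$ into its finitely many multilinear monomials $\tint \bar\psi^{(\a_1)}\!\cdots\bar\psi^{(\a_l)}\psi^{(\b_1)}\!\cdots\psi^{(\b_l)}$ with $l\leq 2k+2$ and $\sum_i (\a_i+\b_i)\leq 2k-1$. Under $\gamma_k$ one has $\psi\in H^{k-1/2-\e}$ a.s., so after integration by parts we may assume each factor carries at most $\lfloor(2k-1)/2\rfloor\leq k-1$ derivatives, which places each factor in $H^{1/2-\e}\subset L^q$ for every finite $q$; by Hölder each monomial is a polynomial in Gaussian Fourier modes whose $L^p(\gamma_k)$ norm is finite, with uniform-in-$N$ bounds obtained via Nelson's hypercontractivity from the $L^2$ bound. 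A direct Fourier computation of $\|\tint q_k(\psi_N)-\tint q_k(\psi_M)\|_{L^2(\g_k)}$, using orthogonality in each Wiener chaos, shows the Cauchy property, so $\tint q_k(\psi_N)\to\tint q_k(\psi)$ in $L^p(\g_k)$. The same argument applied to each $\tint h_{2m}$ with $m<k$, combined with the boundedness and continuity of $\chi_{R_m}$, yields $\g_k$-a.e. convergence of the full product $G_{k,N}$ along a subsequence, defining $G_k$.

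For the uniform $L^p$ bound, fix $p\geq 1$ and restrict attention to the event $A_N=\{\tint h_{2m}(\psi_N)\leq R_m,\ m=0,\dots,k-1\}$ where the cutoffs are non-zero, so $|G_{k,N}|^p\leq e^{-p\,\tint q_k(\psi_N)}\mathbf{1}_{A_N}$. On $A_N$ the $L^2$ mass satisfies $\|\psi_N\|_{L^2}^2\leq 2R_0$, and then an induction on $m$ using the recursive identity $\tint h_{2m}=\tfrac12\|\psi\|_{\dot H^m}^2+\tint q_m$ plus Gagliardo--Nirenberg (as in \eqref{eq:GN}) controls $\|\psi_N\|_{\dot H^m}^2\leq C_m(R_0,\dots,R_{m-1},|\b|)$ for every $m\leq k-1$. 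Interpolating each monomial in $\tint q_k$ against these norms produces the lower bound
$$
\tint q_k(\psi_N)\;\geq\;-\eta\,\|\psi_N\|_{\dot H^k}^2\;-\;C(\eta,R_0,\dots,R_{k-1},|\b|),
$$
with $\eta>0$ made arbitrarily small by choosing $R_0$ small enough. Since $\g_k$ has Gaussian weight $\exp(-\tfrac12\|\psi\|_{L^2}^2-\tfrac12\|\psi\|_{\dot H^k}^2)$ at the Fourier level, $\mathbb{E}[\mathbf{1}_{A_N}e^{-p\,\tint q_k(\psi_N)}]$ is bounded by $e^{pC}\,\mathbb{E}[e^{p\eta\|\psi_N\|_{\dot H^k}^2}]$, which is finite uniformly in $N$ as soon as $p\eta<\tfrac12$; picking $\eta$ small gives $p_0>1$. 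Vitali's theorem then promotes the convergence in measure to $L^p(\g_k)$ convergence.

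The main obstacle is the lower bound $\tint q_k\geq -\eta\|\psi\|_{\dot H^k}^2-C$: many summands in $\tint q_k$ are sign-indefinite complex integrals and must first be symmetrised by integration by parts before Gagliardo--Nirenberg can be applied, and the worst nonlinear contribution is structurally the same one that limits DNLS global well-posedness -- this is why the explicit smallness threshold $R_0\leq\sqrt{2/(9|\b|)}$, echoing the sharp Gagliardo--Nirenberg constant discussed in the introduction, appears in the hypothesis and is precisely what allows $\eta$ to be driven below $1/(2p_0)$ for some $p_0>1$. Verifying that this particular threshold is sufficient for every monomial, uniformly across $k$, is the delicate technical point of the argument.
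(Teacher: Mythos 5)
Your first half (convergence in measure) follows essentially the paper's route: treat the sub-critical monomials of $\tint q_k$ pathwise using $\psi_N\to\psi$ in $H^{k-1}$ a.s., and treat the critical quartic term by a Wick/chaos computation giving an $L^2(\g_k)$ Cauchy property, upgraded by Nelson hypercontractivity. One caveat: your claim that integration by parts lets every factor carry at most $k-1$ derivatives is false for the key monomial $\tint\bar\psi^{(k)}\psi^{(k-1)}\bar\psi\psi$; only its symmetric (real) part reduces, and the surviving part genuinely carries $k$ derivatives on one factor, so it is not even absolutely convergent on the support of $\g_k$ --- which is precisely why the Wick computation (Proposition \ref{20140519:prop1}) is indispensable there.

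The genuine gap is in your uniform $L^p$ bound. The proposed absorption step --- bounding $\tint q_k(\psi_N)\geq-\eta\|\psi_N\|_{\dot H^k}^2-C$ and then claiming $\mathbb{E}\bigl[e^{p\eta\|\psi_N\|_{\dot H^k}^2}\bigr]<\infty$ uniformly in $N$ once $p\eta<\tfrac12$ --- fails: under $\g_k$ one has
\begin{equation*}
\mathbb{E}\Bigl[e^{c\|\psi_N\|_{\dot H^k}^2}\Bigr]
=\prod_{|n|\leq N}\Bigl(1-\tfrac{2c\,n^{2k}}{1+n^{2k}}\Bigr)^{-1}\,,
\end{equation*}
which grows geometrically in $N$ for every fixed $c>0$; equivalently, $\|\psi\|_{\dot H^k}=\infty$ $\g_k$-a.s.\ (Proposition \ref{TH:concentrato}), so no pointwise lower bound of that form, however small $\eta$ is, can produce an $N$-uniform estimate, and Vitali's theorem cannot be invoked. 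This is exactly where the paper departs from the naive scheme: after Lemma \ref{lemma:gentle} reduces matters to $\prod_m\chi_{R_m}\,e^{-f^k_N}$, the uniform $L^p$ bound (Proposition \ref{Prop:densita-in-Lp}) is obtained by Bourgain's truncation argument --- a sub-exponential tail for $f^k_N-f^k_{N^*}$ deduced from the Wick rate $M^{-1/2}$ plus hypercontractivity (Lemmas \ref{lem:5.1}--\ref{lemma:Cauchy-in-P}), combined with the concentration bound for $\sup_x|\psi_N^{(k)}\bar\psi_N|$ (Lemma \ref{lemma:e.net}, Appendix \ref{app-Gauss}) together with the deterministic control $\|\psi_N^{(k-1)}\psi_N\|_{L^1}\leq R_0\mathcal C$ on the cutoff support, splitting the distribution-function integral at the scales $\exp(\s^2R_0^6\mathcal C^2)$ and $\exp(|\s|R_0^3\mathcal C\sqrt N)$. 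Relatedly, the role you assign to $R_0\leq\sqrt{2/(9|\beta|)}$ is not the right one: that threshold enters through Proposition \ref{MainPropSecondPart} (control of the $H^{k-1}$ norm on the support of the cutoffs via Gagliardo--Nirenberg at the level $k=1$), while $p_0>1$ arises from the exponents in the tail estimates, such as $(4(2k+1)|\beta|R_0\mathcal C)^{-1}$, becoming larger than $1$ for $R_0$ small --- not from driving an absorption constant $\eta$ below $1/(2p)$.
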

\begin{remark}
The best range one should expect to obtain for $R_0$ is the same
of the global-wellpsedness problem, which at the moment is $R_0 < 2\sqrt{\frac{\pi}{|\beta|}}$.
The fact that we only get $R_0 \leq \sqrt{\frac{2}{9 | \beta |}}$ is, as we have observed 
above, a limitation of the Gagliardo--Nirenberg inequality approach.
Therefore one could presumably obtain the widest range $\left[ 0, 2\sqrt{\frac{\pi}{|\beta|}} \right)$ by using different techniques. 
We point out that this
could improve the absolute constant, while the behavior $\simeq 1 / \sqrt{ |\b| } $ seems to be a peculiar feature of the equation.
\end{remark}
As a consequence of Theorem \ref{thm:k_even}, we obtain that the measures
$\rho_{k,N}$ weakly converge, as $N\to\infty$, to the Gibbs measures $\rho_k$ on $L^2(\T)$:
$$
\rho_k(d\psi)=G_k(\psi)\gamma_{k}(d\psi)\,.
$$
Since each $G_k$ is supported on a set of positive measure w.r.t. $\gamma_{k}$, for every $k\geq2$, $\r_k$ is 
non trivial and absolutely continuous w.r.t. to $\gamma_{k}$. We choose the class observables associated to each of these Gibbs measure to be the functionals in $L^{\infty}_{\g_k}$.

\subsection{Strategy of the Proof}

The first part of our proof relies on an accurate inspection of the algebraic structures of the integrals of motion
of the DNLS equation.
This has been done in Section \ref{sec:1}. We use the Lenard-Magri scheme of integrability
for non local Poisson vertex algebras to find out the following general structure of the
integrals of motion:
$$
\int h_{2k}=\frac12\|\psi\|^2_{\dot H^k}+\frac i2\b(2k+1)\int \bar\psi^{(k)}\psi^{(k-1)}\bar\psi \psi+\mbox{ a remainder}\,,
\qquad
k\in\mb Z_+\,,
$$
where we consider as remainder all the terms that we can estimate with a certain power of the $H^{k-1}$ norm. Note that this quantity is finite in the support of the Gaussian measure $\g_k$.

In Section \ref{sec:stab} we show, under the $L^2$ smallness assumption,
that the Sobolev norm $H^{k}$ of the solutions of the DNLS equation \eqref{eq:dnls} stays bounded by a constant depending on the values of $\tint h_{2m}$, $m=1,\dots,k$, integrals of motion. 
Therefore, when we introduce the cut off functions $\chi$ in \eqref{eq:G_k}, we know that the $H^{s}$ norms, $s\leq k-1$, are bounded a.s. in the support of the Gibbs measure $\r_{k,N}$ uniformly in $N$. 
This allows us to prove in Section \ref{sec:CON} that all the remainder terms converge point-wise in the support of $\r_{k,N}$ as $N\to\infty$, thus also in measure w.r.t. $\g_k$. 

The terms $\int \bar\psi^{(k)}\psi^{(k-1)}\bar\psi \psi$ are estimated by the $H^k(\T)$ norm, which is not finite in the support of $\g_k$. Therefore they need to be treated separately. This is done by using a method outlined by Bourgain in \cite{Bou4} (see also \cite{BS}), which is reminiscent of the works in quantum field theory in the '70 \cite{guerra,Simon}. Successively this approach has been exploited by Tzvetkov and collaborators in \cite{TT10} for DNLS equation and in \cite{TzPTRF, TV13a} for the Benjamin Ono equation.

In Section \ref{sec:CON} we prove the convergence in $L^2(\g_k)$ of these terms as $N\to\infty$, employing essentially the Wick theorem. $L^2(\g_k)$ convergence yields $L^{p}(\g_k)$ ($p\in[1,\infty)$) convergence by a standard hyper-contractivity argument. This is enough to prove convergence in measure of the density. In Section \ref{sec:fin} we ultimate our strategy showing $L^p(\g_k)$ boundedness of the density $G_k$ for $p\in[1,\infty)$, provided that $\tint h_0$ is sufficiently small. Here we follow the nice ideas of \cite{TzPTRF}, making use of some helpful properties of the measures $\g_k$ reviewed in Appendix \ref{app-Gauss}. 

From the $L^p(\g_k)$ boundedness the convergence in $L^p(\g_k)$ (and so the weak convergence) of the density easily follows.

In the whole paper (except for Section \ref{sec:CON}) we are not concerned about 
the dynamics. However the measures that we construct are naturally expected to be invariant under the flow of DNLS. To prove this result, a careful analysis is required (as for instance in the case of the Benjamin-Ono equation \cite{TV13b, TV14, DTV14}) 
which we leave to a forthcoming work. 

\medskip

Throughout the paper we write $X \lesssim Y$ to denote that $X \leq C Y$ for some positive constant $C$ independent on $X,Y$.

\subsection*{Acknowledgments}
This work begun during the visit of the first and second author
to SISSA in Trieste. Then this research was supported through the programme
``Research in Pairs'' by the Mathematisches Forschungsinstitut Oberwolfach
in 2014. We thank these institutions for the kind hospitality. 
G.G. is supported by the Swiss National Science Foundation. R. L. is supported by the ERC grant 277778 and MINECO grant SEV-2011-0087 (Spain) and 
partially supported by the Italian Project FIRB 2012 ``Dispersive
dynamics: Fourier Analysis and Variational Methods''.
D.V. is partially supported by an NSFC 
``Research Fund for International Young Scientists'' grant.
We are grateful to B. Schlein for constant encouragement and many valuable comments.
%

\section{Structure of the integrals of motion of the DNLS equation}
\label{sec:1}
In this section we recall briefly the theory of Poisson vertex algebras aimed at the
study of the integrability properties of bi-Hamiltonian equations using the so-called Lenard-Magri
scheme (see \cite{Mag78,BDSK09,DSK13}).
We use this formalism to describe explicitly 
the structure of the integrals of motion of the DNLS equation which will
be used throughout the paper.

\subsection{Algebras of differential polynomials}\label{sec:1.1}
Let $\mc V$ be the algebra of differential polynomials in $\ell$ variables:
$\mc V=\mb C [u_i^{(n)}\, |\, i \in I,n \in \mb Z_+]$, where $I=\{1,\dots,\ell\}$.
(In fact, most of the results hold in the generality of algebras 
of differential functions, as defined in \cite{DSK13}.)
It is a differential algebra with derivation defined by
$\partial(u_i^{(n)})=u_i^{(n+1)}$.
We also let $\mc K$ be the field of fractions of $\mc V$
(it is still a differential algebra).

For $P\in\mc V^\ell$ we have the associated \emph{evolutionary vector field}
$$
X_P=\sum_{i\in I,n\in\mb Z_+}(\partial^nP_i)\frac{\partial}{\partial u_i^{(n)}}
\,.
$$
This makes $\mc V^\ell$ into a Lie algebra, with Lie bracket
$[X_P,X_Q]=X_{[P,Q]}$, given by
$$
[P,Q]=X_P(Q)-X_Q(P)
=D_Q(\partial)P-D_P(\partial)Q
\,,
$$
where $D_P(\partial)$ and $D_Q(\partial)$ denote the Frechet derivatives of $P,Q\in\mc V^\ell$
(we refer to \cite{BDSK09} for the definition of Frechet derivative).

For $f\in\mc V$ its \emph{variational derivative} is
$\frac{\delta f}{\delta u}=\left(\frac{\delta f}{\delta u_i}\right)_{i\in I}\in\mc V^{\oplus\ell}$,
where
\begin{equation}\label{20141031:eq2}
\frac{\delta f}{\delta u_i}
=\sum_{n\in\mb Z_+}(-\partial)^n\frac{\partial f}{\partial u_i^{(n)}}\,.
\end{equation}
Given an element $\xi\in\mc V^{\oplus\ell}$, the equation $\xi=\frac{\delta h}{\delta u}$ can be solved
for $h\in\mc V$ if and only if $D_\xi(\partial)$ is a self-adjoint operator:
$D_\xi(\partial)=D_\xi^*(\partial)$ (see \cite{BDSK09}).

For $f\in\mc V$, we denote by $\tint f=f+\partial\mc V$, where $\partial\mc V=\{\partial h\mid h\in\mc V\}$,
the image of $f$ in the quotient space $\mc V/\partial\mc V$, and we call it a \emph{local functional}.
Note that the integral symbol is motivated by the fact that $\mc V/\partial\mc V$
provides a universal space where integration by parts holds, namely
$$
\tint f\partial g=-\tint g\partial f\,,
\qquad
\text{for every }f,g\in\mc V
\,.
$$
It is possible to show that $\ker\frac{\delta}{\delta u}=\partial\mc V\oplus\mb C$.
Hence,
$\frac{\delta f}{\delta u}=\frac{\delta \tint f}{\delta u}=0$. 
Recall also that we have a non-degenerate pairing 
$(\cdot\,|\,\cdot):\,\mc V^\ell\times\mc V^{\ell}\to\mc V/\partial\mc V$
given by $(P|\xi)=\tint P\cdot\xi$ (see \cite{BDSK09}).

Given $f\in \mc V\backslash\mb C$,
we say that it has \emph{differential order} $n$,
and we write $\ord(f)=n$,
if $\frac{\partial f}{\partial u_i^{(n)}}\neq0$
for some $i\in I$ and $\frac{\partial f}{\partial u_j^{(m)}}=0$ for all $j\in I$ and $m> n$.
We also set the differential order of elements in $\mb C$ equal to $-\infty$.
Let us denote by $\mc V_n$ the space of polynomials of differential order at most $n$.
This gives an increasing sequence of subalgebras
$\mb C=\mc V_{-\infty}\subset\mc V_0\subset\mc V_1\subset\dots\subset\mc V$
such that $\partial\mc V_n\subset\mc V_{n+1}$.

We extend the notion of differential order to elements in $P\in\mc V^\ell$ as follows:
$$
\ord(P)=\max\{\ord(P_1),\dots,\ord(P_\ell)\}
\,.
$$

We also define two gradings on $\mc V$ in the following way.
First, we let $\deg$ be the usual polynomial grading of $\mc V$ defined by
$$
\deg u_i^{(n)}=1
\,,
\qquad
\text{for every }i\in I,n\in\mb Z_+\,.
$$
On the other hand we define the differential grading on $\mc V$, which
we denote $\dd$, by
$$
\dd u_i^{(n)}=n\,,
\qquad
\text{for every }i\in I,n\in\mb Z_+\,.
$$
This means that, given a monomial ($i_1,\dots,i_k\in I\,,n_1,\dots,n_k\in\mb Z_+$)
$$
f=u_{i_1}^{(n_1)}u_{i_2}^{(n_2)}\dots u_{i_k}^{(n_k)}\in\mc V
\,,
$$
we have
$$
\deg(f)=k\,,
\qquad
\dd(f)=n_1+\dots +n_k\,.
$$
Note that, for a homogeneous polynomial $f\in\mc V$, we have
\begin{equation}\label{20141102:eq4}
\deg(\partial f)=\deg(f)\,,
\qquad
\dd(\partial f)=\dd(f)+1
\,.
\end{equation}

\subsection{Rational matrix pseudodifferential operators and the association relation}\label{sec:1.2}

Consider the skewfield $\mc K((\partial^{-1}))$ of pseudodifferential operators 
with coefficients in $\mc K$,
and the subalgebra $\mc V[\partial]$ of differential operators on $\mc V$.

The algebra $\mc V(\partial)$ of \emph{rational} pseudodifferential operators
consists of pseudodifferential operators $L(\partial)\in\mc V((\partial^{-1}))$
which admit a fractional decomposition
$L(\partial)=A(\partial)B(\partial)^{-1}$,
for some $A(\partial),B(\partial)\in\mc V[\partial]$, $B(\partial)\neq0$.
The algebra of \emph{rational matrix pseudodifferential operators}
is, by definition,  $\Mat_{\ell\times\ell}\mc V(\partial)$ \cite{CDSK13}.

A matrix differential operator $B(\partial)\in\Mat_{\ell\times\ell}\mc V[\partial]$
is called \emph{non-degenerate}
if it is invertible in $\Mat_{\ell\times\ell}\mc K((\partial^{-1}))$.
Any matrix $H(\partial)\in\Mat_{\ell\times\ell}\mc V(\partial)$
can be written as a ratio of two matrix differential operators:
$H(\partial)=A(\partial) B^{-1}(\partial)$,
with $A(\partial),B(\partial)\in\Mat_{\ell\times\ell}\mc V[\partial]$,
and $B(\partial)$ non-degenerate.

Given $H(\partial)\in\Mat_{\ell\times\ell}\mc V(\partial)$,
we say that $\xi\in\mc V^{\oplus l}$ and $P\in\mc V^\ell$
are $H$-\emph{associated},
and denote it by
\begin{equation}\label{20130112:eq2}
\xi \ass{H}P\,,
\end{equation}
if there exist
a fractional decomposition $H=AB^{-1}$ 
with $A,B\in\Mat_{\ell\times\ell}\mc V[\partial]$ 
and $B$ non-degenerate,
and an element $F\in\mc K^\ell$,
such that $\xi=BF,\,P=AF$ \cite{DSK13}.

\subsection{Non-local Poisson structures}\label{sec:1.3}

A \emph{non-local Poisson vertex algebra}
is a differential algebra $\mc V$ endowed with a $\lambda$-bracket
$\{\cdot\,_\lambda\,\cdot\}:\,\mc V\times\mc V\to\mc V((\lambda^{-1}))$,
where $\mc V((\lambda^{-1}))$ denotes the space of Laurent series in $\lambda^{-1}$ with coefficients
in $\mc V$,
satisfying sesquilinearity ($f,g\in\mc V$):
$$
\{\partial f_\lambda g\}=-\lambda \{f_\lambda g\}\,,
\qquad
\{f_\lambda \partial g\}=(\lambda+\partial)\{f_\lambda g\}
\,,
$$
the Leibniz rule ($f,g,h\in\mc V$):
$$
\{f_\lambda gh\}=\{f_\lambda g\}h+\{f_\lambda h\}g
\,,
$$
skewsymmetry ($f,g\in\mc V$):
$$
\{f_\lambda g\}=-\{g_{-\lambda-\partial} f\}\,,
$$
admissibility ($f,g,h\in\mc V$):
$$
\{f_\lambda\{g_\mu h\}\}\in\mc V[[\lambda^{-1},\mu^{-1},(\lambda+\mu)^{-1}]][\lambda,\mu]
\,,
$$
and Jacobi identity ($f,g,h\in\mc V$):
$$
\{f_{\lambda}\{g_{\mu}h\}\}-\{g_{\mu}\{f_{\lambda}h\}\}
=\{\{f_\lambda g\}_{\lambda+\mu}h\}
\,.
$$
We refer to \cite{DSK13} for the details on the notation.

To a matrix pseudodifferential operator
$H=\big(H_{ij}(\partial)\big)_{i,j\in I}\in\Mat_{\ell\times\ell}\mc V((\partial^{-1}))$
we associate a $\lambda$-bracket,
$\{\cdot\,_\lambda\,\cdot\}_H:\,\mc V\times\mc V\to\mc V((\lambda^{-1}))$,
given by the following \emph{Master Formula} (see \cite{DSK13}):
\begin{equation}\label{20110922:eq1}
\{f_\lambda g\}_H
=
\sum_{\substack{i,j\in I \\ m,n\in\mb Z_+}} 
\frac{\partial g}{\partial u_j^{(n)}}
(\lambda+\partial)^n
H_{ji}(\lambda+\partial)
(-\lambda-\partial)^m
\frac{\partial f}{\partial u_i^{(m)}}
\,\in\mc V((\lambda^{-1}))
\,.
\end{equation}
%
%
For arbitrary $H$, it is proved in \cite{BDSK09} and \cite{DSK13}, that the
$\lambda$-bracket \eqref{20110922:eq1}
satisfies sesquilinearity and the Leibniz rule.
Furthermore, it has been shown that skewadjointness of $H$ is equivalent to the
skewsymmetry condition,
and that, if $H$ is a rational matrix pseudodifferential operator, then the
admissibility condition holds.
\begin{definition}\label{20140117:def}
A \emph{non-local Poisson structure} on $\mc V$
is  a skewadjoint rational matrix pseudodifferential operator $H$ with coefficients in $\mc V$
such that the corresponding $\lambda$-bracket \eqref{20110922:eq1}
satisfies Jacobi identity, namely, $\mc V$ endowed with the $\lambda$-bracket \eqref{20110922:eq1}
is a non-local Poisson vertex algebra.
\end{definition}

Two non-local Poisson structures $H,K\in\Mat_{\ell\times\ell}\mc V(\partial)$ on $\mc V$
are said to be \emph{compatible} if any of their linear combination (or, equivalently, their sum)
is a non-local Poisson structure.
In this case we say that $(H,K)$ form a \emph{bi-Poisson structure} on $\mc V$.
%

\subsection{Hamiltonian equations and integrability}\label{sec:1.4}

Let $H\in\Mat_{\ell\times\ell}\mc V(\partial)$ be a non-local Poisson structure.
An evolution equation on the variables $u=\big(u_i\big)_{i\in I}$,
\begin{equation}\label{20120124:eq5}
\frac{du}{dt}
=P\in\mc V^\ell\,,
\end{equation}
is called \emph{Hamiltonian} with respect to the non-local Poisson structure $H$
and the Hamiltonian functional $\tint h\in\mc V/\partial\mc V$
if (see Section \ref{sec:1.2})
$$
\frac{\delta h}{\delta u}\ass{H}P
\,.
$$

Equation \eqref{20120124:eq5} is called \emph{bi-Hamiltonian}
if there are two compatible non-local Poisson structures $H$ and $K$,
and two local functionals $\tint h_0,\tint h_1\in\mc V/\partial\mc V$,
such that
\begin{equation}\label{20140117:eq1}
\frac{\delta h_0}{\delta u}\ass{H}P
\qquad\text{ and }\qquad
\frac{\delta h_1}{\delta u}\ass{K}P
\,.
\end{equation}
An \emph{integral of motion} for the Hamiltonian equation \eqref{20120124:eq5}
is a local functional $\tint f\in\mc V/\partial\mc V$
which is constant in time, namely, such that $(P|\frac{\delta f}{\delta u})=0$.
The usual requirement for \emph{integrability}
is to have
sequences $\{\tint h_n\}_{n\in\mb Z_+}\subset\mc V/\partial\mc V$ 
and $\{P_n\}_{n\in\mb Z_+}\subset\mc V^\ell$,
starting with $\tint h_0=\tint h$ and $P_0=P$,
such that
\begin{enumerate}[(C1)]
\item
$\frac{\delta h_n}{\delta u}\ass{H}P_n$ for every $n\in\mb Z_+$,
\item
$[P_m,P_n]=0$ for all $m,n\in\mb Z_+$,
\item
$(P_m\,|\,\frac{\delta h_n}{\delta u})=0$ for all $m,n\in\mb Z_+$.
\item
The elements $P_n$ span an infinite dimensional subspace of $\mc V^\ell$.
\end{enumerate}
In this case, we have an \emph{integrable hierarchy} of Hamiltonian equations
$$
\frac{du}{dt_n} = P_n\,,\,\,n\in\mb Z_+\,.
$$
Elements $\tint h_n$'s are called \emph{higher Hamiltonians},
the $P_n$'s are called \emph{higher symmetries},
and the condition $(P_m\,|\,\frac{\delta h_n}{\delta u})=0$
says that $\tint h_m$ and $\tint h_n$ are \emph{in involution}.
Note that (C4) implies that element $\frac{\delta h_n}{\delta u}$
span an infinite dimensional subspace of $\mc V^\ell$.
The converse holds provided that either $H$ or $K$ is non-degenerate.

Suppose we have a bi-Hamiltonian equation \eqref{20120124:eq5},
associated to the compatible non-local Poisson structures $H,K$
and the Hamiltonian functionals $\tint h_0,\tint h_1$,
in the sense of equation \eqref{20140117:eq1}.
The \emph{Lenard-Magri scheme of integrability}
consists in finding
sequences $\{\tint h_n\}_{n\in\mb Z_+}\subset\mc V/\partial\mc V$ 
and $\{P_n\}_{n\in\mb Z_+}\subset\mc V^\ell$,
starting with $P_0=P$ and the given Hamiltonian functionals $\tint h_0,\tint h_1$,
satisfying the following recursive relations:
\begin{equation}\label{20130604:eq7}
\frac{\delta h_{n}}{\delta u}\ass{H}P_n\ass{K}\frac{\delta h_{n+1}}{\delta u}
\qquad
\text{ for all } n\in\mb Z_+
\,.
\end{equation}
In this case,
we have the corresponding bi-Hamiltonian hierarchy
\begin{equation}\label{20130604:eq6}
\frac{du}{dt_n}=P_n\,\in\mc V^\ell
\,\,,\,\,\,\,
n\in\mb Z_+
\,,
\end{equation}
all Hamiltonian functionals $\tint h_n,\,n\in\mb Z_+$,
are integrals of motion for all equations of the hierarchy,
and they are in involution with respect to both non-local Poisson structures $H$ and $K$,
and all commutators $[P_m,P_n]$ are zero, provided that one of the non-local Poisson 
structures $H$ or $K$ is local (see \cite[Sec.7.4]{DSK13}).
Hence, in this situation \eqref{20130604:eq6} is an integrable hierarchy
of compatible evolution equations,
provided that condition (C4) holds.

\subsection{A bi-Hamiltonian structure and integrability for the DNLS equation}\label{sec:1.5}

Let $\mc V=\mb C[a^{(n)},b^{(n)}\mid n\in\mb Z_+]$ be the algebra of differential
polynomials in two variables $a$ and $b$.
Sometimes we will also use the notation $a'=a^{(1)}$, $a''=a^{(2)}$ and so on
(and similarly for the $b^{(n)}$'s).

Let $H,K\in\Mat_{2\times2}\mc V((\partial^{-1}))$ be pseudodifferential operators
with coefficients in $\mc V$ defined as follows:
$$
H=
\begin{pmatrix}
\partial &0\\
0 & \partial
\end{pmatrix}
\quad
\text{and}
\quad
K=
\begin{pmatrix}
2\beta b\partial^{-1}\circ b &-1-2\beta b\partial^{-1}\circ a\\
1-2\beta a\partial^{-1}\circ b& 2\beta a\partial^{-1}\circ a
\end{pmatrix}\,,
$$
where $\beta\in\mb C$.
Note that $H(\partial)\in\Mat_{2\times2}\mc V[\partial]$ is in fact a differential operator.

The following result have been proved in \cite{DSK13}.
\begin{theorem}\phantomsection\label{thm:integrability_DNLS}
\begin{enumerate}[(a)]
\item
There exist $A(\partial),B(\partial)\in\Mat_{2\times 2}\mc V[\partial]$, with $B(\partial)$
non-degenerate, such that
$K=A(\partial)B(\partial)^{-1}$. Explicitly:
$$
A(\partial)=
\begin{pmatrix}
-\frac ba &-\frac1a\partial\circ a-2\beta ab\\
1 & 2\beta a^2
\end{pmatrix}
\quad
\text{and}
\quad
B(\partial)=
\begin{pmatrix}
1 & 0\\
\frac ba& \frac1a\partial\circ a
\end{pmatrix}\,.
$$
\item
$(H,K)$ is a bi-Poisson structure on $\mc V$.
\item
There exist infinite sequences $\{\tint h_n\}_{n\in\mb Z_+}\subset\mc V/\partial\mc V$
and $\{P_{n}\}_{n\in\mb Z_+}\subset\mc V^2$ such that the Lenard-Magri recursive relations
\eqref{20130604:eq7} hold.
\item
$\ord\left(\frac{\delta h_n}{\delta u}\right)=n$, for every $n\in\mb Z_+$.
In particular, since $H$ is non-degenerate,
all the elements $\tint h_n$'s and $P_n$'s are linearly independent
(see Section \ref{sec:1.4}).
\end{enumerate}
In conclusion, by the discussion in Section \ref{sec:1.4}, we get an integrable hierarchy of
bi-Hamiltonian equations \eqref{20130604:eq6} and all the Hamiltonian functionals
$\tint h_n$, $n\in\mb Z_+$, are integrals of motion for all equations of the hierarchy.
\end{theorem}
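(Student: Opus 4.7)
The proof splits naturally along the four claims (a)--(d), and I would address each in turn.

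\emph{Part (a)} is a direct verification that $K(\partial)\cdot B(\partial)=A(\partial)$ as matrix pseudodifferential operators. The key identity is
\[
\partial^{-1}\circ a\cdot\tfrac1a\partial\circ a \;=\; \partial^{-1}\circ\partial\circ a \;=\; a\,,
\]
which causes the $\partial^{-1}$ terms in the product $KB$ to collapse pairwise; matching the four resulting entries against the stated $A$ is then a one-line check per entry. Non-degeneracy of $B(\partial)$ follows because it is lower triangular with diagonal entries $1$ and $\tfrac1a\partial\circ a$, both invertible in $\mc K((\partial^{-1}))$.

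\emph{Part (b)} breaks into showing that $H$ and $K$ are each non-local Poisson structures and that they are compatible. The operator $H=\partial\II_2$ is the standard constant local Poisson structure, for which all the axioms (sesquilinearity, Leibniz, skewsymmetry, admissibility, Jacobi) are immediate from the Master Formula \eqref{20110922:eq1}. For $K$, skewadjointness is read off its explicit form, and admissibility is automatic by rationality. The substantive content is Jacobi for $K$: via the Master Formula one computes the iterated brackets $\{a_\l\{b_\mu a\}_K\}_K$ and their cyclic relatives and verifies that they cancel in $\mc V[[\l^{-1},\mu^{-1},(\l+\mu)^{-1}]][\l,\mu]$. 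Compatibility with $H$ reduces to the analogous Jacobi check for $H+K$; since $H$ has constant coefficients, the only new (mixed) terms involve a single $\partial$ hitting the coefficients of $K$ and are easily seen to vanish.

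\emph{Part (c)} is the heart of the matter, and proceeds inductively along the recursion \eqref{20130604:eq7}. Assume $\tint h_n$ and $P_{n-1}$ constructed. First one produces $P_n$ from $\tfrac{\delta h_n}{\delta u}\ass{H}P_n$: because $H=\partial\II_2$ is a genuine differential operator, this simply gives $P_n=\partial\tfrac{\delta h_n}{\delta u}\in\mc V^2$. Producing $\tint h_{n+1}$ requires finding $\xi\in\mc V^2$ with $D_\xi(\partial)$ self-adjoint and $P_n\ass{K}\xi$. Using the decomposition of part (a), this amounts to solving in $\mc K^2$
\[
A(\partial)F=P_n\,,\qquad B(\partial)F=\xi\,,
\]
which is elementary because $B$ is lower triangular with invertible diagonal; the delicate point is that the resulting $\xi$ actually lies in $\mc V^2$ and has self-adjoint Fr\'echet derivative. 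Both properties are obtained from the Lenard--Magri closure lemma for bi-Poisson structures, which exploits the Jacobi identity for $H+K$ together with the vanishing of $[P_{n-1},P_n]$; once $D_\xi$ is self-adjoint one invokes the discussion after \eqref{20141031:eq2} to write $\xi=\tfrac{\delta h_{n+1}}{\delta u}$. This closure argument is the main obstacle of the whole theorem.

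\emph{Part (d)} follows by induction. Assuming $\ord\bigl(\tfrac{\delta h_n}{\delta u}\bigr)=n$, the relation $\tfrac{\delta h_n}{\delta u}\ass{H}P_n$ with $H=\partial\II_2$ gives $\ord(P_n)=n+1$; feeding this into $P_n\ass{K}\tfrac{\delta h_{n+1}}{\delta u}$ through $K=AB^{-1}$, with both $A$ and $B$ of differential order at most one, forces $\ord\bigl(\tfrac{\delta h_{n+1}}{\delta u}\bigr)=n+1$. Linear independence of the $\tint h_n$'s and the $P_n$'s then follows automatically from the strict monotonicity of orders, establishing condition (C4) and thus integrability. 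All remaining work is bookkeeping, the only conceptual obstacle being the exactness/closure step in (c).
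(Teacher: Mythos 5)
First, note that the paper itself offers no proof of this theorem: it is imported verbatim from \cite{DSK13} (``The following result have been proved in \cite{DSK13}''), so your proposal is measured against the argument in that reference rather than against anything in the present text. Your part (a) is correct and complete: the product $KB$ does collapse entrywise via $\partial^{-1}\circ a\circ\frac1a\partial\circ a=\partial^{-1}\circ\partial\circ a=a$ (all four entries check out against the stated $A$), and lower-triangularity of $B$ with invertible diagonal gives non-degeneracy. The overall architecture you describe for (b)--(d) is also the right one.

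However, the two steps you yourself flag as ``the substantive content'' and ``the main obstacle'' are exactly where the proof lives, and you leave both as black boxes. For (b), the Jacobi identity for the non-local $K$ is not a routine cancellation: one must verify an identity in $\mc V[[\lambda^{-1},\mu^{-1},(\lambda+\mu)^{-1}]][\lambda,\mu]$, where the expansion of $\partial^{-1}$ and the admissibility bookkeeping are the whole point, and asserting that the mixed terms for $H+K$ ``are easily seen to vanish'' is not a proof. For (c), there is no off-the-shelf ``Lenard--Magri closure lemma'' that simultaneously hands you exactness ($D_\xi$ self-adjoint, hence $\xi=\frac{\delta h_{n+1}}{\delta u}$ for some $h_{n+1}\in\mc V$) and the fact that $\xi=BF$ lands in $\mc V^2$ rather than merely in $\mc K^2$: in \cite{DSK13} the indefinite continuation of the scheme requires verifying specific orthogonality conditions for this particular pair $(H,K)$, and membership in $\mc V^2$ is a genuine feature of the DNLS hierarchy that must be checked (compare Lemma \ref{lem:conti2}, where the intermediate elements $F_n$ really do have denominators). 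Finally, in (d) your order argument only yields $\ord\left(\frac{\delta h_{n+1}}{\delta u}\right)\leq n+1$; the equality (no drop of order) is precisely what the paper has to quote from Theorem 7.15(c) of \cite{DSK13} in the proof of Proposition \ref{20141102:prop1}, or else extract from the explicit leading terms of $\xi_n$ as in Section \ref{sec:1.6}. So the skeleton is faithful to the actual proof, but the theorem is not proved by it.
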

The first few elements in the series of the integrals of motion are
\begin{equation}\label{20141031:eq1}
\tint h_0
=\frac12 \tint \left(a^2+b^2\right)\,,
\qquad
\tint h_1=\tint \Big(ab'+\frac{\beta}{4}(a^2+b^2)^2\Big)
\,.
\end{equation}
The corresponding Hamiltonian equations, given by \eqref{20130604:eq6}, are
$$
\left\{
\begin{array}{l}
\displaystyle{
\frac{da}{dt_0}=a'
}\\
\displaystyle{
\frac{db}{dt_0}=b'
}
\end{array}
\right.
\,,
\qquad
\left\{
\begin{array}{l}
\displaystyle{
\frac{da}{dt_1}=b''+\beta\left(a(a^2+b^2)\right)'
}
\\
\displaystyle{
\frac{db}{dt_1}=-a''+\beta\left(b(a^2+b^2)\right)'
}
\end{array}\right.
\,.
$$
Let us write $\psi=a+ib$. Then, the first non-trivial equation of the hierarchy is
the DNLS equation:
$$
i\frac{d\psi}{dt_1}=\psi''+i\beta\left(\psi|\psi|^2\right)'
\,.
$$

Let us consider $\beta\in\mb C$ as a formal parameter, and let us naturally extend the notion of polynomial
degree and differential degree of $\mc V$ to the field of fractions $\mc K$ and to $\mc K^2$.
The following result is a consequence of the Lenard-Magri recursive relations
\eqref{20130604:eq7} and the
explicit form of the differential operators $A$ and $B$.
\begin{proposition}\label{20141102:prop1}
For every $n\in\mb Z_+$, the variational derivatives $\frac{\delta h_n}{\delta u}$'s are polynomials
in $\beta$ (with coefficients in $\mc V^2$) of order $n$.
Let us write
$$
\frac{\delta h_n}{\delta u}
=\sum_{k=0}^n\left(\frac{\delta h_n}{\delta u}\right)_{k}\beta^k
\,.
$$
Then, for every $0\leq k\leq n$, we have
$$
\ord\left(\frac{\delta h_n}{\delta u}\right)_{k}
=n-k
\,.
$$
Moreover, the components of
$\left(\frac{\delta h_n}{\delta u}\right)_{k}$ are homogeneous polynomials with respect to the polynomial
grading (respectively, differential grading) of degree:
$$
\deg\left(\frac{\delta h_n}{\delta u}\right)_{k}=2k+1
\qquad
\Big(\text{respectively, }
\dd\left(\frac{\delta h_n}{\delta u}\right)_{k}
=n-k
\Big)
\,.
$$
\end{proposition}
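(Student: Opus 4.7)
My plan is to prove the proposition by induction on $n$, tracking how each of the four invariants---the differential order $\ord$, the polynomial degree $\deg$, the differential grading $\dd$, and the $\beta$-degree---evolves under the Lenard--Magri recursion \eqref{20130604:eq7}. The base case $n=0$ follows directly from \eqref{20141031:eq1}: $\frac{\delta h_0}{\delta u} = (a,b)^T$ is $\beta$-independent with $\ord=\dd=0$ and $\deg=1$, matching $n-k=0$ and $2k+1=1$ at the only occurring level $k=0$.

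For the inductive step, I split the recursion $\frac{\delta h_n}{\delta u}\ass{H}P_n\ass{K}\frac{\delta h_{n+1}}{\delta u}$ into its two stages. The first is immediate: since $H = \partial\II$ is $\beta$-free and local, $P_n = \partial\frac{\delta h_n}{\delta u}$ has $\beta$-expansion $(P_n)_k = \partial \xi^{(n)}_k$, so by \eqref{20141102:eq4} the polynomial degree and $\beta$-degree are preserved while both $\dd$ and $\ord$ rise by one. The second stage is where the substance lies. Decomposing $K = K_0 + \beta K_1$ with $K_0 = \begin{pmatrix} 0 & -1 \\ 1 & 0 \end{pmatrix}$ a constant skew matrix and $K_1$ a matrix whose entries are $\pm 2c_1\partial^{-1}\circ c_2$ with $c_j\in\{a,b\}$, equating coefficients of $\beta^k$ in the identity ``$K\frac{\delta h_{n+1}}{\delta u} = P_n$'' (interpreted through the fractional decomposition $K = AB^{-1}$) gives
$$
K_0\xi^{(n+1)}_0 = (P_n)_0\,,\qquad K_0\xi^{(n+1)}_k + K_1\xi^{(n+1)}_{k-1} = (P_n)_k\,,\quad 1\leq k\leq n+1
$$
(with $(P_n)_{n+1}=0$). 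Now $K_0^{-1}$ preserves all three gradings, whereas $K_1$ raises $\deg$ by $2$ and lowers both $\dd$ and $\ord$ by $1$ (the antiderivative $\partial^{-1}$ subtracting a derivative in the rational pseudodifferential framework). These shifts are exactly what is needed to propagate the induction: if $\xi^{(n+1)}_{k-1}$ has the degrees predicted by the proposition, then both $(P_n)_k$ and $K_1\xi^{(n+1)}_{k-1}$ have $\deg=2k+1$ and $\dd=\ord=(n+1)-k$, and so does $\xi^{(n+1)}_k$.

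The main obstacle is that $K$ is only a \emph{rational} pseudodifferential operator, so the formal ``inversion of $K$'' above must be handled rigorously through the fractional decomposition $K = AB^{-1}$ of Theorem~\ref{thm:integrability_DNLS}(a): one computes $F = A^{-1}P_n$ and $\frac{\delta h_{n+1}}{\delta u} = BF$ in the algebra of rational matrix pseudodifferential operators and verifies that the resulting $\beta$-expansion truncates at degree $n+1$, ensuring membership in $\mathcal{V}^2[\beta]$. This truncation should follow from the special structure of $A_1 = \begin{pmatrix} 0 & -2ab \\ 0 & 2a^2 \end{pmatrix}$: its vanishing first column forces $(A_0^{-1}A_1)^j$ to have vanishing first column for $j\geq 1$, so that when the formal $\beta$-power series $A^{-1} = \sum_j(-\beta)^j A_0^{-1}(A_0^{-1}A_1)^j$ is applied to the $\beta$-graded $P_n$, only finitely many terms effectively contribute.
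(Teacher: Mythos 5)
Your overall plan---induction along the Lenard--Magri recursion \eqref{20130604:eq7}, expanding in powers of $\beta$ and tracking $\ord$, $\deg$ and $\dd$---is the same in spirit as the paper's proof, but the step on which everything hinges contains a genuine gap. You claim that the formal series $A^{-1}=\sum_{j\geq 0}(-\beta)^j (A_0^{-1}A_1)^j A_0^{-1}$ ``effectively truncates'' because $A_1$, and hence $(A_0^{-1}A_1)^j$ for $j\geq1$, has vanishing first column. That inference is not valid: a vanishing first column only means that $(A_0^{-1}A_1)^j$ ignores the first component of its argument, and since the second component of $A_0^{-1}P_n$ is in general nonzero, a priori infinitely many powers of $\beta$ survive, so polynomiality of $\frac{\delta h_{n+1}}{\delta u}$ in $\beta$ (the very first assertion of the proposition) is not established. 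What actually makes the expansion stop is a cancellation you neither state nor check: writing $\theta=-\tfrac1a\partial\circ a$, one has $A_0^{-1}=\begin{pmatrix}0&1\\ \theta^{-1}&\theta^{-1}\circ\frac ba\end{pmatrix}$, and the $(2,2)$ entry of $A_0^{-1}A_1$ is $\theta^{-1}\circ(-2ab)+\theta^{-1}\circ\frac ba\circ 2a^2=0$, so that $A_0^{-1}A_1=\begin{pmatrix}0&2a^2\\0&0\end{pmatrix}$ is nilpotent and the series terminates after the linear term. Without this (or an equivalent argument) your proof of $\beta$-polynomiality of degree $n+1$ is unsupported.

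There is also a framing problem that the paper's proof is designed to avoid: ``computing $F=A^{-1}P_n$ in the algebra of rational matrix pseudodifferential operators'' is not an available operation, since pseudodifferential operators (in particular $\partial^{-1}$, which enters through $\theta^{-1}$ and your $K_1$) do not act on $\mc K^2$; the association relation \eqref{20130112:eq2} exists precisely so that one never applies $\partial^{-1}$ to anything. The paper instead writes the relation $P_n\ass{K}\frac{\delta h_{n+1}}{\delta u}$ as the explicit \emph{differential} system \eqref{20141102:eq3} in a single scalar unknown $g\in\mc K$ (the second component of $F$), invokes Theorem \ref{thm:integrability_DNLS} for the existence of a solution, and then reads off from those purely local equations that $g$ has $\beta$-degree at most $n$, hence $\frac{\delta h_{n+1}}{\delta u}$ has $\beta$-degree at most $n+1$, while the \emph{exact} equality $\ord\big(\frac{\delta h_{n+1}}{\delta u}\big)_k=n+1-k$ is obtained from Theorem 7.15(c) of [DSK13] (giving $\ord(\frac{\delta h_{n+1}}{\delta u})=\ord(P_n)$) rather than from grading bookkeeping. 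Note that your shift-counting through $K_0^{-1}$ and $K_1$ can at best give upper bounds for $\ord$ (and does not by itself exclude cancellation of top-order terms), whereas the proposition asserts equalities; so even after repairing the truncation step you would need an additional input of this kind.
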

\begin{proof}
The fact that the variational derivatives $\frac{\delta h_n}{\delta u}$'s are polynomials
in $\beta$ (with coefficients in $\mc V^2$) of order $n$ is true for $n=0,1$ using equation
\eqref{20141102:eq2} and the definition of variational derivative \eqref{20141031:eq2}.
Let us assume that $\frac{\delta h_n}{\delta u}$ has order $n$ as a polynomial in $\beta$, and let us write 
explicitly the Lenard-Magri recursion relations \eqref{20130604:eq7}
using the formulas for the differential operators $A$ and $B$. We get the following system of equations
\begin{equation}\label{20141102:eq3}
\left\{
\begin{array}{l}
\displaystyle{
\partial(a g)
=-a\partial\frac{\delta h_n}{\delta a}-b\partial\frac{\delta h_n}{\delta b}
}
\\
\displaystyle{
\frac{\delta h_{n+1}}{\delta a}
=\partial\frac{\delta h_{n}}{\delta b}-2\beta a^2 g
}
\\
\displaystyle{
\frac{\delta h_{n+1}}{\delta b}
=-\partial\frac{\delta h_{n}}{\delta a}-2\beta ab g
\,,}
\end{array}
\right.
\end{equation}
where $g\in\mc K$ and $\frac{\delta h_{n+1}}{\delta a},\frac{\delta h_{n+1}}{\delta b}\in\mc V$
have to be determined (we know the system can be solved by Theorem \ref{thm:integrability_DNLS}(b)).
From the first equation in \eqref{20141102:eq3} and inductive assumption,
it follows that $g$ is a polynomial of order $n$ in $\beta$.
Then, by the second and third equation in  \eqref{20141102:eq3}, it follows that
$\frac{\delta h_{n+1}}{\delta u}$ is a polynomial of order $n+1$ in $\beta$.

Moreover, by Theorem 7.15(c) in \cite{DSK13}, we have that
$\ord\left(\frac{\delta h_{n+1}}{\delta u}\right)=\ord(P_n)$.
Recall that $P_n=H\left(\frac{\delta h_n}{\delta u}\right)$.
Hence, equating the orders of the coefficients of powers of $\beta$ we get
$$
\ord\left(\frac{\delta h_{n+1}}{\delta u}\right)_{k}
=\ord\left(H\left(\frac{\delta h_n}{\delta u}\right)_{k}\right)=n+1-k
\,.
$$
In the last equality we used the fact that $\partial\mc V_n\subset\mc V_{n+1}$.
The last part of the proposition follows by a simple inductive argument using equations 
\eqref{20141102:eq4} and \eqref{20141102:eq3}.
\end{proof}
\begin{remark}\label{20141102:rem1}
By the first part of Proposition \ref{20141102:prop1}, we can write $h_{n}$ as a polynomial in $\beta$.
By the second part, using the definition of variational derivative and equation
\eqref{20141102:eq4}, we get that
\begin{equation}\label{20141102:eq5}
h_{n}=\sum_{k=0}^{n} h_{n,k}\beta^k
\,,
\end{equation}
where $h_{n,k}\in\mc V$ are homogeneous differential polynomials such that
$\deg(h_{n,k})=2k+2$ and $\dd(h_{n,k})=n-k$.
\end{remark}

\subsection{Explicit structure of the integrals of motion of the DNLS equation}\label{sec:1.6}

Let us define a sequence $\{\xi_n\}_{n\in\mb Z_+}\subset\mc V^2$ as follows:
\begin{equation}\label{20141102:eq2}
\xi_0=\begin{pmatrix}
a\\b
\end{pmatrix}
\,,
\qquad
\xi_1=\begin{pmatrix}
b'+\beta a(a^2+b^2)\\
-a'+\beta b(a^2+b^2)
\end{pmatrix}
\,,
\end{equation}
and, for $n\geq1$, we set
$$
\xi_{2n}=(-1)^n
\begin{pmatrix}
a^{(2n)}-(2n+1)\beta(a^2+b^2)b^{(2n-1)}+r_{2n}^a
\\
b^{(2n)}+(2n+1)\beta(a^2+b^2)a^{(2n-1)}+r_{2n}^b
\end{pmatrix}
\,,
$$
$$
\xi_{2n+1}=(-1)^n\!
\begin{pmatrix}
\!b^{(2n+1)}
\!+
2\beta a(aa^{(2n)}\!+bb^{(2n)})
+
(2n\!+\!1)\beta(a^2+b^2)a^{(2n)}
\!+\!
r_{2n+1}^a
\\
\!-a^{(2n+1)}
\!+
2\beta b(aa^{(2n)}\!+bb^{(2n)})
+
(2n\!+\!1)\beta(a^2+b^2)b^{(2n)}
\!+\!
r_{2n+1}^b
\end{pmatrix}
\,,
$$
where $r_{2n}^x\in\mc V_{2n-2}$ and $r_{2n+1}^x\in\mc V_{2n-1}$, for $x=a$ or $b$.
\begin{lemma}\label{lem:conti1}
Let us denote $\xi_n=\begin{pmatrix}\xi_n^a\\ \xi_n^b\end{pmatrix}\in\mc V^2$, for every $n\in\mb Z_+$.
Then we have:
\begin{enumerate}[(a)]
\item
\begin{align*}
&a\xi_{2n+1}^b-b\xi_{2n+1}^a
-(-1)^{n+1}\partial\left(aa^{(2n)}+bb^{(2n)}\right)
\\
&-(-1)^n(2n+1)\beta\partial\left(
(a^2+b^2)(ab^{(2n-1)}-a^{(2n-1)}b)\right)\in\mc V_{2n-1}
\,.
\end{align*}
\item
\begin{align*}
&a\xi_{2n}^b-b\xi_{2n}^a
-(-1)^{n}\partial\left(ab^{(2n-1)}-a^{(2n-1)}b\right)
\\
&-(-1)^n(2n+1)\beta\partial\left(
(a^2+b^2)(aa^{(2n-2)}+bb^{(2n-2)})\right)\in\mc V_{2n-2}
\,.
\end{align*}
\item
\begin{align*}
&\xi_{2n+1}^a-\partial\xi_{2n}^b
-(-1)^{n}2\beta a\left(aa^{(2n)}+bb^{(2n)}\right)
\in\mc V_{2n-1}\,,
\\
&\xi_{2n+1}^b+\partial\xi_{2n}^a
-(-1)^{n}2\beta b\left(aa^{(2n)}+bb^{(2n)}\right)
\in\mc V_{2n-1}\,.
\end{align*}
\item
\begin{align*}
&\xi_{2n}^a-\partial\xi_{2n-1}^b
-(-1)^{n+1}2\beta a\left(ab^{(2n-1)}b-a^{(2n-1)}b\right)
\in\mc V_{2n-2}\,,
\\
&\xi_{2n}^b+\partial\xi_{2n-1}^a
-(-1)^{n+1}2\beta b\left(ab^{(2n-1)}-a^{(2n-1)}b\right)
\in\mc V_{2n-2}
\,.
\end{align*}
\end{enumerate}
\end{lemma}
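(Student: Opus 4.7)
The plan is to derive all four identities directly from the Lenard--Magri recursion \eqref{20141102:eq3} combined with the explicit expressions for $\xi_{2n}$ and $\xi_{2n+1}$ stated immediately before the lemma. Recall that the recursion reads
$$
\partial(ag_n)=-a\partial\xi_n^a-b\partial\xi_n^b,\qquad
\xi_{n+1}^a=\partial\xi_n^b-2\beta a^2g_n,\qquad
\xi_{n+1}^b=-\partial\xi_n^a-2\beta abg_n,
$$
where $g_n\in\mc K$ is the intermediate element provided by the fractional decomposition of $K$.

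I would begin with (c) and (d). For (c), substituting the explicit form of $\xi_{2n}$ into the defining relation for $g_{2n}$ and using $aa^{(2n+1)}+bb^{(2n+1)}=\partial(aa^{(2n)}+bb^{(2n)})-(a'a^{(2n)}+b'b^{(2n)})$, where the last parenthesis is $\partial$-exact of order $\leq 2n-1$ after repeated integration by parts, gives $ag_{2n}=(-1)^{n+1}(aa^{(2n)}+bb^{(2n)})$ modulo lower-order contributions. Inserting this into $\xi_{2n+1}^a=\partial\xi_{2n}^b-2\beta a^2g_{2n}$ yields the first identity of (c); the second is analogous. Claim (d) is obtained by the same procedure starting from $\xi_{2n-1}$, whose leading-order terms force $ag_{2n-1}=(-1)^n(ab^{(2n-1)}-a^{(2n-1)}b)$ modulo lower order.

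For (a), I would use (c) to substitute for $\xi_{2n+1}^a$ and $\xi_{2n+1}^b$ inside $a\xi_{2n+1}^b-b\xi_{2n+1}^a$: the two cross contributions $(-1)^n2\beta ab(aa^{(2n)}+bb^{(2n)})$ cancel, reducing the task to the computation of $-(a\partial\xi_{2n}^a+b\partial\xi_{2n}^b)$ modulo $\mc V_{2n-1}$. Applying the Leibniz identity $a\partial\xi+b\partial\eta=\partial(a\xi+b\eta)-(a'\xi+b'\eta)$ and inserting the explicit form of $\xi_{2n}$ gives
$$
a\xi_{2n}^a+b\xi_{2n}^b=(-1)^n\bigl[aa^{(2n)}+bb^{(2n)}-(2n+1)\beta(a^2+b^2)(ab^{(2n-1)}-a^{(2n-1)}b)\bigr]+\mc V_{2n-2},
$$
whose total derivative is exactly the pair of $\partial$-terms appearing in (a), modulo $\partial\mc V_{2n-2}\subset\mc V_{2n-1}$. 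The remaining piece $a'\xi_{2n}^a+b'\xi_{2n}^b$ contributes a top-order part $(-1)^n(a'a^{(2n)}+b'b^{(2n)})$ which, as noted above, is $\partial$-exact of order $\leq 2n-1$ and is absorbed. Claim (b) follows by the same strategy, using (d) in place of (c) and the explicit form of $\xi_{2n-1}$.

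The main obstacle is bookkeeping the many lower-order contributions. Proposition~\ref{20141102:prop1} makes this manageable: only the $\beta^0$ and $\beta^1$ components of $\xi_{2n+1}$ (respectively $\xi_{2n}$) carry terms of differential order $\geq 2n-1$ (respectively $\geq 2n-2$), so the $\beta^k$ pieces with $k\geq 2$ lie automatically in $\mc V_{2n-1}$ and need not be inspected in detail.
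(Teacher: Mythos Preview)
Your argument is circular. The recursion \eqref{20141102:eq3} is the Lenard--Magri scheme for the \emph{variational derivatives} $\frac{\delta h_n}{\delta u}$, not for the $\xi_n$. In the paper the sequence $\{\xi_n\}$ is introduced by the explicit formulas written just before the lemma (with unspecified remainders $r_{2n}^x\in\mc V_{2n-2}$, $r_{2n+1}^x\in\mc V_{2n-1}$); nothing says that these $\xi_n$ satisfy $\xi_{n+1}^a=\partial\xi_n^b-2\beta a^2 g_n$, etc. In fact, the logical flow is the reverse of what you assume: Lemma~\ref{lem:conti1} feeds into Lemma~\ref{lem:conti2}, which in turn is used in Proposition~\ref{20141101:prop1} to show that $\frac{\delta h_n}{\delta u}-\xi_n\in\mc V_{n-2}^2$. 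So invoking the recursion for $\xi_n$ in the proof of Lemma~\ref{lem:conti1} presupposes precisely the conclusion that this chain of results is meant to establish.

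The paper's one-word proof (``Straightforward'') is meant literally: one substitutes the explicit definitions and simplifies. For instance, for the first line of (c),
\[
\xi_{2n+1}^a-\partial\xi_{2n}^b
=(-1)^n\Big[2\beta a(aa^{(2n)}+bb^{(2n)})
+r_{2n+1}^a
-(2n+1)\beta(a^2+b^2)'a^{(2n-1)}
-\partial r_{2n}^b\Big],
\]
and the last three terms are manifestly in $\mc V_{2n-1}$ (using $\partial\mc V_{2n-2}\subset\mc V_{2n-1}$). The other identities follow in the same way. Your reduction of (a) to (c) via the cancellation of the $2\beta ab(\cdots)$ terms is valid once (c) is known, but since (a) is equally easy by direct substitution there is no real saving; the essential point is that (c) and (d) must come from the definitions, not from the recursion.
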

\begin{proof}
Straightforward.
\end{proof}
Let us also define a sequence $\{P_n\}_{n\in\mb Z_+}\subset\mc V^2$ as follows:
$$
P_n=H\xi_n=
\begin{pmatrix}
\partial\xi_n^a
\\
\partial\xi_n^b
\end{pmatrix}
\,.
$$
\begin{lemma}\label{lem:conti2}
For every $n\in\mb Z_+$, there exists $F_{n}\in\mc K^2$ such that:
\begin{enumerate}[(a)]
\item
$P_{2n}-AF_{2n}\in\mc V_{2n-1}^2$ and $\xi_{2n+1}-BF_{2n}\in\mc V_{2n-1}^2$;
\item
$P_{2n+1}-AF_{2n+1}\in\mc V_{2n}^2$ and $\xi_{2n+2}-BF_{2n+1}\in\mc V_{2n}^2$.
\end{enumerate}
\end{lemma}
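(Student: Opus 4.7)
The plan is to construct $F_n \in \mathcal{K}^2$ explicitly in each case, then verify the two approximate association conditions by direct computation, with all needed cancellations being precisely the content of the four parts of Lemma \ref{lem:conti1}. Note that since $B = \begin{pmatrix} 1 & 0 \\ b/a & \frac{1}{a}\partial\circ a \end{pmatrix}$ has a lower-triangular structure, the equation $\xi_{n+1} = B F_n$ forces $F_n^1 = \xi_{n+1}^a$ and $\partial(a F_n^2) = a\xi_{n+1}^b - b\xi_{n+1}^a$. The sole obstruction to solving this in $\mathcal{K}$ is that the right-hand side of the latter must lie in $\partial \mathcal{V}$ modulo the error term $\mathcal{V}_{2n-1}^2$ allowed by the lemma, and this is exactly what Lemma \ref{lem:conti1}(a) and (b) guarantee.

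For case (a), I would therefore set $F_{2n}^1 := \xi_{2n+1}^a$ and choose
$$
a F_{2n}^2 := (-1)^{n+1}\bigl(a a^{(2n)} + b b^{(2n)}\bigr) + (-1)^n (2n+1)\beta (a^2+b^2)\bigl(a b^{(2n-1)} - a^{(2n-1)} b\bigr),
$$
which by Lemma \ref{lem:conti1}(a) makes $\partial(a F_{2n}^2) - (a\xi_{2n+1}^b - b\xi_{2n+1}^a)$ lie in $\mathcal{V}_{2n-1}$, yielding $\xi_{2n+1} - BF_{2n} \in \mathcal{V}_{2n-1}^2$. For the second condition, I would expand
$$
(A F_{2n})^a = -\tfrac{b}{a}\xi_{2n+1}^a - \tfrac{1}{a}\partial(a F_{2n}^2) - 2\beta a b F_{2n}^2, \qquad (A F_{2n})^b = \xi_{2n+1}^a + 2\beta a^2 F_{2n}^2,
$$
substitute the explicit expressions for $\xi_{2n+1}$ and $F_{2n}^2$, and compare with $P_{2n} = (\partial \xi_{2n}^a, \partial \xi_{2n}^b)^T$. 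The critical step is the cancellation at differential order $2n$ between the term $(-1)^n 2\beta a(a a^{(2n)} + b b^{(2n)})$ appearing in $\xi_{2n+1}^a$ and the matching $2\beta a^2 F_{2n}^2$ contribution; the coefficient $(-1)^{n+1}$ in the definition of $a F_{2n}^2$ is chosen precisely to produce this cancellation. What remains then agrees with $\partial \xi_{2n}$ modulo $\mathcal{V}_{2n-1}^2$ exactly by Lemma \ref{lem:conti1}(c).

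Case (b) is analogous: one sets $F_{2n+1}^1 := \xi_{2n+2}^a$, uses Lemma \ref{lem:conti1}(b) to define $a F_{2n+1}^2$ as an explicit differential polynomial of order $2n+1$, and then verifies the condition $P_{2n+1} - A F_{2n+1} \in \mathcal{V}_{2n}^2$ by invoking Lemma \ref{lem:conti1}(d) after the analogous expansion. The main obstacle throughout is the careful bookkeeping of signs, powers of $\beta$, and differential orders, together with verifying that the $1/a$ poles introduced by the rational form of $A$, $B$, and $F_n$ cancel out to leave a genuine element of $\mathcal{V}_{2n-1}^2$ (respectively $\mathcal{V}_{2n}^2$) rather than merely an element of $\mathcal{K}_{2n-1}^2$. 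This cancellation of poles is ultimately guaranteed by the exact integrability result of Theorem \ref{thm:integrability_DNLS}(b)--(c): the true variational derivatives $\frac{\delta h_n}{\delta u}$ admit such $F_n$'s, and the $\xi_n$'s defined in \S\ref{sec:1.6} coincide with them up to the admissible error orders, so the rational parts must match.
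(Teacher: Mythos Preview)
Your construction coincides with the paper's: it sets $F_n = \begin{pmatrix} \xi_{n+1}^a \\ f_n + g_n \end{pmatrix}$ with $f_n$ exactly your $F_n^2$ (plus an additional lower-order freedom $g_n \in \mathcal{V}_{n-2}$), and then declares the verification via Lemma~\ref{lem:conti1}(a),(c) for part (a) and Lemma~\ref{lem:conti1}(b),(d) for part (b) to be ``straightforward''. The pole-cancellation concern you raise at the end is not addressed by the paper's proof either; note, however, that your appeal to Theorem~\ref{thm:integrability_DNLS}(b)--(c) to resolve it is circular, since Lemma~\ref{lem:conti2} is precisely what feeds into Proposition~\ref{20141101:prop1} establishing that $\xi_n$ agrees with $\frac{\delta h_n}{\delta u}$ modulo lower order.
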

\begin{proof}
For every $n\in\mb Z_+$, let us consider
$$
F_n=
\begin{pmatrix}
\xi_{n+1}^a
\\
f_n+g_n
\end{pmatrix}
\in\mc K^2\,,
$$
where
\begin{align*}
&f_{2n}
=(-1)^{n+1}\frac{aa^{(2n)}+bb^{(2n)}}{a}
+(-1)^n(2n+1)\beta\frac{(a^2+b^2)(ab^{(2n-1)}-a^{(2n-1)}b)}{a}
\,,
\\
&f_{2n+1}
=(-1)^{n+1}\frac{ab^{(2n+1)}\!-a^{(2n+1)}b}{a}
+\!(-1)^{n+1}(2n+3)\beta\frac{(a^2+b^2)(aa^{(2n)}\!+bb^{(2n)})}{a}
\,,
\end{align*}
and $g_n\in\mc V_{n-2}$.
Then, using the definition of the differential operators $A$ and $B$
given by Theorem \ref{thm:integrability_DNLS}(a),
it is straightforward to check that part (a) follows from Lemma \ref{lem:conti1}(a) and (c),
while part (b) follows from Lemma \ref{lem:conti1}(b) and (d).
\end{proof}
\begin{proposition}\label{20141101:prop1}
Let $\{\tint h_n\}_{n\in\mb Z_+}\subset\mc V/\partial\mc V$ be the sequence in
Theorem \ref{thm:integrability_DNLS}.
Then, for every $n\in\mb Z_+$, we have
$$
\frac{\delta h_n}{\delta u}-\xi_n\in\mc V_{n-2}^2
\,.
$$
\end{proposition}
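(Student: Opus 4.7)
The plan is to proceed by strong induction on $n\in\mb Z_+$. For the base cases $n\in\{0,1\}$, I would directly compute $\frac{\delta h_0}{\delta u}$ and $\frac{\delta h_1}{\delta u}$ from \eqref{20141031:eq1} and observe that they coincide exactly with $\xi_0$ and $\xi_1$ from \eqref{20141102:eq2}, so the difference is zero (which sits in $\mc V_{n-2}^2$ under the natural convention $\mc V_{-1}=\mc V_{-2}=\mb C$).

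For the inductive step, assuming the statement for $n$, I would combine three ingredients. First, the Lenard--Magri recursion \eqref{20130604:eq7} produces $\tilde F_n \in \mc K^2$ with $H\tfrac{\delta h_n}{\delta u} = A\tilde F_n$ and $\tfrac{\delta h_{n+1}}{\delta u} = B\tilde F_n$. Second, since $H=\partial\,\II$ raises differential order by exactly one, the inductive hypothesis gives $H\tfrac{\delta h_n}{\delta u}-H\xi_n \in \mc V_{n-1}^2$. Third, Lemma \ref{lem:conti2} supplies $F_n\in \mc K^2$ with both $H\xi_n - AF_n \in \mc V_{n-1}^2$ and $\xi_{n+1}-BF_n \in \mc V_{n-1}^2$. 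Chaining these three yields $A(\tilde F_n - F_n) \in \mc V_{n-1}^2$, and the induction closes as soon as I can upgrade this to the companion estimate $B(\tilde F_n-F_n) \in \mc V_{n-1}^2$, because then $\tfrac{\delta h_{n+1}}{\delta u}-\xi_{n+1} = B(\tilde F_n-F_n)+(BF_n-\xi_{n+1}) \in \mc V_{n-1}^2 = \mc V_{(n+1)-2}^2$.

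The main algebraic observation I would use is the identity
\[
B\phi = JA\phi - 2\beta a\phi_2\begin{pmatrix}a\\b\end{pmatrix},\qquad J=\begin{pmatrix}0 & 1\\ -1 & 0\end{pmatrix},
\]
valid for every $\phi=(\phi_1,\phi_2)^T\in\mc K^2$, obtained by plugging the explicit formulas for $A(\partial)$ and $B(\partial)$ from Theorem \ref{thm:integrability_DNLS}(a) and simplifying. This reduces the $B$-bound to checking that $a(\tilde F_{n,2}-F_{n,2})\in\mc V_{n-1}$. Both $aF_{n,2}$ and $a\tilde F_{n,2}$ actually lie in $\mc V$ (not merely $\mc K$): the former by the explicit expression $F_n=(\xi_{n+1}^a, f_n+g_n)^T$ from the proof of Lemma \ref{lem:conti2}; the latter because the first component of $A\tilde F_n=H\tfrac{\delta h_n}{\delta u}$ unwinds to $\partial(a\tilde F_{n,2}) = -a\partial\tfrac{\delta h_n}{\delta a}-b\partial\tfrac{\delta h_n}{\delta b}\in\mc V$. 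Subtracting the analogous first-component identity for $F_n$ (obtained by unpacking the first row of $H\xi_n-AF_n \in \mc V_{n-1}^2$) and invoking the inductive hypothesis then gives $\partial\bigl(a(\tilde F_{n,2}-F_{n,2})\bigr)\in \mc V_{n-1}$. Since $\partial$ raises differential order by exactly one on $\mc V$, this forces $a(\tilde F_{n,2}-F_{n,2})\in \mc V_{n-2}\subset\mc V_{n-1}$, completing the induction.

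The hard part will be precisely this transfer from the $A$-bound to the $B$-bound: the non-locality of $K=AB^{-1}$ means that control of $B\phi$ from control of $A\phi$ is not automatic in $\mc K^2$. The identity above circumvents the obstruction by making $B-JA$ a purely algebraic (order-zero) operator whose effect is concentrated on the second component, and the subsequent passage from $\mc K$ to $\mc V$ relies crucially on the explicit shape of $F_n$ given in Lemma \ref{lem:conti2} and on the structural constraints on $\tfrac{\delta h_n}{\delta u}$ enforced by Proposition \ref{20141102:prop1}.
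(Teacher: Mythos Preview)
Your argument is correct and follows the same overall strategy as the paper's proof—induction on $n$ with base cases $n=0,1$ and inductive step driven by Lemma~\ref{lem:conti2}—but you carry out rigorously what the paper only sketches. The paper simply asserts that, by Theorem~\ref{thm:integrability_DNLS}(d), it suffices to check that the $\xi_n$ satisfy the Lenard--Magri recursion up to lower order and then cites Lemma~\ref{lem:conti2}; it never explains how to pass from the $A$-bound $A(\tilde F_n-F_n)\in\mc V_{n-1}^2$ to the $B$-bound $B(\tilde F_n-F_n)\in\mc V_{n-1}^2$, which is exactly the step you isolate as the ``hard part.'' Your identity $B\phi=JA\phi-2\beta a\phi_2\,(a,b)^T$ is a clean way to make this transfer explicit and does not appear in the paper.

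One point you use without proof deserves a sentence of justification: you infer $a\tilde F_{n,2}\in\mc V$ from $\partial(a\tilde F_{n,2})\in\mc V$, and later $a(\tilde F_{n,2}-F_{n,2})\in\mc V_{n-2}$ from $\partial\bigl(a(\tilde F_{n,2}-F_{n,2})\bigr)\in\mc V_{n-1}$. Both rely on the fact that for $f\in\mc K$, $\partial f\in\mc V$ forces $f\in\mc V$. This is true: writing $f=p/q$ with $p,q\in\mc V$ coprime, $\partial f\in\mc V$ gives $q\mid p\,\partial q$, hence $q\mid\partial q$; but $\deg(\partial q)=\deg(q)$ would then force $\partial q=cq$ for some $c\in\mb C$, which is impossible for non-constant $q$ since $\ord(\partial q)=\ord(q)+1$. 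Once $a(\tilde F_{n,2}-F_{n,2})\in\mc V$ is secured, the order drop under $\partial$ is immediate. With this lemma stated, your proof is complete and in fact more transparent than the paper's.
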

\begin{proof}
By equation \eqref{20141031:eq1} and the definition of variational derivative \eqref{20141031:eq2}
it follows that $\frac{\delta h_n}{\delta u}=\xi_n$, for $n=0,1$.
Hence, by Theorem \ref{thm:integrability_DNLS}(d),
in order to prove the proposition we need to show that the sequence
$\{\xi_n\}_{n\in\mb Z_+}\subset\mc V^2$
satisfies the Lenard-Magri recursive relations \eqref{20130604:eq7} up to elements in $\mc V_{n-2}$.
This follows by definition of the association relation \eqref{20130112:eq2}, the definition of the sequence
$\{P_n\}_{n\in\mb Z_+}\subset\mc V^2$ and Lemma \ref{lem:conti2}(a) and (b).
\end{proof}
\begin{corollary}\label{20141102:cor1}
For every $n\in\mb Z_+$ we can assume that the conserved densities $h_{2n}\in\mc V$, defined by 
Theorem \ref{thm:integrability_DNLS}, have the form:
$$
h_{2n}=\frac{1}{2}\left((a^{(n)})^2+(b^{(n)})^2\right)
+(2n+1)\beta\left(a^2+b^2\right)a^{(n-1)}b^{(n)}
+R_{2n}\,,
$$
where $R_{2n}\in\mc V_{n-1}$.
\end{corollary}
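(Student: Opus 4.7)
The plan is to reconstruct $h_{2n}$ modulo $\partial\mc V$ from its variational derivative, combining Proposition \ref{20141101:prop1} with a standard order-reduction principle. First, propose the ansatz
$$
h'_{2n}
:=
\frac12\bigl((a^{(n)})^2+(b^{(n)})^2\bigr)
+(2n+1)\b(a^2+b^2)a^{(n-1)}b^{(n)}
$$
and compute $\frac{\delta h'_{2n}}{\delta u}$ directly from \eqref{20141031:eq2}. The quadratic piece contributes $\bigl((-1)^n a^{(2n)},(-1)^n b^{(2n)}\bigr)$. For the cubic correction, only differentiation with respect to the highest-order arguments $a^{(n-1)}$ and $b^{(n)}$ produces contributions of differential order $\geq 2n-1$: applying Leibniz inside $(-\partial)^{n-1}[(a^2+b^2)b^{(n)}]$ and $(-\partial)^{n}[(a^2+b^2)a^{(n-1)}]$ yields the leading terms $(-1)^{n-1}(2n+1)\b (a^2+b^2)b^{(2n-1)}$ and $(-1)^n (2n+1)\b (a^2+b^2)a^{(2n-1)}$, respectively. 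Using $(-1)^{n-1}=(-1)^{n+1}$ and comparing with the explicit form of $\xi_{2n}$ from Section \ref{sec:1.6} shows
$$
\frac{\delta h'_{2n}}{\delta u}-\xi_{2n}\in\mc V_{2n-2}^{2}\,.
$$

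Combining this with Proposition \ref{20141101:prop1} applied at index $2n$ yields $\frac{\delta}{\delta u}(h_{2n}-h'_{2n})\in\mc V_{2n-2}^{2}$. At this stage I invoke the standard order-reduction principle of variational calculus: any local functional whose variational derivative has differential order $\leq m$ admits a representative (modulo $\partial\mc V$) of differential order $\leq\lceil m/2\rceil$. This can be established via the Volterra-type homotopy formula combined with iterated integration by parts -- which relies on the identities $\tint u^{(2p)}\cdot u=(-1)^p\tint(u^{(p)})^2$ and $\tint u^{(2p+1)}\cdot u=0$ -- or by a direct descent argument. Applied with $m=2n-2$, this produces a representative $R_{2n}\in\mc V_{n-1}$ of $h_{2n}-h'_{2n}$ modulo $\partial\mc V$, which is the claimed decomposition.

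The main technical obstacle is the bookkeeping in the first step: one must confirm that all sub-leading contributions to $\frac{\delta h'_{2n}}{\delta u}$, in particular the piece $2(2n+1)\b a\,a^{(n-1)}b^{(n)}$ coming from $\partial/\partial a$ acting on $a^2+b^2$, together with the lower Leibniz terms arising in $\partial^{n-1}[(a^2+b^2)b^{(n)}]$, indeed lie in $\mc V_{2n-2}$, and that the signs line up correctly with those of $\xi_{2n}$. Once these checks are in place, the order-reduction step is routine, and the corollary follows.
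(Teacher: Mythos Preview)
Your approach is essentially the same as the paper's, which merely cites Proposition \ref{20141101:prop1}, the definition of the variational derivative, the inclusion $\partial\mc V_k\subset\mc V_{k+1}$, and the fact that total derivatives have vanishing variational derivative. You have spelled out what this terse sentence means: compute $\frac{\delta h'_{2n}}{\delta u}$ for the explicit ansatz, match it with $\xi_{2n}$ modulo $\mc V_{2n-2}^{2}$, and then invoke the order-reduction principle to pass from a bound on the order of the variational derivative to a bound on the order of a representative density. This is exactly the intended argument.

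One small caveat concerns the boundary case $n=1$. There $a^{(n-1)}=a$, so the contribution $2(2n+1)\beta\, a\cdot a^{(n-1)}b^{(n)}$ that you flag as sub-leading actually has differential order $n=1=2n-1$, not $\leq 2n-2$; in other words, differentiating $(a^2+b^2)$ with respect to $a$ and differentiating the factor $a^{(n-1)}$ coincide, and your leading-term identification $\frac{\delta h'_{2n}}{\delta a}\equiv \xi_{2n}^a\pmod{\mc V_{2n-2}}$ no longer holds on the nose. This is a degenerate edge case (and the formula in the statement already looks formal at $n=0$); for $n\geq2$, where the corollary is actually used later in the paper, your bookkeeping and the subsequent order-reduction step go through without change.
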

\begin{proof}
It follows by Proposition \ref{20141101:prop1} and the definition of the variational derivative 
\eqref{20141031:eq2}, using the fact that $\partial\mc V_k\subset\mc V_{k+1}$,
for every $k\in\mb Z_+$, and that the variational derivative of a total derivative is zero.
\end{proof}
%

\subsection{Changing variables}\label{sec:1.7}

Let $\mc V^{\mb C}$ be the algebra of differential polynomials in two variables $\psi$ and $\bar\psi$.
We have a differential algebra isomorphism $\mc V\stackrel{\sim}{\rightarrow}\mc V^{\mb C}$
given on generators by
$$
a=\frac{\psi+\bar\psi}{2}\,,
\qquad
b=\frac{\psi-\bar\psi}{2i}
\,.
$$
Clearly,the inverse map is given by $\psi=a+ib$ and $\bar\psi=a-ib$. (In the usual analytical language,
if $a$ and $b$ are real functions, then we want to consider them as
the real and imaginary parts of the function $\psi$.)

The differential order, the polynomial grading and the differential
grading of $\mc V$ and $\mc V^{\mb C}$ are compatible under this isomorphism.
Hence, all the results in the Section \ref{sec:1.5} 
hold true for
$\frac{\delta h_n}{\delta u}\in\mc (V^{\mb C})^2$
(by an abuse of notation we are denoting with the same symbol an element in $\mc V$ and its image in
$\mc V^{\mb C}$)
Moreover, we can restate Corollary \ref{20141102:cor1} as follows.
\begin{corollary}\label{20141102:cor2}
For every $n\in\mb Z_+$ we can assume that the conserved densities $h_{2n}\in\mc V^\mb C$, defined by 
Theorem \ref{thm:integrability_DNLS}, have the form:
$$
h_{2n}=\frac{1}{2}\psi^{(n)}\bar\psi^{(n)}
+\frac{(2n+1)i}{2}\beta\bar\psi^{(n)}\psi^{(n-1)}\bar\psi\psi
+R_{2n}\,,
$$
where $R_{2n}\in\mc V_{n-1}^\mb C$.
\end{corollary}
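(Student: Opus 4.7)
The proof will transport the real-variable statement of Corollary \ref{20141102:cor1} through the isomorphism $\mc V\stackrel{\sim}{\to}\mc V^\mb C$ of Section \ref{sec:1.7}, exploiting the freedom to modify the density $h_{2n}$ by a total derivative: only its class in $\mc V^\mb C/\partial\mc V^\mb C$ is fixed by the hierarchy, so we may replace it by any representative differing by an element of $\partial\mc V^\mb C$.

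Substituting $a=(\psi+\bar\psi)/2$ and $b=(\psi-\bar\psi)/(2i)$ one obtains immediately $(a^{(n)})^2+(b^{(n)})^2=\psi^{(n)}\bar\psi^{(n)}$ and $a^2+b^2=\psi\bar\psi$, so the quadratic part of $h_{2n}$ in Corollary \ref{20141102:cor1} transforms to $\tfrac{1}{2}\psi^{(n)}\bar\psi^{(n)}$ without any adjustment. For the cubic correction one expands
$$
a^{(n-1)}b^{(n)}=\frac{1}{4i}\bigl(\psi^{(n-1)}+\bar\psi^{(n-1)}\bigr)\bigl(\psi^{(n)}-\bar\psi^{(n)}\bigr)\,.
$$
The two diagonal contributions $\psi^{(n-1)}\psi^{(n)}=\tfrac12\partial((\psi^{(n-1)})^2)$ and its conjugate become, once multiplied by $a^2+b^2=\psi\bar\psi$, total derivatives plus a correction in which $\partial$ has fallen on $\psi\bar\psi$; the latter has order at most $n-1$ and is absorbed into the new $R_{2n}\in\mc V_{n-1}^\mb C$.

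What remains from the expansion is the cross term $-\tfrac{i}{4}\psi\bar\psi(\bar\psi^{(n-1)}\psi^{(n)}-\psi^{(n-1)}\bar\psi^{(n)})$. Here the only small trick is a single integration by parts on $\bar\psi^{(n-1)}\psi^{(n)}=\partial(\bar\psi^{(n-1)}\psi^{(n-1)})-\bar\psi^{(n)}\psi^{(n-1)}$, which modulo $\partial\mc V^\mb C+\mc V_{n-1}^\mb C$ (the correction again coming from $\partial$ hitting the factor $\psi\bar\psi$) replaces this expression by $-\bar\psi^{(n)}\psi^{(n-1)}$. Combining with the symmetric contribution from the $\psi^{(n-1)}\bar\psi^{(n)}$ summand produces $\tfrac{i}{2}\bar\psi^{(n)}\psi^{(n-1)}\bar\psi\psi$, and multiplying by $(2n+1)\beta$ gives the stated coefficient. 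The main difficulty is thus purely the bookkeeping of ensuring that every correction lands in $\partial\mc V^\mb C+\mc V_{n-1}^\mb C$, so that no term of genuine differential order $n$ is lost in passing to the representative required by the statement; no new input beyond Corollary \ref{20141102:cor1} is needed.
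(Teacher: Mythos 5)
Your proposal is correct and follows essentially the same route as the paper's proof: substitute $a=(\psi+\bar\psi)/2$, $b=(\psi-\bar\psi)/(2i)$ into Corollary \ref{20141102:cor1}, expand $(a^2+b^2)a^{(n-1)}b^{(n)}$ into the four terms, and use integration by parts modulo $\partial\mc V^{\mb C}$ to discard the diagonal terms and to symmetrize the cross terms into $\tfrac{i}{2}\bar\psi^{(n)}\psi^{(n-1)}\bar\psi\psi$, absorbing all corrections (where $\partial$ hits $\psi\bar\psi$) into $R_{2n}\in\mc V_{n-1}^{\mb C}$. The only cosmetic difference is that you write the diagonal terms directly as exact derivatives $\tfrac12\partial((\psi^{(n-1)})^2)$ rather than the paper's two-step integration-by-parts identity, which is the same computation.
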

\begin{proof}
Clearly, $(a^{(n)})^2+(b^{(n)})^2=\psi^{(n)}\bar\psi^{(n)}$, for every $n\in\mb Z_+$.
Moreover, we have
\begin{align}
\label{eq0}
(a^2+b^2)a^{(n-1)}b^{(n)}
=\frac{i}{4}\left(
\bar\psi^{(n)}\psi^{(n-1)}\bar\psi\psi
-\psi^{(n)}\bar\psi^{(n-1)}\bar\psi\psi\right.
\left.+\bar\psi^{(n)}\bar\psi^{(n-1})\bar\psi\psi
-\psi^{(n)}\psi^{(n-1)}\bar\psi\psi
\right)
\,.
\end{align}
Note that, integrating by parts, we have
\begin{equation}\label{eq1}
\psi^{(n)}\bar\psi^{(n-1)}\bar\psi\psi
=-\psi^{(n-1)}\partial(\bar\psi^{(n-1)}\bar\psi\psi)\bmod\partial\mc V
=(-\bar\psi^{(n)}\psi^{(n-1)}\bar\psi\psi+f)\bmod\partial\mc V
\,,
\end{equation}
where $f\in\mc V^{\mb C}_{n-1}$. Moreover, again using integration by parts,
we have
$$
\psi^{(n)}\psi^{(n-1)}\bar\psi\psi
=-\psi^{(n-1)}\partial(\psi^{(n-1)}\bar\psi\psi)\bmod\partial\mc V
=(-\psi^{(n)}\psi^{(n-1)}\bar\psi\psi+2g)\bmod\partial\mc V
\,,
$$
with $g\in\mc V_{n-1}^{\mb C}$. Then,
\begin{equation}\label{eq2}
\psi^{(n)}\psi^{(n-1)}\bar\psi\psi
=f\bmod\partial\mc V\,.
\end{equation}
Similarly, we get that
\begin{equation}\label{eq3}
\bar\psi^{(n)}\bar\psi^{(n-1)}\bar\psi\psi=h\bmod\partial\mc V\,.
\end{equation}
for some $h\in\mc V_{n-1}^{\mb C}$.
Combining equations \eqref{eq0}, \eqref{eq1}, \eqref{eq2} and \eqref{eq3} the proof is concluded.
\end{proof}
We want to give a description of the conserved densities $h_{2n}\in\mc V^{\mb C}$ which
will be used throughout the rest of the paper.

Let $\widetilde{\mc V}$ be the algebra of differential polynomials in one variable $u$.
Let us denote by
\begin{equation}\label{Hom}
\tilde{}:\mc V^{\mb C}\to\widetilde{\mc V}
\end{equation}
the differential algebra homomorphism defined as follows: given $f\in\mc V^{\mb C}$, we denote
by $\widetilde{f}\in\widetilde{\mc V}$ the differential polynomial obtained by replacing $\psi$ and
$\bar\psi$ by $u$ (and their $n$-th derivatives by $u^{(n)}$). 
Note that $\widetilde{\mc V}$ inherits the polynomial and differential grading of $\mc V^{\mb C}$.

Recall, by Remark \ref{20141102:rem1}, that we can write the conserved densities as in
equation \eqref{20141102:eq5}.
Then, by Corollary \ref{20141102:cor2}, we have that
\begin{equation}\label{Ziurig1}
h_{2n,0}=\frac{1}{2}\psi^{(n)}\bar\psi^{(n)}
\,,
\end{equation}
and
\begin{equation}
\label{Ziurig2}
h_{2n,1}=\frac{(2n+1)i}{2}\bar\psi^{(n)}\psi^{(n-1)}\bar\psi\psi
+\sum_{p\in\widetilde P}c_{2n}(p)p
\,,
\end{equation}
where $c_{k}(p)\in\mb C$ (they can be possibly $0$) and
\begin{equation}\label{eq:PTilde}
\widetilde{P}
=\{p\in\mc V^{\mb C}\mid \widetilde{p}=u^{(n-1)}u^{(n_1)}u^{(n_2)}u^{(n_3)}\,,
n_1+n_2+n_3=n\,,0\leq n_3\leq n_2\leq n_1\leq n-1\}
\,.
\end{equation}

\section{Control of the Sobolev Norms}\label{sec:stab}

The goal of this section is to show the
persistence of regularity of small solutions of  
DNLS equation \eqref{eq:dnls}, using the higher Hamiltonians introduced in 
Theorem \ref{thm:integrability_DNLS}.

For every $k\in\mb Z_+$, we denote
$$
E_k=\tint h_{2k}
\,.
$$
By equations \eqref{20141102:eq5}, \eqref{Ziurig1}, \eqref{Ziurig2} and Corollary \ref{20141102:cor2} it is possible to write
\begin{equation}\label{Eq:HkLawRepres}
E_k= 
\frac{1}{2}\| \psi \|_{\dot{H}^{k}} 
+
\tint q_{k},
\end{equation}
where
\begin{equation}\label{eq:qkremainderRepres}
q_{k}
:=
\frac{(2k+1)i}{2} \beta \bar{\psi}^{(k)}\psi^{(k-1)} \bar{\psi} \psi +
\beta \sum_{p \in \tilde{P}} c_{2k}(p)  p
+ \sum_{m=2}^{2k}  \beta^{m}  h_{2k,m} \,.
\end{equation}
We recall that dd$(h_{k,m}) = 2k - m$
and $\widetilde{P}$ is defined in \eqref{eq:PTilde} .
\begin{remark}
Note that using Proposition \ref{20141101:prop1} 
(and recalling equation \eqref{20141102:eq5})
it is possible to write
\begin{equation}\nonumber
\tint h_{2k+1} =
\frac{i}{2} \tint \psi^{(k)} \bar{\psi}^{(k+1)}
+
\sum_{m=1}^{2k+1} \beta^{m} \tint h_{2k+1,m} \,.
\end{equation}
Differently from the case of $\tint h_{2k}$, the constant term in $\beta$ of 
the above equation has no
definite sign and, in particular, it does not coincide with $\| \psi \|_{\dot{H}^{k/2}}$. 
\end{remark}

The main result of the section is the following

\begin{proposition}\label{MainPropSecondPart}
Let $k\in\mb Z_+$. For every $0\leq m \leq k$
let us fix $R_m \geq 0$, assuming
$R_{0} \leq \sqrt{\frac{2}{9 |\beta|}}$.
There exists $\mathcal{C} = \mathcal{C}(R_{0},\dots,R_{k},k, |\beta|)$
such that if
$$
| E_m[\psi] | \leq R_m,\quad \mbox{for any} \quad m=0,\dots,k\,,
$$
then
\be\label{eq:stability}
\|\psi\|_{\dot H^k}\leq \mathcal{C}.
\ee

\end{proposition}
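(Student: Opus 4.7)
The argument proceeds by induction on $k$, using the decomposition $E_k = \tfrac12\|\psi\|_{\dot H^k}^2 + \tint q_k$ from \eqref{Eq:HkLawRepres}--\eqref{eq:qkremainderRepres}. The case $k=0$ is immediate: $|E_0|\le R_0$ gives $\|\psi\|_{L^2}^2\le 2R_0$. For $k=1$, using the explicit form \eqref{Eq:Legge1} and H\"older's inequality
\[
\bigl|\tint\psi^2\bar\psi\bar\psi'\bigr| \le \|\psi\|_{L^6}^3\,\|\psi\|_{\dot H^1},
\]
one invokes the Gagliardo--Nirenberg inequality \eqref{eq:GN} to split $|\beta|\|\psi\|_{L^6}^3\|\psi\|_{\dot H^1}$ into a piece proportional to $|\beta|\|\psi\|_{L^2}^2\|\psi\|_{\dot H^1}^2$ and a piece proportional to $|\beta|\|\psi\|_{L^2}^3\|\psi\|_{\dot H^1}$. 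The smallness hypothesis $R_0\le\sqrt{2/(9|\beta|)}$ is calibrated precisely so that the quadratic-in-$\|\psi\|_{\dot H^1}$ piece can be absorbed into the $\tfrac12\|\psi\|_{\dot H^1}^2$ term of $E_1$; the remaining linear-in-$\|\psi\|_{\dot H^1}$ piece, together with $|E_1|\le R_1$, produces a quadratic inequality that yields $\|\psi\|_{\dot H^1}\le\mathcal{C}$.

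For $k\ge 2$, I would assume the conclusion for all $0\le m<k$, so that $\|\psi\|_{\dot H^m}$ is bounded by some $K_m = K_m(R_0,\dots,R_m,m,|\beta|)$. In particular, Sobolev embedding $H^1(\T)\hookrightarrow L^\infty(\T)$ yields $\|\psi\|_{L^\infty}\le K^*$. The leading contribution to $q_k$ in \eqref{eq:qkremainderRepres} is bounded by Cauchy--Schwarz:
\[
\Bigl|\tfrac{(2k+1)i\beta}{2}\tint\bar\psi^{(k)}\psi^{(k-1)}\bar\psi\psi\Bigr|
\le \tfrac{(2k+1)|\beta|}{2}\,\|\psi\|_{\dot H^k}\,\|\psi\|_{\dot H^{k-1}}\,\|\psi\|_{L^\infty}^2
\le C_1\|\psi\|_{\dot H^k},
\]
which is only \emph{linear} in the unknown. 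Every remaining monomial in $q_k$---those of $h_{2k,1}$ indexed by $\widetilde P$ in \eqref{eq:PTilde} and those of $h_{2k,m}$ with $m\ge 2$---satisfies the structural constraint that no factor carries more than $k-1$ derivatives (directly in the case of $\widetilde P$; after integration by parts in the higher $\beta$-powers, whose degree $\ge 6$ permits redistribution of derivatives). Placing the highest-order factor in $L^2$ and the remaining factors in suitably chosen $L^p$ spaces, then combining H\"older with the interpolation
\[
\|\psi\|_{\dot H^s}\lesssim \|\psi\|_{\dot H^k}^{s/k}\,\|\psi\|_{L^2}^{1-s/k},\qquad 0\le s\le k,
\]
and Sobolev embedding, each such integral is bounded by $C\|\psi\|_{\dot H^k}^\theta$ for some exponent $\theta<2$ and a constant $C=C(R_0,\dots,R_{k-1},k,|\beta|)$. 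Combining all pieces, $|E_k|\le R_k$ gives a sub-quadratic inequality in $\|\psi\|_{\dot H^k}$ that forces $\|\psi\|_{\dot H^k}\le\mathcal{C}$.

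The principal obstacle lies in the base case $k=1$: the cubic-derivative term in $h_2$ is $L^2$-critical in the scaling sense, so the Gagliardo--Nirenberg estimate produces a contribution of the form $|\beta|\|\psi\|_{L^2}^2\|\psi\|_{\dot H^1}^2$ with the \emph{same} scaling as the positive-definite $\tfrac12\|\psi\|_{\dot H^1}^2$. This renders an $L^2$-smallness threshold indispensable, and the explicit constant $\sqrt{2/(9|\beta|)}$ tracks back to the sharp constants in \eqref{eq:GN} combined with Young's inequality, mirroring the classical low-mass global well-posedness theory for DNLS on $\T$. By contrast, for $k\ge 2$ the inductive availability of bounds on $\|\psi\|_{L^\infty}$ and $\|\psi\|_{\dot H^{k-1}}$ renders every contribution of $q_k$ strictly sub-quadratic in $\|\psi\|_{\dot H^k}$, and no additional smallness assumption is needed at higher levels.
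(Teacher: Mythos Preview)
Your argument follows the same inductive scheme as the paper: the base case $k=1$ via Gagliardo--Nirenberg and the smallness of $R_0$, and the inductive step by showing $|\tint q_k|$ is sub-quadratic in $\|\psi\|_{\dot H^k}$. The paper organizes the inductive step slightly differently---it applies Young's inequality to the leading term to obtain $\varepsilon\|\psi\|_{\dot H^k}^2 + C(\varepsilon)\|\psi\|_{H^{k-1}}^6$ (Lemma~\ref{20150518:lem3}) and bounds all remaining pieces directly by powers of $\|\psi\|_{H^{k-1}}$ (Lemmas~\ref{20150518:lem1} and~\ref{20150518:lem2}), never invoking interpolation toward $\dot H^k$. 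Your linear bound on the leading term is a clean equivalent to the paper's Young-inequality absorption.

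One caution on your handling of the lower-order monomials: the interpolation $\|\psi\|_{\dot H^s}\lesssim\|\psi\|_{\dot H^k}^{s/k}\|\psi\|_{L^2}^{1-s/k}$ is largely unnecessary and, if used as the \emph{only} input, actually gives $\theta=2$ (not $\theta<2$) for the $\widetilde P$ terms, since their total differential degree is $2k-1$ and the Sobolev--H\"older bookkeeping then saturates. The point is that the inductive hypothesis already furnishes a bound on $\|\psi\|_{H^{k-1}}$, so for $k\ge3$ every non-leading monomial is controlled by a constant ($\theta=0$); only at $k=2$ does the $(1,1,1,0)$ configuration in $\widetilde P$ force a genuine factor of $\|\psi\|_{\dot H^k}$ to appear, and there one gets $\theta\le1$ by combining the inductive $H^1$ bound with one use of $\|u'\|_{L^\infty}\lesssim\|u\|_{H^2}$. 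With that clarification your scheme is correct and matches the paper's.
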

To prove Proposition \ref{MainPropSecondPart} we need some
preliminary results.
\begin{lemma}\label{20150518:lem1}
Let $k\geq 2$ and $u \in H^{k-1}$. For $l\geq5$ and $\alpha_{i} \geq 0$ $(i=1,\dots l)$
such that $\alpha_1+\dots +\alpha_l\leq 2k-2$, we have
\begin{equation}\label{20140518:eq1}
\left|\tint u^{(\alpha_1)}\dots u^{(\alpha_l)}\right|
\lesssim 
\|u\|_{H^{k-1}}^l\,.
\end{equation}
\end{lemma}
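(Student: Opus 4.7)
The approach is a combination of Hölder's inequality with $L^2$/$L^\infty$ pairing and 1D Sobolev embedding, preceded by an integration by parts reduction to handle the case when some derivative index is too large. The plan is to sort the indices, treat the case $\alpha_1 \leq k-1$ directly, and use IBP to reduce the remaining case to this one.

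Without loss of generality I would sort the indices so that $\alpha_1 \geq \alpha_2 \geq \dots \geq \alpha_l$. The main case is when $\alpha_1 \leq k-1$. Here the key combinatorial observation is that $\alpha_3 \leq k-2$: since $3\alpha_3 \leq \alpha_1+\alpha_2+\alpha_3 \leq 2k-2$, we have $\alpha_3 \leq \lfloor (2k-2)/3 \rfloor$, which one checks is $\leq k-2$ for every $k\geq 2$. Applying Hölder with exponents $(2,2,\infty,\dots,\infty)$,
\[
\left|\tint \prod_{i=1}^l u^{(\alpha_i)}\right|
\leq \|u^{(\alpha_1)}\|_{L^2} \|u^{(\alpha_2)}\|_{L^2} \prod_{i=3}^l \|u^{(\alpha_i)}\|_{L^\infty},
\]
one bounds $\|u^{(\alpha_i)}\|_{L^2} \leq \|u\|_{H^{k-1}}$ for $i=1,2$ (since $\alpha_i \leq k-1$), and, using the 1D Sobolev embedding $H^1(\T) \hookrightarrow L^\infty(\T)$ together with $\alpha_i + 1 \leq k-1$, one has $\|u^{(\alpha_i)}\|_{L^\infty} \lesssim \|u\|_{H^{k-1}}$ for $i \geq 3$. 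The product gives the bound $\|u\|_{H^{k-1}}^l$.

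To reduce the remaining case $\alpha_1 \geq k$ to the main case, observe that $\sum_{i\geq 2}\alpha_i \leq k-2$, so $\alpha_i \leq k-2$ for all $i \geq 2$. Integration by parts yields
\[
\tint u^{(\alpha_1)} \prod_{i\geq 2} u^{(\alpha_i)}
= -\sum_{j=2}^l \tint u^{(\alpha_1-1)} u^{(\alpha_j+1)} \prod_{\substack{i\geq 2 \\ i \neq j}} u^{(\alpha_i)},
\]
and in each resulting integral the total sum of indices is preserved, while the new maximum is $\alpha_1 - 1$ (since $\alpha_j + 1 \leq k-1 \leq \alpha_1 - 1$). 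When $\alpha_1 \geq k+1$ the non-maximal indices still sum to at most $k-2$ and so remain individually $\leq k-2$, which allows the same step to be iterated. After at most $\alpha_1 - (k-1) \leq k-1$ iterations one obtains a finite sum of integrals of the same form to which the main case applies, yielding the desired estimate.

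The main obstacle is the bookkeeping for the integration by parts reduction: one must verify that throughout the iteration the non-maximal indices remain $\leq k-2$, which is precisely what allows the IBP step to be applied repeatedly. This follows cleanly from the two assumptions $\sum \alpha_i \leq 2k-2$ and $\alpha_1 \geq k$. Once this is settled the rest is routine Hölder plus 1D Sobolev embedding.
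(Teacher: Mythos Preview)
Your proposal is correct and follows essentially the same approach as the paper: sort the indices, reduce via integration by parts to the case $\alpha_1,\alpha_2\leq k-1$ and $\alpha_i\leq k-2$ for $i\geq 3$, then apply H\"older with exponents $(2,2,\infty,\dots,\infty)$ together with the embedding $H^1(\T)\hookrightarrow L^\infty(\T)$. The paper is terser, simply asserting that after integration by parts one may assume the reduced configuration, whereas you spell out the iteration and also make explicit the combinatorial point that once $\alpha_1\leq k-1$ the bound $\alpha_3\leq k-2$ is automatic from $3\alpha_3\leq 2k-2$; this is a minor clarification rather than a different argument.
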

\begin{proof}
We reorder the terms in the integrand in the l.h.s. of \eqref{20140518:eq1},
such that $\alpha_1\geq\alpha_2\geq\dots\geq\alpha_l$.
Furthermore, using integration by parts, we may assume that
\begin{equation}\label{20150518:eq2}
\alpha_1,\alpha_2\leq k-1\,,
\qquad\text{and}\qquad
\alpha_i\leq k-2\,,\quad i=3,\dots,l\,.
\end{equation}
By the Holder inequality and the first condition
in \eqref{20150518:eq2} we get
\begin{equation}\label{20140518:eq3}
\left|\tint u^{(\alpha_1)}\dots u^{(\alpha_l)}\right|\leq
\| u\|_{\dot{H}^{k-1}}^2\prod_{i=3}^l\|u^{(\alpha_i)}\|_{L^\infty}\,.
\end{equation}
Using the embedding $H^1\hookrightarrow L^\infty$
and the second condition in \eqref{20150518:eq2} we have (for all $i=3,\dots,l$):
\begin{equation}\label{20140518:eq4}
\|u^{(\alpha_i)}\|_{L^\infty}
\lesssim
\|u^{(\alpha_i)}\|_{H^1}
\leq
\|u\|_{H^{k-1}}\,.
\end{equation}
The inequality \eqref{20140518:eq1} follows combining the inequalities
\eqref{20140518:eq3} and \eqref{20140518:eq4}.
\end{proof}
\begin{lemma}\label{20150518:lem2}
Let $k\geq2$ and $u \in H^{k-1}$.
Let also $\alpha_1\geq\alpha_2\geq\alpha_3\geq\alpha_4\geq0$
be such that $\alpha_1+\alpha_2+\alpha_3+\alpha_4=2k-1$.
For $\alpha_1=k-1$ and $\alpha_2,\alpha_3,\alpha_4\leq k-1$, we have
$$
\left|\tint u^{(k-1)}u^{(\alpha_2)}u^{(\alpha_3)} u^{(\alpha_4)}\right|
\lesssim \|u\|_{H^{k-1}}^4\,.
$$
\end{lemma}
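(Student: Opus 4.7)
The plan is to argue by case analysis on $\alpha_2$, leaning on Hölder's inequality and the one-dimensional Sobolev embedding $H^{1}(\mathbb{T})\hookrightarrow L^{\infty}(\mathbb{T})$. The arithmetic constraint $\alpha_2+\alpha_3+\alpha_4=k$ together with the ordering $\alpha_2\geq\alpha_3\geq\alpha_4\geq 0$ and the cap $\alpha_i\leq k-1$ leaves only two possibilities: either $\alpha_2\leq k-2$ (in which case $\alpha_3,\alpha_4\leq k-2$ automatically), or else $(\alpha_2,\alpha_3,\alpha_4)=(k-1,1,0)$.

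In the first case I would apply Hölder to distribute the four factors as $L^2\cdot L^2\cdot L^\infty\cdot L^\infty$: keep $u^{(k-1)}$ and $u^{(\alpha_2)}$ in $L^2$ (each bounded by $\|u\|_{H^{k-1}}$), and place the remaining two factors in $L^\infty$. The 1D embedding gives $\|u^{(\alpha_i)}\|_{L^\infty}\lesssim\|u^{(\alpha_i)}\|_{H^1}\leq\|u\|_{H^{\alpha_i+1}}\leq\|u\|_{H^{k-1}}$ for $i=3,4$, since $\alpha_i+1\leq k-1$, and multiplication of the four factors yields $\|u\|_{H^{k-1}}^4$. In the borderline case the integral reads $\int u^{(k-1)}u^{(k-1)}u'u\,dx$. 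When $k\geq 3$ the same scheme still works, because $\alpha_3=1\leq k-2$ and $\alpha_4=0\leq k-2$, so both $\|u'\|_{L^\infty}\lesssim\|u\|_{H^2}\leq\|u\|_{H^{k-1}}$ and $\|u\|_{L^\infty}\lesssim\|u\|_{H^1}\leq\|u\|_{H^{k-1}}$ are under control.

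The genuinely delicate situation is $k=2$, where the estimate reduces to $|\int(u')^3 u\,dx|\lesssim\|u\|_{H^1}^4$. Here the naive Hölder distribution collapses, since $u'\in L^2$ gives no access to $\|u'\|_{L^\infty}$ or $\|u'\|_{L^4}$ out of $\|u\|_{H^1}$ alone. My plan is first to rewrite the integral algebraically: the total-derivative identity $\partial\bigl((u')^2 u^2\bigr)=2u(u')^3+2u^2u'u''$ gives $\int(u')^3 u\,dx=-\int u^2 u'u''\,dx$, and combined with $u^2 u'=(u^3)'/3$ and one further integration by parts this becomes $\int(u')^3 u\,dx=\tfrac{1}{3}\int u^3 u'''\,dx$. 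The main obstacle is that the natural $H^1\times H^{-1}$ duality is not directly available, since $u\in H^1$ only forces $u'''\in H^{-2}$ (not $H^{-1}$), so closing the bound requires exploiting the full algebraic cancellation in the symmetrised representation
\begin{equation*}
\int(u')^3 u\,dx=-\frac{i}{12}\sum_{n_1+n_2+n_3+n_4=0}\bigl(n_1^3+n_2^3+n_3^3+n_4^3\bigr)\hat u_{n_1}\hat u_{n_2}\hat u_{n_3}\hat u_{n_4},
\end{equation*}
where the sum-to-zero constraint together with the symmetric cubic weight (and Newton's identity $\sum n_i^3=3e_3(n_1,\ldots,n_4)$ when $\sum n_i=0$) can be used to redistribute the derivative count so that it is absorbed into two factors of $\|u\|_{H^1}$ rather than piling up on a single Fourier coefficient, yielding the required $\|u\|_{H^1}^4$ bound.
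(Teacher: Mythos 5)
Your case analysis is right, and for every configuration except the single borderline one at $k=2$ your H\"older--plus--$H^1\hookrightarrow L^\infty$ argument is exactly the paper's (the paper's proof is literally ``same as Lemma \ref{20150518:lem1}''). The problem is the case $k=2$, $(\alpha_1,\dots,\alpha_4)=(1,1,1,0)$, which you correctly isolate as delicate but do not close: your final paragraph is a programme, not a proof. Worse, I do not believe it can be closed in the claimed form. First, the absolute-value bound certainly fails: $\int|u'|^3|u|\leq\|u\|_{L^\infty}\|u'\|_{L^3}^3$ and $\|u'\|_{L^3}$ is not controlled by $\|u\|_{H^1}$ (there are $u\in H^1(\T)$ with $u'\notin L^3$); this matters because in Corollary \ref{20140519:cor1} the lemma is applied through $|\psi|=|\bar\psi|$, i.e.\ precisely to absolute values, so no cancellation is available in the application. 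Second, the cancellation you hope to extract from the symmetrised multiplier is not decisive: on $n_1+n_2+n_3+n_4=0$ one has $\sum_i n_i^3=-3(n_1+n_2)(n_2+n_3)(n_3+n_1)$, and in the region where one frequency is $O(1)$ and the other three are large this multiplier is comparable to $\langle n_1\rangle\langle n_2\rangle\langle n_3\rangle\langle n_4\rangle$, so the normalised kernel is $\simeq 1$ there and the form behaves like $\sum_{n_1,n_2}b_{n_1}b_{n_2}c_{n_1+n_2}$ with $b_n=\langle n\rangle|\hat u_n|$, which is not bounded by $\|b\|_{\ell^2}^2\|c\|_{\ell^2}$.

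It is worth saying explicitly that this is a gap in the paper as well, not only in your write-up: the reduction in the proof of Lemma \ref{20150518:lem1} to $\alpha_3,\alpha_4\leq k-2$ is arithmetically impossible here when $k=2$, since the total differential order is $2k-1=3$ while $2(k-1)+2(k-2)=2$. The statement can be repaired for $k=2$ only by weakening it to the shape of Lemma \ref{20150518:lem3}: by Gagliardo--Nirenberg, $\|u'\|_{L^3}\lesssim\|u'\|_{L^2}^{5/6}\|u\|_{H^2}^{1/6}+\|u'\|_{L^2}$, hence
$$
\left|\tint (u')^3u\right|\lesssim \|u\|_{H^1}^{7/2}\|u\|_{H^{2}}^{1/2}+\|u\|_{H^1}^4
\leq \varepsilon\|u\|_{\dot H^{2}}^2+C(\varepsilon)\|u\|_{H^{1}}^{14/3}+\|u\|_{H^1}^4\,,
$$
which is what Corollary \ref{20140519:cor1} actually needs (it already tolerates an $\varepsilon\|\psi\|_{\dot H^k}^2$ contribution). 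So the honest conclusion is: your proof is complete and agrees with the paper for $k\geq3$; for $k=2$ the lemma as stated is not established by your argument (nor by the paper's), and should be replaced by the $\varepsilon$-weighted estimate above.
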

\begin{proof}
Same as the proof of Lemma \ref{20150518:lem1}.
\end{proof}
\begin{lemma}\label{20150518:lem3}
Let $k\geq2$ and let $u \in H^{k}$. Then 
$$
\left|\tint u^{(k)}u^{(k-1)}u^2\right|
\leq \varepsilon \|u\|^2_{\dot{H}^{k}}
+
C(\varepsilon) \|u\|^6_{H^{k-1}}\,,
$$
for all $\varepsilon > 0$.
\end{lemma}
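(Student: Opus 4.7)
The plan is to isolate the one dangerous factor $u^{(k)}$ via Cauchy--Schwarz, bound everything else by the $H^{k-1}$ norm, and then rebalance with Young's inequality to produce the required splitting.

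First I would simply apply Cauchy--Schwarz in $L^2$ to split the integrand into the highest derivative and the remainder:
$$
\left|\tint u^{(k)}u^{(k-1)}u^2\right|
\leq \|u^{(k)}\|_{L^2}\,\|u^{(k-1)}u^2\|_{L^2}
= \|u\|_{\dot H^k}\,\|u^{(k-1)}u^2\|_{L^2}.
$$
Next I would estimate the second factor by Hölder, placing $u^{(k-1)}$ in $L^2$ and the two copies of $u$ in $L^\infty$:
$$
\|u^{(k-1)}u^2\|_{L^2}\leq \|u^{(k-1)}\|_{L^2}\,\|u\|_{L^\infty}^2.
$$
Since $k\geq 2$ we have $H^{k-1}\hookrightarrow H^1\hookrightarrow L^\infty$ on $\mathbb{T}$, and of course $\|u^{(k-1)}\|_{L^2}\leq \|u\|_{H^{k-1}}$; combining these yields
$$
\left|\tint u^{(k)}u^{(k-1)}u^2\right|\lesssim \|u\|_{\dot H^k}\,\|u\|_{H^{k-1}}^3.
$$

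Finally, I would apply Young's inequality $ab\leq \varepsilon a^2 + \tfrac{1}{4\varepsilon}b^2$ with $a=\|u\|_{\dot H^k}$ and $b\lesssim \|u\|_{H^{k-1}}^3$, absorbing the implicit constants into $C(\varepsilon)$, to conclude
$$
\left|\tint u^{(k)}u^{(k-1)}u^2\right|\leq \varepsilon\|u\|_{\dot H^k}^2 + C(\varepsilon)\|u\|_{H^{k-1}}^6,
$$
as required. There is essentially no obstacle here: the only point worth noting is that the argument uses $k\geq 2$ exactly once, in order to invoke the one-dimensional Sobolev embedding $H^{k-1}\hookrightarrow L^\infty$; without this hypothesis one could not place $u$ in $L^\infty$ using only the control furnished by $\|u\|_{H^{k-1}}$. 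Note also that integration by parts is not needed, since $u^{(k)}$ already appears linearly and can be peeled off directly.
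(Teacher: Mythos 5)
Your argument is correct and coincides with the paper's proof: both estimate the integral by H\"older (your Cauchy--Schwarz plus $L^2\times L^\infty\times L^\infty$ split is the same inequality), use the embedding $H^1\hookrightarrow L^\infty$ together with $k\geq 2$ to bound $\|u^{(k-1)}\|_{L^2}\|u\|_{L^\infty}^2\lesssim \|u\|_{H^{k-1}}^3$, and conclude with Young's inequality. No gaps.
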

\begin{proof}
By using the Holder inequality 
and the embedding
$H^1\hookrightarrow L^\infty$
we get
\begin{equation}\nonumber
\left|\tint u^{(k)}u^{(k-1)}u^2\right|
\leq 
\| u\|_{\dot{H}^{k}}\|u\|_{\dot{H}^{k-1}}\|u\|_{L^\infty}^2
\lesssim 
\| u\|_{\dot{H}^{k}}\|u\|_{H^{k-1}}^3\,.
\end{equation}
The proof is concluded by applying the Young inequality in the last expression.
\end{proof}
\begin{corollary}\label{20140519:cor1}
Let $k\geq2$ and $c(k)>0$ a constant depending only by $k$.
For every $\psi \in H^{k}$ and $\e>0$, we have
\begin{equation}\label{eq:qk_estimate}
\left|\tint q_k(\psi)\right|
\leq c(k)\varepsilon 
\|\psi\|_{\dot{H}^{k}}^2+\mathcal{C}\,,
\end{equation}
where $\mathcal{C}=\mathcal{C}(\|\psi\|_{\dot{H}^{0}},
\|\psi\|_{\dot{H}^{k-1}},\varepsilon, k, |\beta|)$.
\end{corollary}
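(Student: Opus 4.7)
The plan is to decompose $\tint q_k$ along the three algebraic pieces appearing in the definition \eqref{eq:qkremainderRepres} and then apply, respectively, Lemmas \ref{20150518:lem3}, \ref{20150518:lem2}, and \ref{20150518:lem1} to each piece. Since each of the cited lemmas is stated for a single real variable $u$, but the monomials arising in $q_k$ involve both $\psi$ and $\bar\psi$, the first step is to observe that the proofs of Lemmas \ref{20150518:lem1}--\ref{20150518:lem3} use only H\"older's inequality, the one-dimensional Sobolev embedding $H^1\hookrightarrow L^\infty$, and Young's inequality; therefore they apply verbatim after pointwise bounds $|\psi^{(\alpha)}\bar\psi^{(\beta)}|\leq |\psi^{(\alpha)}||\psi^{(\beta)}|$ and after replacing the norms of $u$ by the corresponding norms of $\psi$.

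The leading quartic term $\frac{(2k+1)i}{2}\beta\,\tint \bar\psi^{(k)}\psi^{(k-1)}\bar\psi\psi$ is precisely of the form covered by Lemma \ref{20150518:lem3}, so it is bounded by
$$
\tfrac{(2k+1)|\beta|}{2}\bigl(\varepsilon\|\psi\|^2_{\dot H^k}+C(\varepsilon)\|\psi\|_{H^{k-1}}^6\bigr).
$$
For the second piece, $\beta\sum_{p\in\widetilde P}c_{2k}(p)\,\tint p$, recall from \eqref{eq:PTilde} that each $p\in\widetilde P$ has $\widetilde p=u^{(k-1)}u^{(n_1)}u^{(n_2)}u^{(n_3)}$ with $n_1+n_2+n_3=k$ and $n_3\leq n_2\leq n_1\leq k-1$; hence the four derivative orders satisfy the hypothesis of Lemma \ref{20150518:lem2} (highest order equal to $k-1$, total order $2k-1$), yielding a bound of order $|\beta|\,\|\psi\|_{H^{k-1}}^4$ for each such term. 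Finally, for the third piece $\sum_{m=2}^{2k}\beta^m\tint h_{2k,m}$, Remark \ref{20141102:rem1} tells us that $h_{2k,m}$ is a homogeneous differential polynomial with $\deg(h_{2k,m})=2m+2\geq 6$ and $\dd(h_{2k,m})=2k-m\leq 2k-2$; thus, after expanding each $h_{2k,m}$ into monomials of the form $\bar\psi^{(\alpha_1)}\!\cdots\!\bar\psi^{(\alpha_l)}\psi^{(\beta_1)}\!\cdots\!\psi^{(\beta_l)}$ with $l\geq 3$ (so the total number of factors is at least $6\geq 5$) and total derivative order $\leq 2k-2$, Lemma \ref{20150518:lem1} gives the bound $|\beta|^m\,\|\psi\|_{H^{k-1}}^{2m+2}$, up to a constant depending only on $k$.

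Summing the three bounds and using $\|\psi\|_{H^{k-1}}\lesssim\|\psi\|_{\dot H^0}+\|\psi\|_{\dot H^{k-1}}$, the contribution of every term except the first is already of the form $\mathcal{C}(\|\psi\|_{\dot H^0},\|\psi\|_{\dot H^{k-1}},k,|\beta|)$, while the $\varepsilon\|\psi\|_{\dot H^k}^2$ contribution produced by Lemma \ref{20150518:lem3} is absorbed into the left-hand side of \eqref{eq:qk_estimate} by choosing the constant $c(k)=\tfrac{(2k+1)}{2}|\beta|$ (or, after rescaling $\varepsilon$, an absolute multiple of $|\beta|$ times a combinatorial factor depending on $k$).

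There is no substantial obstacle here: the heavy lifting has already been carried out in Section \ref{sec:1.6}, where the precise structural Corollary \ref{20141102:cor2} isolates the single obstructive monomial $\bar\psi^{(k)}\psi^{(k-1)}\bar\psi\psi$, and in Lemmas \ref{20150518:lem1}--\ref{20150518:lem3}, which supply tailor-made estimates for each admissible combinatorial type. The only point that requires care is the bookkeeping of the derivative orders and polynomial degrees of the terms in $h_{2k,m}$ for $m\geq 2$, which is why the gradings $\deg$ and $\dd$ were tracked in Proposition \ref{20141102:prop1}. Once those constraints are in hand, the estimate \eqref{eq:qk_estimate} follows by a direct term-by-term application.
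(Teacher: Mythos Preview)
Your proof is correct and follows essentially the same route as the paper: you decompose $\tint q_k$ via \eqref{eq:qkremainderRepres} and apply Lemma \ref{20150518:lem3} to the leading quartic, Lemma \ref{20150518:lem2} to the $\widetilde P$-terms, and Lemma \ref{20150518:lem1} to the $h_{2k,m}$ for $m\geq2$, exactly as the paper does. Your added remark that the single-variable lemmas transfer to monomials in $\psi,\bar\psi$ via $|\bar\psi^{(\alpha)}|=|\psi^{(\alpha)}|$ is the same observation the paper makes (through the homomorphism \eqref{Hom}), and your explicit tracking of $\deg$ and $\dd$ for the $h_{2k,m}$ is a welcome clarification.
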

\begin{proof}
Let us focus on the representation
\eqref{eq:qkremainderRepres}:
\begin{equation}\nonumber
q_{k}(\psi)
=
\frac{(2k+1)i}{2} \beta \bar{\psi}^{(k)}\psi^{(k-1)} \bar{\psi} \psi +
\beta \sum_{p \in \tilde{P}} c_{2k}(p)  p
+ \sum_{m=2}^{2k}  \beta^{m}  h_{2k,m} \, .
\end{equation}
The Lemma \ref{20150518:lem1} and the fact that $|\psi|=|\bar\psi|$
allow us to bound (through the homomorphism defined in \eqref{Hom})
\begin{equation}\label{20150520:eq1}
| \tint h_{k,m} | \leq \mathcal{C}(\|\psi\|_{\dot{H}^0},\|\psi\|_{\dot{H}^{k-1}},k)\,,
\end{equation}
for all $m=2 ,\dots, 2k$. Similarly, Lemma \ref{20150518:lem2}
implies
\begin{equation}\label{20150520:eq1Bis}
| \tint p | \leq \mathcal{C}(\|\psi\|_{\dot{H}^0},\|\psi\|_{\dot{H}^{k-1}},k)\,,
\end{equation}
for all $p \in \widetilde{P}$.
Finally, Lemma \ref{20150518:lem3} gives
\begin{equation}\label{20150520:eq2}
| \tint \bar{\psi}^{(k)} \psi^{(k-1)} \bar{\psi} \psi |
\leq \varepsilon \| u \|^2_{\dot{H}^{k}}
+
C(\varepsilon) \|u\|^6_{H^{k-1}}\,.
\end{equation}

Combining the equation \eqref{eq:qkremainderRepres}, the inequalities
\eqref{20150520:eq1}, \eqref{20150520:eq1Bis} and \eqref{20150520:eq2}, the estimate
\eqref{eq:qk_estimate} follows.
\end{proof}
\begin{lemma}\label{20140518:lem4}
Let $\psi \in H^{1}$ and 
let us denote $R_0 = \| \psi \|_{L^2}$.
%
Then
$$
\left| \tint \bar{\psi}' \bar{\psi} \psi^{2} \right|
\leq
\frac{3}{2} \| \psi \|_{\dot{H}^1}^2 R_0^2
+ \frac{1}{8 \pi^2} R_0^4 \, .
$$
\end{lemma}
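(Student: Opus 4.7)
The plan is to combine a direct Cauchy--Schwarz bound on the quartic integral with the Gagliardo--Nirenberg inequality \eqref{eq:GN} and then absorb the resulting mixed term by a Young inequality with carefully chosen weights so as to produce exactly the constants $3/2$ and $1/(8\pi^2)$.

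First I would use the pointwise identity $|\bar\psi'\bar\psi\psi^2| = |\psi'|\,|\psi|^3$ and apply Cauchy--Schwarz with respect to the normalized integral $\tint = \frac{1}{2\pi}\int_{\T}$ (so that $\|u\|_{L^2}^2 = \tint |u|^2$ and $\|u\|_{\dot H^1}^2 = \tint |u'|^2$). This yields
$$
\left|\tint \bar\psi'\bar\psi\psi^2\right|
\leq \|\psi'\|_{L^2}\,\|\psi^3\|_{L^2}
= \|\psi\|_{\dot H^1}\,\|\psi\|_{L^6}^3.
$$
Next, plugging the Gagliardo--Nirenberg inequality \eqref{eq:GN} in for $\|\psi\|_{L^6}^3$ gives
$$
\left|\tint \bar\psi'\bar\psi\psi^2\right|
\leq \|\psi\|_{\dot H^1}^2\, R_0^2 + \frac{R_0^3}{2\pi}\,\|\psi\|_{\dot H^1}.
$$

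Finally I would dispose of the cross term by writing it as the product $(R_0\|\psi\|_{\dot H^1})\cdot(R_0^2/(2\pi))$ and applying Young's inequality $ab \leq \frac{1}{2}a^2 + \frac{1}{2}b^2$, which produces $\frac{1}{2}R_0^2\|\psi\|_{\dot H^1}^2 + \frac{1}{8\pi^2}R_0^4$. Adding this to the first term $\|\psi\|_{\dot H^1}^2 R_0^2$ gives the announced bound with coefficient $3/2$ on $R_0^2 \|\psi\|_{\dot H^1}^2$ and $1/(8\pi^2)$ on $R_0^4$.

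There is no substantive obstacle: each step is elementary. The only point requiring care is the bookkeeping of normalizations — the factor $\frac{1}{2\pi}$ in the second term of \eqref{eq:GN} is precisely what, after an \emph{unweighted} Young inequality, yields the sharp constant $1/(8\pi^2)$; any other splitting would still give the same qualitative estimate but would not match the stated constants.
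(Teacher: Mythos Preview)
Your proof is correct and follows essentially the same approach as the paper: Cauchy--Schwarz (the paper calls it H\"older), then Gagliardo--Nirenberg \eqref{eq:GN}, then the same unweighted Young inequality on the cross term to obtain the exact constants $3/2$ and $1/(8\pi^2)$.
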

\begin{proof}
By the H\"older inequality
\begin{equation}\label{20140519:eq2}
\left| \bar{\psi}' \bar{\psi} \psi^{2} \right|\leq
\| \psi \|_{\dot{H}^1}\|\psi^3\|_{L^2}
=\| \psi \|_{\dot{H}^1}\|\psi\|_{L^6}^3\,.
\end{equation}
Using the Gagliardo--Nirenberg inequaltity \eqref{eq:GN} we get
\begin{equation}\label{20140519:eq3}
\| \psi \|_{\dot{H}^1}\| \psi \|_{L^6}^3
\leq  \| \psi \|_{\dot{H}^1}^2\| \psi \|_{L^2}^2
+ \frac{1}{2\pi} \| \psi \|_{\dot{H}^1} \| \psi \|_{L^{2}}^3 
=
 \| \psi \|_{\dot{H}^1}^2 R_{0}^2
+ \frac{1}{2 \pi} \| \psi \|_{\dot{H}^1} R_{0}^3
\, .
\end{equation}
Furthermore, using the Young inequality we have
\begin{equation}\label{20140519:eq4}
\frac{1}{2\pi} \| \psi \|_{\dot{H}^1} R_0^3
\leq  \frac{1}{2} \| \psi\|_{\dot{H}^1}^2 R_0^2
+ \frac{1}{8 \pi^2} R_{0}^{4} \,.
\end{equation}
Combining \eqref{20140519:eq2}, \eqref{20140519:eq3} and
\eqref{20140519:eq4} the proof follows.
\end{proof}

Now we are ready to prove Proposition \ref{MainPropSecondPart}.

\begin{proof}[Proof of Proposition \ref{MainPropSecondPart}]
We prove \eqref{eq:stability} by induction on $k$.
For $k=0$, there is nothing to prove, since 
$ E_0(\psi) = 1/2\| \psi \|^{2}_{L^{2}}$ (equation \eqref{Eq:Legge0}).

For $k=1$, by equation 
\eqref{Eq:Legge1} we can write
$ E_1(\psi)= 1/2\|\psi\|_{\dot{H}^1}^2+\tint q_1(\psi)$, where
$$
\tint q_1(\psi)=\frac{3i}{4} \beta \tint \bar{\psi}' \bar{\psi} \psi^{2}
+\frac{\beta^2}{4} \|\psi\|_{L^6}^6\,.
$$
Hence
\begin{equation}\label{20140519:eq6}
\frac12 \|\psi\|^{2}_{\dot{H}^1}
= E_1(\psi)-\tint q_1(\psi)
\leq  E_1(\psi) - \frac{3i}{4} \beta \tint \bar{\psi}' \bar{\psi} \psi^{2} \,.
\end{equation}
By Lemma \ref{20140518:lem4} and choosing 
$R_0 \leq \sqrt{\frac{2}{9 | \beta |}}$ we obtain
\begin{equation}\label{20140519:eq7}
\Big| 
\frac{3i}{4} \beta \tint \bar{\psi}' \bar{\psi} \psi^{2}  
\Big| \leq \frac{1}{4} \| \psi \|^{2}_{\dot{H}^{1}} + \frac{3}{32}R_{0}^{4} \, .
\end{equation}
Thus,
by \eqref{20140519:eq6} and \eqref{20140519:eq7},
it follows that
$$
\frac{1}{4} \| \psi \|_{\dot{H}^1}^2
\leq
  | E_{1} | + \frac{3}{32}R_{0}^{4}
=: \mathcal{C}(R_0,R_1)\,.
$$
This proves \eqref{eq:stability} in the case $k=1$.
Let us assume that equation \eqref{eq:stability} holds for $k \geq 2$,
namely
$$
\| \psi \|_{\dot{H}^{k}}
\leq \mathcal{C}(R_{0}, \dots , R_{k}, k, |\beta |) \, ,
$$
and let us 
show that it holds for $k+1$. 
By equation \eqref{Eq:HkLawRepres} and Corollary \ref{20140519:cor1}
we have
\begin{equation}\label{20140519:eq8}
\begin{split}
\frac{1}{2}\|\psi\|^2_{\dot{H}^{k+1}}
&\leq \left| E_{k+1}(\psi)\right| - \tint q_{k+1}(\psi) 
\\
&\leq R_{k+1}+c(k) \varepsilon \|\psi\|_{\dot{H}^{k+1}}^2
+\mathcal{C}(\|\psi\|_{\dot{H}^0},\|\psi\|_{\dot{H}^k},\varepsilon,k,|\beta|)\,.
\end{split}
\end{equation}
On the other by the inductive assumption 
we have
$$
\mathcal{C}(\|\psi\|_{\dot{H}^0},\|\psi\|_{\dot{H}^k},\varepsilon,k,|\beta|)
=\mathcal{C}(R_0,\dots,R_k,\varepsilon,k,|\beta|)\,.
$$
Hence, from \eqref{20140519:eq8}, choosing $\varepsilon \leq 1/ 4 c(k)$,
we get
$$
\frac{1}{4} \|\psi\|^{2}_{\dot{H}^{k+1}}\leq \mathcal{C}(R_0,\dots,R_k,R_{k+1},k+1,|\beta|)\,,
$$
thus proving the equation \eqref{eq:stability} and concluding the proof. 
\end{proof}

\section{Convergence of the Integrals of Motion}\label{sec:CON}

In this section we study the convergence of $G_{k,N}(\psi)$
defined in \eqref{eq:G_k} with respect to the Gaussian measure $\gamma_{k}$. 
The main result is given by the following 

\begin{proposition}\label{prop:conv-mes}
Let $k\geq2$ and $1\leq m\leq k$. Then $\int q_m(\psi_N)$ converges in measure to $\int q_m(\psi)$ w.r.t. the Gaussian measure $d \g_k$. Furthermore, if 
$1\leq m<k$, then $E_m(\psi_N)$ converges in measure to $E_m(\psi)$ w.r.t. $\g_k$.
\end{proposition}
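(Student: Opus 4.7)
The strategy is to split $\tint q_m(\psi_N)$ into a \emph{leading} quartic piece
\[
T_{m,N}:=\frac{(2m+1)i\beta}{2}\tint\bar\psi_N^{(m)}\psi_N^{(m-1)}\bar\psi_N\psi_N,
\]
and a remainder $R_{m,N}:=\tint q_m(\psi_N)-T_{m,N}$. By \eqref{eq:qkremainderRepres}, together with $\dd(h_{2m,j})=2m-j$ (Remark \ref{20141102:rem1}) and the description \eqref{eq:PTilde} of $\widetilde P$, $R_{m,N}$ is a finite linear combination of monomials $\tint\bar\psi_N^{(\alpha_1)}\cdots\bar\psi_N^{(\alpha_l)}\psi_N^{(\beta_1)}\cdots\psi_N^{(\beta_l)}$ in which either $l\geq 3$ and $\sum_i(\alpha_i+\beta_i)\leq 2m-2$, or $l=2$ and every individual order is at most $m-1\leq k-1$ (the class $\widetilde P$).

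First I would dispatch $R_{m,N}$ for every $1\leq m\leq k$. Applying Lemma \ref{20150518:lem1} through the homomorphism \eqref{Hom} bounds each monomial by a power of $\|\psi_N\|_{H^{k-1}}$, while a telescoping in the difference $\psi_N-\psi$ shows that each defines a continuous functional on $H^{k-1}(\T)$. Since the support of $\g_k$ is contained in $\bigcap_{\e>0}H^{k-1/2-\e}(\T)\hookrightarrow H^{k-1}(\T)$, and $\psi_N=P_N\psi\to\psi$ in $H^{k-1}(\T)$ for $\g_k$-a.e.~$\psi$, we obtain $R_{m,N}\to R_m(\psi)$ almost surely, hence in measure. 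The same argument, now invoking Lemma \ref{20150518:lem2}, dispatches $T_{m,N}$ for $m\leq k-1$, because in that range the highest derivative $m$ is still controlled by $\|\psi\|_{H^{k-1}}$.

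The delicate case is $T_{k,N}$, for which the $k$-th derivative is not $\g_k$-a.s.~defined. Expanding in Fourier modes $\psi=\sum_{n\in\Z}u_n e^{inx}$, with $\{u_n\}$ centered complex Gaussians satisfying $\E[\bar u_n u_m]=\delta_{nm}(1+n^{2k})^{-1}$, Plancherel gives
\[
T_{k,N}=\frac{(2k+1)i\beta}{2}\!\!\sum_{\substack{|n_j|\leq N\\ -n_1+n_2-n_3+n_4=0}}\!\!(-in_1)^k(in_2)^{k-1}\bar u_{n_1}u_{n_2}\bar u_{n_3}u_{n_4}.
\]
I would prove $\{T_{k,N}\}_N$ is Cauchy in $L^2(\g_k)$ by expanding $\E|T_{k,N}-T_{k,M}|^2$ with Wick's theorem for the resulting eighth Gaussian moment and enumerating its pair contractions. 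Each contraction produces Fourier sums of the schematic form $\sum_n n^{2k-a}(1+n^{2k})^{-2}$ with $a\geq 0$, which for $k\geq 2$ decay at least like $n^{-2k-2}$ and are therefore absolutely summable; the momentum constraint $-n_1+n_2-n_3+n_4=0$ eliminates one free frequency on each side of the product, so the tail mass over $\max|n_j|>M$ can be made arbitrarily small uniformly in $N\geq M$. This yields $L^2(\g_k)$ convergence, and in particular convergence in measure.

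The statement for $E_m$ with $m<k$ then follows immediately: $\|\psi\|_{\dot H^m}^2$ is finite $\g_k$-a.s.~because $m\leq k-1$, and $\|\psi_N\|_{\dot H^m}^2\to\|\psi\|_{\dot H^m}^2$ by monotone convergence, while the convergence of $\tint q_m(\psi_N)$ has just been established. The main obstacle is the bookkeeping of the Wick contractions in $\E|T_{k,N}-T_{k,M}|^2$: one must verify that every contraction scheme yields Fourier sums that remain absolutely summable despite the high-derivative factors $n_1^k$ and $n_2^{k-1}$, which requires exploiting both the parity/momentum conditions and the decay $\sigma_n\sim n^{-2k}$ of the Gaussian weights to compensate for these powers.
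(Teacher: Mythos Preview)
Your plan is essentially the paper's own proof: the decomposition into the leading quartic $T_{m,N}$ and a remainder handled by $H^{k-1}$-continuity plus telescoping is exactly Lemmas~\ref{20150519:lem2} and~\ref{20150519:lem1}, and the $L^2(\gamma_k)$ Wick argument for $T_{k,N}$ is Proposition~\ref{20140519:prop1}. Your treatment of $E_m$ for $m<k$ via the support of $\gamma_k$ is in fact cleaner than the paper's (somewhat cryptic) appeal to Proposition~\ref{MainPropSecondPart}.

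One caution on the Wick sketch. Your claim that ``each contraction produces Fourier sums of the schematic form $\sum_n n^{2k-a}(1+n^{2k})^{-2}$ \dots\ and are therefore absolutely summable'' is not correct as stated. The pairings in which the two copies of $T_{k,N}-T_{k,M}$ contract \emph{internally} (the paper's orbit $W_1$) leave all four frequencies free and produce, for instance, the factor $\sum_{n_1} n_1^{2k-1}/(1+|n_1|^{k})^2\sim\sum n_1^{-1}$, which is not absolutely summable. These terms survive only as zero because the summand is odd under $(n_1,n_2)\mapsto(-n_1,-n_2)$ while the index set is symmetric; the same oddness kills half of the $W_2$ pairings. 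You acknowledge parity at the end, but your heuristic that momentum conservation ``eliminates one free frequency on each side'' fails precisely for these internal contractions. When you carry out the computation, you will need to organize the $4!$ pairings into orbits under the symmetry that swaps the two $\bar\psi$'s (and the two $\psi$'s) in each factor, kill the odd terms by sign, and only then bound the survivors---which do give the $O(M^{-1})$ decay.
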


As a consequence, by composition and multiplication of continuos functions, we obtain 

\begin{corollary}\label{20130520:cor1}
The sequence $G_{k,N}(\psi)$ converges in measure,
with respect to $\g_k$, as $N\to\infty$, to
a function which we (already) denoted $G_k(\psi)$.
\end{corollary}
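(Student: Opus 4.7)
The plan is to decompose $\tint q_m$ according to \eqref{eq:qkremainderRepres} as $\tint q_m=\tint T_m+\tint L_m+\tint R_m$, where
\[
T_m:=\tfrac{(2m+1)i}{2}\beta\,\bar\psi^{(m)}\psi^{(m-1)}\bar\psi\psi,\qquad L_m:=\beta\sum_{p\in\widetilde P}c_{2m}(p)\,p,\qquad R_m:=\sum_{j=2}^{2m}\beta^j h_{2m,j},
\]
and to handle each summand separately. Only $\tint T_k$ involves a derivative of order $k$ and thus lies outside the scale of regularity naturally seen in the support of $\gamma_k$; the other summands will turn out to be continuous multilinear functionals on $H^{k-1}(\T)$, hence converge $\gamma_k$-almost surely.

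For the benign pieces I would invoke Remark \ref{20141102:rem1}: each monomial in $\widetilde P$ is a product of four factors with maximal derivative order at most $k-1$, and each $h_{2m,j}$ with $j\geq 2$ has polynomial degree $\geq 6$ and total differential order $\leq 2m-2$. Lemmas \ref{20150518:lem1} and \ref{20150518:lem2} then yield multilinear bounds of the form $|\tint L_m(\psi)|+|\tint R_m(\psi)|\lesssim_{m,|\beta|} 1+\|\psi\|_{H^{k-1}}^{C_m}$, and for $m<k$ the same estimate covers $\tint T_m$. Since $\gamma_k$ is supported on $\bigcap_{\e>0}H^{k-1/2-\e}(\T)\subset H^{k-1}(\T)$, the truncations satisfy $\psi_N\to\psi$ in $H^{k-1}(\T)$ $\gamma_k$-almost surely; continuity then gives pointwise a.s., and hence in measure, convergence of $\tint L_m(\psi_N)$, $\tint R_m(\psi_N)$, and of $\tint T_m(\psi_N)$ when $m<k$. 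The homogeneous Sobolev norm $\|\psi\|_{\dot H^m}^2$ is $\gamma_k$-a.s.\ finite and continuous on $H^{k-1}(\T)$ when $m<k$, so the statement for $E_m$ follows as well.

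It remains to treat $\tint T_k(\psi_N)$, the only piece involving $\psi^{(k)}$. Under $\gamma_k$ the Fourier coefficients $\psi_n$ are independent, centered, complex Gaussians with $\mathbb{E}[\bar\psi_n\psi_m]=\delta_{nm}(1+n^{2k})^{-1}$ and $\mathbb{E}[\psi_n\psi_m]=\mathbb{E}[\bar\psi_n\bar\psi_m]=0$. Plugging the Fourier expansion in (and using the identity $(-in_1)^k(in_2)^{k-1}=-i\,n_1^k n_2^{k-1}$) gives
\[
\tint T_k(\psi_N)=\tfrac{(2k+1)\beta}{2}\sum_{\substack{|n_j|\leq N\\ n_1+n_3=n_2+n_4}}n_1^k\, n_2^{k-1}\,\bar\psi_{n_1}\psi_{n_2}\bar\psi_{n_3}\psi_{n_4}.
\]
A short parity check on the two admissible one-copy Wick pairings shows $\mathbb{E}[\tint T_k(\psi_N)]=0$ for all $N$, so no renormalization of the mean is needed. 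I then plan to establish Cauchyness of $\{\tint T_k(\psi_N)\}$ in $L^2(\gamma_k)$ by expanding $\mathbb{E}\bigl[|\tint T_k(\psi_M)-\tint T_k(\psi_N)|^2\bigr]$ via Wick's theorem: each of the finitely many pairings of four $\bar\psi$'s with four $\psi$'s yields a product of propagators $(1+n^{2k})^{-1}$ that absorbs the high-frequency factor $|n_1|^k|n_2|^{k-1}$, and the two momentum-conservation constraints $n_1+n_3=n_2+n_4$ (one per copy) combined with the pairing constraints reduce the effective summation to $\Z^3$ with absolutely summable weight. The tail in which at least one free index exceeds $N$ vanishes as $N\to\infty$, giving the required Cauchyness.

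The main obstacle will be the power-counting on the potentially resonant contractions. The delicate case is when the leading index $n_1$ (carrying $|n_1|^k$) is paired with $n_2$ (carrying $|n_2|^{k-1}$) on the \emph{same} copy: this leaves only $|n_1|^{-1}$ decay after the single propagator $(1+n_1^{2k})^{-1}$, and by itself would produce a logarithmically divergent series. The way out that I anticipate is that the complementary between-copies pairings then pin two of the remaining indices, so the residual free momentum is essentially one-dimensional and the surviving integrand is odd in it under $n\mapsto -n$, causing the singular contribution to vanish identically, exactly as happened for $\mathbb{E}[\tint T_k(\psi_N)]$. Once $L^2(\gamma_k)$-convergence of $\tint T_k(\psi_N)$ is in hand, it yields convergence in measure; combined with the pointwise a.s.\ convergence of $\tint L_k(\psi_N)+\tint R_k(\psi_N)$ this gives convergence in measure of $\tint q_k(\psi_N)$ to a $\gamma_k$-measurable limit denoted $\tint q_k(\psi)$, and Corollary \ref{20130520:cor1} then follows at once by continuity of addition, multiplication, and composition with the bounded Lipschitz cutoffs $\chi_{R_m}$ and the exponential.
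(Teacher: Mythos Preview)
Your proposal is correct and follows essentially the same route as the paper: split off the single term involving $\psi^{(k)}$ and establish its $L^2(\gamma_k)$-Cauchyness via Wick's formula with the divergent contractions cancelling by parity (this is exactly Proposition \ref{20140519:prop1}), while the remaining pieces converge $\gamma_k$-a.s.\ by multilinear continuity on $H^{k-1}(\T)$ (Lemmas \ref{20150519:lem1}--\ref{20150519:lem2}), and then compose. The only small remark is that in the actual computation the parity cancellation occurs for the \emph{fully} within-copy pairings (the orbit $W_1=G$ in the paper's decomposition), and the relevant oddness is under the joint reflection $(n_1,n_2)\mapsto(-n_1,-n_2)$ rather than in a single ``one-dimensional residual'' index as your heuristic anticipates---but this is a detail your plan would uncover immediately upon execution.
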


We split the proof of Proposition \ref{prop:conv-mes} in several steps.

\begin{lemma}\label{20150519:lem2}
Let $k\geq2$, and let $\alpha_1\geq\alpha_2\geq\alpha_3\geq\alpha_4\geq0$
be such that $\alpha_1+\alpha_2+\alpha_3+\alpha_4=2k-1$.
For $\alpha_1=k-1$ and $\alpha_2,\alpha_3,\alpha_4\leq k-1$,
we have
$$
\lim_{N\to\infty}
\tint u_N^{(k-1)}u_N^{(\alpha_2)}u_N^{(\alpha_3)} u_N^{(\alpha_4)}
=\tint u^{(k-1)}u^{(\alpha_2)}u^{(\alpha_3)} u^{(\alpha_4)}
\, ,
$$
almost everywhere with respect to the measure $\g_k$.
\end{lemma}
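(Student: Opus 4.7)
The plan is to exploit the fact that $\g_k$ almost surely places $u$ in $H^{k-1}(\T)$, so the Fourier projections $u_N=P_Nu$ converge to $u$ in $H^{k-1}$, and then deduce the a.s.\ convergence of the integral from $H^{k-1}$-continuity of its quadrilinear integrand. First, I would verify the regularity assertion: the Fourier coefficients of a sample $u\sim\g_k$ are independent centered complex Gaussians with variance $(1+n^{2k})^{-1}$, so
\[
\E_{\g_k}\|u\|_{H^s}^2
=\sum_{n\in\Z}(1+|n|)^{2s}(1+n^{2k})^{-1}<\infty
\qquad\text{for every }s<k-\tfrac12.
\]
Hence $u\in H^{k-1/2-\varepsilon}(\T)$ $\g_k$-a.s., and in particular $\|u-u_N\|_{H^{k-1}}\to 0$ $\g_k$-almost surely.

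Next, I would realize the integrand as the diagonal evaluation of the quadrilinear form
\[
\Phi(v_1,v_2,v_3,v_4)
:=\tint v_1^{(k-1)}v_2^{(\alpha_2)}v_3^{(\alpha_3)}v_4^{(\alpha_4)},
\]
and adapt the argument of Lemma \ref{20150518:lem2}: reorder the factors so the two highest-order derivatives sit in two slots, integrate by parts inside the integral to bring the remaining two orders down to $\leq k-2$, then apply H\"older with the two top factors in $L^2$ and the bottom two in $L^\infty$, using $H^1\hookrightarrow L^\infty$. This yields the multilinear bound
\[
|\Phi(v_1,v_2,v_3,v_4)|\lesssim\prod_{i=1}^4\|v_i\|_{H^{k-1}}.
\]

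Finally, writing the difference as the telescoping sum
\[
\Phi(u_N,u_N,u_N,u_N)-\Phi(u,u,u,u)
=\sum_{j=1}^4\Phi(\underbrace{u_N,\dots,u_N}_{j-1},\,u_N-u,\,\underbrace{u,\dots,u}_{4-j}),
\]
and invoking the multilinear estimate yields
\[
\bigl|\Phi(u_N,u_N,u_N,u_N)-\Phi(u,u,u,u)\bigr|
\lesssim\|u_N-u\|_{H^{k-1}}\bigl(\|u\|_{H^{k-1}}+\|u_N\|_{H^{k-1}}\bigr)^3,
\]
which tends to $0$ $\g_k$-a.s.\ by the first step, giving the claimed pointwise a.s.\ convergence.

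The main technical issue is upgrading Lemma \ref{20150518:lem2} from a diagonal bound to a genuinely multilinear one: the integrations by parts in the diagonal proof redistribute derivatives across slots, so one must check that the H\"older--Sobolev scheme remains valid after reordering the mixed arguments $v_i$. This is routine because the combinatorial constraints $\alpha_1=k-1$ and $\sum\alpha_i=2k-1$ are preserved under IBP and force at most two slots to carry order $k-1$; and since the IBP is performed on the trigonometric polynomials $u_N$ (and then extended to $u\in H^{k-1}$ by density), no analytic subtlety arises.
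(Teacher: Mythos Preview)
Your approach is correct and essentially the same as the paper's: both establish $u_N\to u$ in $H^{k-1}$ $\g_k$-a.s., telescope the difference of the quartic integrals one slot at a time, and control each piece by H\"older together with $H^1\hookrightarrow L^\infty$. The paper simply writes the telescoping out by hand (splitting successively into $A_1,A_2$, then $B_1,B_2$, then $C_1,C_2$) rather than packaging it through an abstract quadrilinear form $\Phi$, and does not invoke any integration by parts.
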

\begin{proof}
We have
\begin{equation}\nonumber
 | \tint u_N^{(k-1)}u_N^{(\alpha_2)}u_N^{(\alpha_3)} u_N^{(\alpha_4)}
- \tint u^{(k-1)}u^{(\alpha_2)}u^{(\alpha_3)} u^{(\alpha_4)} | 
\leq A_{1} + A_{2},
\end{equation}
where 
\begin{equation}\nonumber
A_{1} := | \tint ( u_N^{(k-1)} - u^{(k-1)}) u_N^{(\alpha_2)}u_N^{(\alpha_3)} u_N^{(\alpha_4)} |,
\quad 
A_{2} := |  \tint  u^{(k-1)} ( u_N^{(\alpha_2)}u_N^{(\alpha_3)} u_N^{(\alpha_4)} - u^{(\alpha_2)}u^{(\alpha_3)} u^{(\alpha_4)}  ) |
\end{equation}
by using the embedding $H^{1}  \hookrightarrow L^{\infty}$ and $u_{N} \rightarrow u$ in $\dot{H}^{k-1}$,  $\g_k$-a.s. we
immediately see that $A_{1} \rightarrow 0$, $\g_k$-a.s..
Then we notice that
\begin{equation}\nonumber
A_{2}
\leq 
B_{1}  + B_{2}
\end{equation}
where
\begin{equation}\nonumber
B_{1} :=  |  \tint u^{(k-1)}  (u_N^{(\alpha_2)} - u^{(\alpha_2)}) u_N^{(\alpha_3)} u_N^{(\alpha_4)}  |,
\quad
B_{2} :=  |  \tint u^{(k-1)}  u^{(\alpha_2)} ( u^{(\alpha_3)} u^{(\alpha_4)}  - u_N^{(\alpha_3)} u_N^{(\alpha_4)} )  |
\end{equation}
and as before $B_{1} \rightarrow 0 $, $\g_k$-a.s..
We finally notice that
\begin{equation}\nonumber
B_{2} \leq C_{1}  + C_{2}
\end{equation}
where
\begin{equation}
C_{1} := | \tint u^{(k-1)}  u^{(\alpha_2)} ( u^{(\alpha_3)} - u_N^{(\alpha_3)}) u^{(\alpha_4)}   |,
\quad
C_{2} := | \tint u^{(k-1)}  u^{(\alpha_2)}  u_N^{(\alpha_3)} (u^{(\alpha_4)} - u_N^{(\alpha_4)})    |
\end{equation}
and as before both $C_{1}, C_{2} \rightarrow 0$, $\g_k$-a.s., which completes the proof. 

\end{proof}

\begin{lemma}\label{20150519:lem1}
For $k\geq2$, $l\geq5$, and $\alpha_{i} \geq 0$ ($i=1, \dots ,l$) 
such that
$0\leq\alpha_1+\dots+\alpha_l\leq 2k-2$, we have
$$
\lim_{N\to\infty}\tint u_N^{(\alpha_1)}\dots u_N^{(\alpha_l)}
=\tint u^{(\alpha_1)}\dots u^{(\alpha_l)}\,,
$$
almost everywhere with respect to the measure $\g_k$.
%
\end{lemma}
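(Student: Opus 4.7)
The plan is to run a telescoping argument, as in Lemma \ref{20150519:lem2}, now based on a multilinear version of the deterministic estimate in Lemma \ref{20150518:lem1}.

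First I would record the probabilistic input: $\g_k$-almost every $u$ lies in $H^{k-1}(\T)$ and satisfies $u_N\to u$ in $H^{k-1}(\T)$. The Fourier coefficients of $u$ are independent centred complex Gaussians with $\E[|u_n|^2]=(1+n^{2k})^{-1}$, so
$$\E\bigl[\|u\|_{H^{k-1}}^2\bigr]=\sum_{n\in\Z}\frac{(1+n^2)^{k-1}}{1+n^{2k}}\lesssim\sum_{n}n^{-2}<\infty,$$
which implies $u\in H^{k-1}$ $\g_k$-a.s. The monotone identity $\|u-u_N\|_{H^{k-1}}^2=\|u\|_{H^{k-1}}^2-\|u_N\|_{H^{k-1}}^2$ (Parseval plus orthogonality of the Fourier basis) then forces $\|u-u_N\|_{H^{k-1}}\to 0$ $\g_k$-a.s.

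Second, I would establish the following multilinear refinement of Lemma \ref{20150518:lem1}: for $v_1,\dots,v_l\in H^{k-1}(\T)$ with $l\geq 5$ and $\alpha_1+\dots+\alpha_l\leq 2k-2$,
$$\Bigl|\tint v_1^{(\alpha_1)}\cdots v_l^{(\alpha_l)}\Bigr|\lesssim\prod_{i=1}^l\|v_i\|_{H^{k-1}}.$$
The argument mirrors the proof of Lemma \ref{20150518:lem1}: iterating the identity $\tint(v_1\cdots v_l)'=0$ rewrites the integral as a finite linear combination of integrals of the same form but with each individual derivative order now $\leq k-1$ and the total order preserved. Since $l\geq 5$ and the total is $\leq 2k-2$, the pigeonhole observation $3(k-1)>2k-2$ (valid for $k\geq 2$) implies that at most two factors can carry $k-1$ derivatives, with the remaining $\geq 3$ carrying at most $k-2$. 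After sorting the factors by derivative size, H\"older with exponents $(2,2,\infty,\dots,\infty)$ and the embedding $H^1\hookrightarrow L^\infty$ yield the bound.

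Third, I would use the telescoping identity
$$\tint\prod_{i=1}^l u_N^{(\alpha_i)}-\tint\prod_{i=1}^l u^{(\alpha_i)}=\sum_{j=1}^l\tint u^{(\alpha_1)}\cdots u^{(\alpha_{j-1})}(u_N-u)^{(\alpha_j)}u_N^{(\alpha_{j+1})}\cdots u_N^{(\alpha_l)}$$
and apply the multilinear estimate term by term to obtain, for each $j$, the bound $\|u\|_{H^{k-1}}^{j-1}\|u-u_N\|_{H^{k-1}}\|u_N\|_{H^{k-1}}^{l-j}$, which vanishes as $N\to\infty$ $\g_k$-a.s.\ by Step 1.

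The main obstacle is the multilinear estimate in Step 2: integration by parts on distinct factors produces Leibniz expansions that multiply the number of terms, so one must check inductively that the procedure terminates in a H\"older-compatible configuration. The invariance of the total derivative order at each step, together with $l\geq 5$ and $k\geq 2$, is what makes this always possible, matching exactly the hypotheses of the lemma.
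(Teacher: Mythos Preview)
Your proposal is correct and follows essentially the same route as the paper: reduce the exponents via integration by parts to the H\"older-friendly configuration $\alpha_1,\alpha_2\leq k-1$, $\alpha_i\leq k-2$ for $i\geq 3$, then telescope and use $u_N\to u$ in $H^{k-1}$ $\g_k$-a.s.\ together with $H^1\hookrightarrow L^\infty$. The only organisational difference is that the paper performs the integration by parts \emph{before} telescoping (the same rearrangement applies identically to $\tint u_N^{(\alpha_1)}\cdots u_N^{(\alpha_l)}$ and to $\tint u^{(\alpha_1)}\cdots u^{(\alpha_l)}$, so the difference becomes a finite sum of differences already in the good range), whereas you telescope first and then need the genuinely multilinear bound on mixed products $v_1^{(\alpha_1)}\cdots v_l^{(\alpha_l)}$. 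Your order forces you to track the Leibniz blow-up in the number of terms during the integration-by-parts step, which you correctly flag as the main bookkeeping issue; the paper's order sidesteps this by keeping all factors equal during the reduction. Both terminate for the same reason you identify: the total derivative order is preserved, and $3(k-1)>2k-2$ for $k\geq 2$ caps the number of top-order factors at two.
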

\begin{proof}
%
As in the proof of Lemma \ref{20150518:lem1}, by reordering and integration by parts we can 
reduce to the case
\begin{equation}\nonumber
\alpha_1,\alpha_2\leq k-1\,,
\qquad\text{and}\qquad
\alpha_i\leq k-2\,,\quad i=3,\dots,l\,.
\end{equation}
Then the proof is the same of Lemma \ref{20150519:lem2}.

\end{proof}

Let $l\in\mb Z_+$. We denote by $S_l$ the group of permutations on $l$ elements.
In the sequel we use the following version of the Wick formula
(we refer to \cite{Ca73} or to \cite{guerra,Simon} for more details).
Let $(m_1,\dots,m_l,n_1,\dots,n_l)\in\mb Z^{2l}$.
Then we have
\be\label{eq:Wick-L}
\mathbb{E}\left[ \prod_{j=1}^l\bar\psi_{m_j}\psi_{n_j} \right]=
\sum_{\sigma \in S_l}\prod_{i=1}^l\frac{\d_{m_i,n_{\sigma(i)}}}{(1+|n_{\sigma(i)}|^{k})^2}\,.
\ee
Let us denote by 
\begin{equation}\label{Def:fNk}
f^k_N(\psi):=\tint \psi_N^{(k)}\bar\psi_N^{(k-1)}\bar\psi_N\psi_N \, .
\end{equation}

\begin{proposition}\label{20140519:prop1}
Let $k\geq2$.
The sequence $\{f^k_N\}_{N\in\mb Z_+}$ is a Cauchy sequence in $L^2_{\gamma_k}$, for all $s<k-1/2$. Indeed, for all $N>M\geq1$, we have
$$
\|f^k_M-f^k_N\|_{L^2_{\gamma_k}}
\lesssim \frac{1}{\sqrt{M}} \,.
$$
\end{proposition}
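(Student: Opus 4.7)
The plan is to compute $\|f_M^k-f_N^k\|_{L^2_{\gamma_k}}^2$ explicitly using the Wick formula \eqref{eq:Wick-L} and bound the result using the decay of the Gaussian covariances. First I would write $f_N^k$ in the Fourier basis. Expanding $\psi=\sum_n \psi_n e^{inx}$ and using the normalisation $\tint=\frac1{2\pi}\int_\T$, one finds
$$
f_N^k(\psi) = i\!\!\sum_{\substack{|n_j|\leq N,\, j=1,\dots,4\\ n_1+n_4 = n_2+n_3}}\!\! n_1^k\, n_2^{k-1}\, \psi_{n_1}\bar\psi_{n_2}\bar\psi_{n_3}\psi_{n_4},
$$
where under $\g_k$ the $\{\psi_n\}$ are independent complex Gaussians with $\E{\bar\psi_m\psi_n}=\d_{mn}/(1+n^{2k})$.

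I would then check that $\E{f_N^k}=0$ for all $N$. By \eqref{eq:Wick-L}, the fourth moment $\E{\psi_{n_1}\bar\psi_{n_2}\bar\psi_{n_3}\psi_{n_4}}$ splits into two pairings which, substituted into $f_N^k$, produce either a sum $\sum_{n_1} n_1^{2k-1}/(1+n_1^{2k})$ (odd summand, hence $0$) or a product $\big(\sum_{n_1} n_1^k/(1+n_1^{2k})\big)\cdot\big(\sum_{n_2} n_2^{k-1}/(1+n_2^{2k})\big)$; whatever the parity of $k$, exactly one of the two single-variable sums has an odd summand and vanishes. Hence $f_N^k$ is centered.

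Next I would expand $\|f_N^k-f_M^k\|^2_{L^2_{\gamma_k}}=\E{|f_N^k-f_M^k|^2}$ and apply \eqref{eq:Wick-L} to the resulting eighth moment. This gives a finite sum (at most $4!=24$ terms) indexed by pairings $\pi$ between the four $\psi$-factors $\{\psi_{n_1},\psi_{n_4},\psi_{m_2},\psi_{m_3}\}$ and the four $\bar\psi$-factors $\{\bar\psi_{n_2},\bar\psi_{n_3},\bar\psi_{m_1},\bar\psi_{m_4}\}$, of contributions $C_\pi(M,N)$. A pairing is \emph{fully internal} if $\{\psi_{n_1},\psi_{n_4}\}$ match inside $\{\bar\psi_{n_2},\bar\psi_{n_3}\}$ (and similarly for the $m$-side); the total contribution of such pairings factors as $\E{f_N^k-f_M^k}\cdot\overline{\E{f_N^k-f_M^k}}=0$ by the centering. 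Only the \emph{crossing} pairings, in which at least one $n$-index is matched with an $m$-index, need to be bounded.

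For each crossing $\pi$ I would carry out two estimates. Resolving the equalities imposed by $\pi$ together with $n_1+n_4=n_2+n_3$ and $m_1+m_4=m_2+m_3$ leaves a few free integer variables and turns $C_\pi$ into a series with summand of the form $n_1^k n_2^{k-1}\, m_1^k m_2^{k-1}\prod_j(1+\l_j^{2k})^{-1}$, restricted to the region where at least one $|n_i|>M$. The essential point is the asymmetry $k$ versus $k-1$ in the coefficient of $f_N^k$: when the ``degree-$k$'' variables $n_1,m_1$ are pair-matched through their covariance they produce factors $\l^{2k}/(1+\l^{2k})=O(1)$; the ``degree-$(k-1)$'' variables $n_2,m_2$ produce $\l^{2k-2}/(1+\l^{2k})\lesssim(1+\l^2)^{-1}$; mixed matchings yield $\l^{2k-1}/(1+\l^{2k})\lesssim(1+|\l|)^{-1}$ which combine with another covariance to give at least $(1+\l)^{-2}$; and every un-matched covariance is $\lesssim(1+\l^{2k})^{-1}$, summable for $k\geq 2$. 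This ensures absolute convergence of every $C_\pi$.

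Finally I would localise to $\max_i|n_i|>M$ to read off the rate. The slowest decay is $\sum_{|n|>M}(1+n^2)^{-1}\lesssim 1/M$, arising when the index forced to be $>M$ carries the $(1+\l^2)^{-1}$ factor coming from a matching of two ``$(k-1)$-variables''; all other cases either carry a summable $(1+\l^{2k})^{-1}$ (producing $O(M^{-(2k-1)})$) or, via the momentum constraint, inherit the largeness from an already-considered index. Summing over the finitely many pairings yields $\|f_N^k-f_M^k\|^2_{L^2_{\gamma_k}}\lesssim 1/M$, which is the claim. The hardest part is the systematic bookkeeping in the last two steps: the asymmetry $k$ vs.\ $k-1$ in the coefficient of $f_N^k$ is what both guarantees absolute convergence and fixes the sharp rate $1/\sqrt{M}$, and this must be traced through the list of mixed pairings one by one.
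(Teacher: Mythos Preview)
Your overall strategy matches the paper's: expand in Fourier modes, apply the Wick formula to $\mathbb{E}\big[|f_N^k-f_M^k|^2\big]$, and separate the $4!$ pairings into fully internal ones (which vanish by your centering argument, equivalently the paper's parity argument for its class $W_1$) and crossing ones to be bounded individually. However, there is a genuine gap in your treatment of the crossing terms.

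Your claim that every crossing contribution $C_\pi$ is absolutely convergent, and in particular that a mixed matching ``combines with another covariance to give at least $(1+\lambda)^{-2}$'', is not correct. A mixed matching (a degree-$k$ index paired with a degree-$(k-1)$ index) is necessarily \emph{internal} to one block: in your notation the only possibilities are $n_1\leftrightarrow n_2$ or $m_2\leftrightarrow m_1$. Take for instance $n_1\leftrightarrow n_2$; the momentum constraint $n_1+n_4=n_2+n_3$ then forces $n_3=n_4$ but leaves the variable $\lambda:=n_1=n_2$ completely free, carrying the factor $\lambda^{2k-1}/(1+\lambda^{2k})\sim|\lambda|^{-1}$ alone, with no second covariance attached to the same $\lambda$. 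This is not absolutely summable. The paper (its $W_2$ case, permutations fixing $1$ or $3$) disposes of exactly these partially internal crossing terms not by a bound but by observing that the summand is odd under $\lambda\mapsto-\lambda$ while the summation range is symmetric, so the contribution vanishes identically. You need this additional parity argument, beyond the centering $\mathbb{E}[f_N^k]=0$, to handle such pairings; once it is in place, the remaining crossing terms (where neither degree-$k$ index is matched to a degree-$(k-1)$ index) are indeed absolutely summable and your degree-counting heuristic then correctly yields the rate $1/M$.
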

\begin{proof}
By an explicit computation we get
$$
f^k_N(\psi)
= i \sum_{A_N}n_1^km_1^{k-1}
\bar\psi_{m_1}\bar\psi_{m_2}\psi_{n_1}\psi_{n_2}\,,
$$
where
$$
A_N:=\{(m_1,m_2,n_1,n_2)\in\mb Z^4\mid
|m_i|,|n_i|\leq N\,,\, n_1+n_2 = m_1+m_2\}\,.
$$
According to our convention, the labels $m_i$ (respectively $n_i$) are associated to the Fourier coefficients of $\bar\psi$ (respectively $\psi$). Moreover we define
$$
A_{N,M}:=\{(m_1,m_2,n_1,n_2)\in A_N\,,\,\max(|m_1|,|m_2|,|n_1|,|n_2|)>M\}\,.
$$
Thus
\begin{equation}\label{takingsquare}
f^k_N(\psi)-f^k_M(\psi)=
i  \sum_{A_{N,M}}n_1^km_1^{k-1}
\bar\psi_{m_1}\bar\psi_{m_2}\psi_{n_1}\psi_{n_2}\,.
\end{equation}
Taking the square of equation \eqref{takingsquare} we get
$$
|f^k_N(\psi)-f^k_M(\psi)|^2=\sum_{A_{N,M}\times A'_{N,M}}
n_1^km_1^{k-1}m_3^kn_3^{k-1}
\prod_{j=1}^4\bar\psi_{m_j}\psi_{n_j}\, ,
$$
where
\begin{align*}
&A'_N:=\{(m_3,m_4,n_3,n_4)\in\mb Z^4\mid
|m_i|,|n_i|\leq N\,,\,m_3+m_4=n_3+n_4\}\, ,
\\
&A'_{N,M}:=\{(m_3,m_4,n_3,n_4)\in A'_N\mid \max(|m_3|,|m_4|,|n_3|,|n_4|)>M\}\,.
\end{align*}
By definition of the measure $\g_k$ we have
\be\label{eq:|f_n-f_m|^2Wick}
\|f^k_M-f^k_N\|_{L^2_{\gamma_k}}^2=\sum_{A_{N,M}\times A'_{N,M}}
n_1^km_1^{k-1}m_3^kn_3^{k-1}
\mathbb{E}\left[ \prod_{j=1}^4\bar\psi_{m_j}\psi_{n_j} \right]\,.
\ee
By using the Wick formula \eqref{eq:Wick-L} with $l=4$, equation \eqref{eq:|f_n-f_m|^2Wick}
becomes
\be\label{eq:Wick4}
\|f^k_M-f^k_N\|_{L^2_{\gamma_k}}^2=\sum_{A_{N,M}\times A'_{N,M}}
n_1^km_1^{k-1}m_3^kn_3^{k-1}
\sum_{\sigma \in S_4}\prod_{i=1}^4\frac{\d_{m_i,n_{\sigma(i)}}}{(1+|n_{\sigma(i)}|^{k})^2}\,.
\ee

Let us consider the subgroup $G=\{1,(12),(34),(12)(34)\}\subset S_4$
and its action on $S_4$ by left multiplication.
For $X\subset S_4$, we denote by $G\cdot X=\{gx\mid g\in G,x\in X\}$ the orbit of the subset $X$.
We have the following partition of $S_4=W_1\cup W_2\cup W_3$, where
$W_1:=G\cdot\{1\}=G$, $W_2:=G\cdot\{(13),(14),(23),(24)\}$
and $W_3:=G\cdot \{(13)(24)\}$.
%
Hence, we can further rewrite equation \eqref{eq:Wick4} as follows:
\be\label{eq:Wick5}
\|f^k_M-f^k_N\|_{L^2_{\gamma_k}}^2
=\sum_{i=1}^3\sum_{A_{N,M}^{i}}\sum_{\sigma\in W_i} \frac{n_{1}^{k} n_{\s(1)}^{k-1}n_3^{k-1}n_{\s(3)}^k}{\prod_{j=1}^4(1+|n_j|^{k})^2}
\,,
\ee
where the subsets of indices $A^{i}_{N,M}$ will be presented case by case.

We consider the three contributions to the sum in \eqref{eq:Wick5} separately.

\subsubsection*{First case: \texorpdfstring{$i=1$}{i=1}}
We have
$$
A^1_{N,M}
=
\{(n_1,n_2,n_3,n_4)\in\mb Z^4\mid |n_i|\leq N\,,\max(|n_1|,|n_2|)> M\,,\max(|n_3|,|n_4|)> M\}
\,,
$$
and the contribution to the sum in \eqref{eq:Wick5} is
\be\label{primotermine}
\sum_{A_{N,M}^1}\left(
\frac{n_1^{2k-1}n_3^{2k-1}}{\prod_{j=1}^4(1+|n_j|^k)^2}
+\frac{n_1^{k}n_2^{k-1}n_3^{2k-1}}{\prod_{j=1}^4(1+|n_j|^k)^2}
+\frac{n_1^{2k-1}n_3^{k-1}n_4^{k}}{\prod_{j=1}^4(1+|n_j|^k)^2}
+\frac{n_1^{k}n_2^{k-1}n_3^{k-1}n_4^k}{\prod_{j=1}^4(1+|n_j|^k)^2}
\right)
\,.
\ee
The sum in \eqref{primotermine} is zero. In fact, all the functions involved in the sum are odd functions
with respect to the transformation $n_1\to-n_1$, $n_2\to-n_2$ while
the index set $A^1_{N,M}$ is invariant.

\subsubsection*{Second case: \texorpdfstring{$i=2$}{i=2}}
In this case we have
$$
A^2_{N,M}
=\{(n_1,n_2,n_3)\in\mb Z^3\mid |n_i|\leq N\,,\max(|n_1|,|n_2|)>M,\,,\max(|n_1|,|n_3|)> M\}.
$$
Similarly to the previous case,
the contribution in the sum \eqref{eq:Wick5} corresponding to a permutation $\sigma\in W_2$
which fixes $1$ (respectively $3$) is zero since the summand is odd with respect to the
transformation $n_1\to-n_1$ (respectively $n_3\to-n_3$) while the index set $A^2_{N,M}$ is invariant.
The summands corresponding to the remaining elements in $W_2$ have the following form
\be\label{4.15}
\sum_{A^2_{N,M}} \frac{n_1^{a_1}}{(1+|n_1|^{k})^4}\frac{n_2^{a_2}}{(1+|n_2|^{k})^2}\frac{n_3^{a_3}}{(1+|n_1|^{k})^2}
\ee
where $a_2,a_3\in \{0,k-1,k\}$, $a_1+a_2+a_3=4k-2$ (hence $2k-2\leq a_{1}\leq4k-2$). So, by a straightforward 
computation, we have (we remind that we are considering $k\geq2$)
 \begin{equation}\label{eq:contrW2}
\eqref{4.15}
\lesssim 
\sum_{\substack{\max(|n_1|,|n_2|)>M, \\ \max(|n_1|,|n_3|)> M}}
\frac{n_1^{a_1}}{(1+|n_1|^{k})^2}\frac{n_2^{a_2}}{(1+|n_2|^{k})^2}\frac{n_3^{a_3}}{(1+|n_3|^{k})^2}
\lesssim \frac{1}{M}.
\end{equation}

\subsubsection*{Third case: \texorpdfstring{$i=3$}{i=3}}
We have
$$
A^3_{N,M}
=
\{(n_2,n_3,n_4)\in\mb Z^3\mid |n_i|\leq N\,,\max(|n_3+n_4-n_2|,|n_2|,|n_3|,|n_4|)> M\}
\,.
$$
Two summands in \eqref{eq:Wick5}, corresponding to the elements $(13)(24)$
and $(1423)$ in $W_3$, have respectively the following form
\bea
&&\sum_{A^3_{N,M}} \frac{(n_3+n_4-n_2)^{2k}}{(1+|n_3+n_4-n_2|^{k})^2}\frac{1}{(1+|n_2|^{k})^2}\frac{n_3^{2(k-1)}}{(1+|n_3|^{k})^2}\frac{1}{(1+|n_4|^{k})^2}\label{(13)(24)}\,,\\
&&\sum_{A^3_{N,M}} \frac{(n_3+n_4-n_2)^{2k}}{(1+|n_3+n_4-n_2|^{k})^2}\frac{1}{(1+|n_2|^{k})^2}\frac{n_3^{k-1}}{(1+|n_3|^{k})^2}\frac{n_4^{k-1}}{(1+|n_4|^{k})^2}\label{(1423)}\,.
\eea
We can bound these terms as
\bea\label{eq:contrW3-1}
\eqref{(13)(24)}&\lesssim&
\sum_{\max (|n_2|,|n_3|,|n_4|) > M/3}
\frac{1}{(1+|n_2|^{k})^2}\frac{n_3^{2(k-1)}}{(1+|n_3|^{k})^2}\frac{1}{(1+|n_4|^{k})^2}
\lesssim
\frac{1}{M}\,,\label{eq:bound-(13)(24)}\\
\eqref{(1423)}&\lesssim&
\sum_{\max (|n_2|,|n_3|,|n_4|) > M/3}
\frac{1}{(1+|n_2|^{k})^2}\frac{n_3^{k-1}}{(1+|n_3|^{k})^2}\frac{n_4^{k-1}}{(1+|n_4|^{k})^2}
\lesssim
\frac{1}{M^{k}}\,.\label{eq:bound-(1423)}
\eea
The other two terms correspond to $(14)(23)$ and $(1324)$. They can be estimated respectively as
\bea
\sum_{A^{3}_{N,M}} \frac{(n_3+n_4-n_2)^{k}}{(1+|n_3+n_4-n_2|^{k})^2}
\frac{n_2^{k}}{(1+|n_2|^{k})^2}\frac{n_3^{k-1}}{(1+|n_3|^{k})^2}\frac{n_4^{k-1}}{(1+|n_4|^{k})^2}&\lesssim&\frac{1}{M^{k-1}}\label{(14)(23)}
\,,\\
\sum_{A^{3}_{N,M}} \frac{(n_3+n_4-n_2)^{k}}{(1+|n_3+n_4-n_2|^{k})^2}
\frac{n_2^{k-1}}{(1+|n_2|^{k})^2}\frac{n_3^{2(k-1)}}{(1+|n_3|^{k})^2}\frac{1}{(1+|n_4|^{k})^2}&\lesssim&\frac{1}{M^{k-1}}\label{(1324)}
\,.
\eea
%
In conclusion, recollecting all the contributions given by \eqref{eq:contrW2} and (\ref{eq:bound-(13)(24)}-\ref{(1324)}) , we see immediately that, for $k\geq 2$, we have
\be\label{eq:conv-L2-last}
\|f^k_N-f^k_M\|^2_{L^2_{\gamma_k}} \lesssim \frac{1}{M}\,,
\ee
thus concluding the proof.
\end{proof}
We can extend the estimate \eqref{eq:conv-L2-last} to all the $L^{p}(H^s,\g_k)$-norms, with $p\geq1$.
For $1\leq p< 2$ it is trivial, since $\g_k$ is a probability measure.
For $p>2$ we have to use the properties of the Gaussian measure. For any $r-$linear form $\Psi^r(\psi)$, a direct application of the Nelson hypercontractivity inequality \cite{N73}, as shown for instance in \cite[Theorem I.22]{Simon}, 
yields
$$
\|\Psi^r\|_{L^p_{\g_k}}\leq (p-1)^{\frac r2}\|\Psi^r\|_{L^2_{\g_k}}\,.
$$
This leads us to the following
\begin{corollary}\label{cor:Hyp}
For all $p\geq2$ and $N>M\geq1$, we have
\be\label{eq:Lp-Cauchy}
\|f^k_M(\psi)-f^k_N(\psi)\|_{L^p(H^s,\g_k)}\lesssim\frac {(p-1)^2}{\sqrt{M}}\,.
\ee
\end{corollary}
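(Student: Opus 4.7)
The estimate follows immediately from combining Proposition \ref{20140519:prop1} with the Nelson hypercontractivity bound recalled just before the statement. The plan has essentially one observation plus two invocations.

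First, I would point out that, as the explicit representation
$$
f^k_N(\psi) - f^k_M(\psi) = i \sum_{A_{N,M}} n_1^k\, m_1^{k-1}\, \bar\psi_{m_1}\bar\psi_{m_2}\psi_{n_1}\psi_{n_2}
$$
derived in the proof of Proposition \ref{20140519:prop1} makes manifest, the difference $f^k_N - f^k_M$ is a homogeneous polynomial of degree $r=4$ in the centered complex Gaussian variables $\{\psi_n,\bar\psi_n\}_{n\in\mb Z}$ associated with the measure $\g_k$. In particular, viewing real and imaginary parts as independent real Gaussians, it lies in the fourth Wiener chaos, which is precisely the setting in which the cited form of the hypercontractivity inequality applies.

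Second, I would apply the inequality $\|\Psi^r\|_{L^p_{\g_k}} \leq (p-1)^{r/2}\,\|\Psi^r\|_{L^2_{\g_k}}$ quoted in the paragraph preceding the corollary, with $r=4$, so that
$$
\|f^k_N - f^k_M\|_{L^p_{\g_k}} \leq (p-1)^{2}\, \|f^k_N - f^k_M\|_{L^2_{\g_k}}.
$$

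Third, I would substitute the $L^2$ bound established in Proposition \ref{20140519:prop1}, namely $\|f^k_N - f^k_M\|_{L^2_{\g_k}} \lesssim M^{-1/2}$, to obtain
$$
\|f^k_N - f^k_M\|_{L^p_{\g_k}} \lesssim \frac{(p-1)^2}{\sqrt{M}},
$$
which is the claimed inequality. Since the argument is a one-line combination of two already-established results, there is no genuine obstacle here; the only thing to check is that the homogeneity degree in the Gaussian variables is $r=4$, which is immediate from the explicit formula displayed above.
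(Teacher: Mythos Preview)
Your proposal is correct and follows exactly the paper's approach: the paper simply invokes the hypercontractivity inequality $\|\Psi^r\|_{L^p_{\g_k}}\leq (p-1)^{r/2}\|\Psi^r\|_{L^2_{\g_k}}$ with $r=4$ (since $f^k_N-f^k_M$ is a $4$-linear form in the Gaussian variables) and then plugs in the $L^2$ bound from Proposition~\ref{20140519:prop1}. Your additional remark that the explicit Fourier expansion makes the degree $r=4$ transparent is exactly the observation the paper leaves implicit.
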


\begin{corollary}\label{20130520:cor2}
Let $k\geq2$, then
$\tint q_{k,2k-1}(\psi_N)$ converges in measure to $\tint q_{k,2k-1}(\psi)$,
w.r.t. $\g_k$.
\end{corollary}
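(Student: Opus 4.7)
The quantity $\tint q_{k,2k-1}$ is the differential-order $2k-1$ part of $\tint q_k$, namely, by \eqref{eq:qkremainderRepres} and \eqref{Ziurig2},
\[
\tint q_{k,2k-1} = \frac{(2k+1)i}{2}\beta \tint \bar\psi^{(k)}\psi^{(k-1)}\bar\psi\psi \;+\; \beta \sum_{p\in\widetilde P} c_{2k}(p) \tint p.
\]
The plan is to split the convergence problem along this natural decomposition and treat the two groups with different tools.

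The leading term is exactly $\frac{(2k+1)i\beta}{2}\, f^k_N(\psi)$ in the notation of \eqref{Def:fNk}. Proposition \ref{20140519:prop1} shows $\{f^k_N\}$ is Cauchy, hence convergent, in $L^2_{\g_k}$, and $L^2$ convergence implies convergence in measure. This disposes of the first summand.

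For the remaining terms, fix $p \in \widetilde P$. By \eqref{eq:PTilde} applied with $n = k$, the image $\widetilde p$ under the homomorphism \eqref{Hom} has the form $u^{(k-1)} u^{(n_1)} u^{(n_2)} u^{(n_3)}$ with $n_1 + n_2 + n_3 = k$ and $0 \leq n_3 \leq n_2 \leq n_1 \leq k-1$. Setting $\alpha_1 = k-1$ and $\alpha_{j+1} = n_j$ for $j=1,2,3$, we have $\alpha_1 + \alpha_2 + \alpha_3 + \alpha_4 = 2k-1$ and $\alpha_j \leq k-1$ for every $j$, so the integer data of $p$ fit exactly the hypothesis of Lemma \ref{20150519:lem2}. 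The proof of that lemma relies only on the embedding $H^1 \hookrightarrow L^\infty$ and on the $\g_k$-a.s.\ convergence $u_N \to u$ in $\dot H^{k-1}$, so it carries over without modification to the complex setting where some factors are $\psi_N$ and others $\bar\psi_N$ (both having the same regularity under $\g_k$). This yields $\tint p(\psi_N) \to \tint p(\psi)$ $\g_k$-almost surely, and therefore in measure; summing over the finite set $\widetilde P$ preserves convergence in measure.

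Combining the two contributions gives the claim. The genuinely technical step is Proposition \ref{20140519:prop1}, whose proof via the Wick expansion and the parity-cancellation of the leading contractions required real work; the present corollary is by comparison a bookkeeping exercise, verifying that every monomial piece of $q_{k,2k-1}$ either equals $f^k_N$ up to a constant or falls strictly within the scope of Lemma \ref{20150519:lem2}.
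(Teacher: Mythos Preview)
Your proof is correct and follows exactly the same route as the paper's own proof, which simply cites Corollary \ref{20141102:cor2} for the explicit form of $\tint q_{k,2k-1}$, Proposition \ref{20140519:prop1} for the leading term, and Lemma \ref{20150519:lem2} for the $\widetilde P$ terms. Your additional verification that each $p\in\widetilde P$ satisfies the index hypotheses of Lemma \ref{20150519:lem2}, and your remark on transferring the real-variable lemma to mixed $\psi,\bar\psi$ monomials, are useful elaborations but do not change the strategy.
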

\begin{proof}
It follows by the explicit form of $\int q_{k,2k-1}$ given in Corollary \ref{20141102:cor2} and by Proposition \ref{20140519:prop1} and Lemma \ref{20150519:lem2}.
\end{proof}

Finally we can prove Proposition \ref{prop:conv-mes}. 
\begin{proof}[Proof of Proposition \ref{prop:conv-mes}]
The explicit form of $\tint q_k$ given by Corollary \ref{20141102:cor2}, 
Lemma \ref{20150519:lem1} and Corollary \ref{20130520:cor2} imply that $\tint q_m(\psi_N)$ 
converges in measure to $\tint q_m(\psi)$ w.r.t. $\g_k$, for $1\leq m\leq k$, $k\geq2$. 
In addition, Proposition \ref{MainPropSecondPart} ensures that as long as $1\leq m<k$
we have $\|\psi_N\|_{H^{m}}\leq C$ $N$-uniformly, thereby it converges to $\|\psi\|_{H^{m}}$ 
 a.e. w.r.t. $\g_k$.
\end{proof}

\section{Proof of Theorem \ref{thm:k_even}}\label{sec:fin}

In this section we conclude the proof of Theorem \ref{thm:k_even}.
First, we state a useful technical lemma that we borrow from \cite[Proposition 4.5]{TzPTRF}. We report the proof for the sake of completeness:
\begin{lemma}\label{lem:5.1}
Let $(\Omega, \mathcal{S},\mu)$ a finite measure space. If there are $C,r>0$, and an integer $p_0>0$,
such that for every $p\geq p_0$ we have 
$$
\|F\|_p\leq Cp^r\,,
$$
then there exist $0<\d<r e^{-1}$ and a constant $L=L(r,\d,p_0)$ such that 
\be\label{eq:exp-p-bound}
\int_{\Omega} d\mu \exp\left[\d\left(\frac{|F|}{C}\right)^{\frac1r}\right]\leq L.
\ee
\end{lemma}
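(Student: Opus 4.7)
The plan is to expand the exponential as a Taylor series, integrate term by term using the hypothesis, and identify the radius of convergence via Stirling's formula.

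First I would write
\begin{equation}\nonumber
\int_{\Omega}\exp\!\left[\delta\left(\tfrac{|F|}{C}\right)^{1/r}\right]d\mu
=\sum_{n=0}^{\infty}\frac{\delta^{n}}{n!\,C^{n/r}}\int_{\Omega}|F|^{n/r}d\mu
=\sum_{n=0}^{\infty}\frac{\delta^{n}}{n!\,C^{n/r}}\,\|F\|_{n/r}^{n/r}\,,
\end{equation}
where the interchange is justified by monotone convergence. Split the sum according to whether $n/r\geq p_{0}$ or not. For the finitely many indices with $n/r<p_{0}$, the contribution is bounded by a constant $L_{1}=L_{1}(r,p_{0},\mu(\Omega))$ coming from $\|F\|_{n/r}\leq \mu(\Omega)^{1/(n/r)-1/p_{0}}\|F\|_{p_{0}}$ (Hölder on a finite measure space) together with the assumption for $p=p_{0}$.

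For the tail $n\geq r p_{0}$ I would apply the hypothesis $\|F\|_{n/r}\leq C\,(n/r)^{r}$, obtaining
\begin{equation}\nonumber
\|F\|_{n/r}^{n/r}\leq C^{n/r}(n/r)^{n}\,,
\qquad\text{so}\qquad
\frac{\delta^{n}}{n!\,C^{n/r}}\|F\|_{n/r}^{n/r}\;\leq\;\frac{1}{n!}\Bigl(\frac{\delta}{r}\Bigr)^{\!n}n^{n}\,.
\end{equation}
By Stirling, $n!\geq \sqrt{2\pi n}\,(n/e)^{n}$, hence the general term is bounded by $(2\pi n)^{-1/2}(\delta e/r)^{n}$. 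Since $\delta<r e^{-1}$ by hypothesis, we have $\delta e/r<1$ and the geometric-type series converges, say to $L_{2}=L_{2}(r,\delta)$.

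Putting $L:=L_{1}+L_{2}$ yields \eqref{eq:exp-p-bound}. The only step that requires a little care is the treatment of the low-index terms $n/r<p_{0}$, where the hypothesis does not apply directly; this is handled by Hölder's inequality, exploiting the finiteness of $\mu$. The rest is the standard Stirling-based computation that pins down the sharp constant $re^{-1}$.
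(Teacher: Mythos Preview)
Your proof is correct and follows essentially the same approach as the paper's: expand the exponential, interchange sum and integral, split at $n=rp_0$, use the hypothesis on the tail, and bound the finitely many low-index terms via H\"older on a finite measure space. You are in fact slightly more explicit than the paper in invoking monotone convergence for the interchange and Stirling's formula for the convergence criterion $\delta<r e^{-1}$; the paper treats these as understood. One minor slip: your constant $L_1$ should also depend on $\delta$ (and $C$ cancels), since the finite sum carries the factors $\delta^n/n!$.
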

\begin{proof}
We expand
$$
\exp\left[\d\left(\frac{|F|}{C}\right)^{\frac1r}\right]
=\sum_{n\in\mb Z_+} \frac{\d^n}{n!}\left(\frac{|F|}{C}\right)^{n/r}.
$$
Thus
\bea
\int_{\Omega} d\mu \exp\left[\d\left(\frac{|F|}{C}\right)^{\frac1r}\right]&=&\int_{\Omega} dx \sum_{n\in\mb Z_+} \frac{\d^n}{n!}\left(\frac{|F|}{C}\right)^{n/r}\nn\\
&=&\sum_{n\in\mb Z_+} \frac{\d^n}{n!} \frac{\|F\|_{n/r}^{n/r}}{ C^{n/r}}\nn\\
&\leq& \sum_{n< p_0 r} \frac{\d^n}{n!} \frac{\|F\|_{n/r}^{n/r}}{ C^{n/r}}+ \sum_{n\geq p_0r} \frac{\d^n}{n!} \left(\frac nr\right)^n\nn\\
&=:& \sum_{n< p_0r} \frac{\d^n}{n!} \frac{\|F\|_{n/r}^{n/r}}{ C^{n/r}} +L_1(r,\d,p_0)\nn,
\eea
where the constant $L_1(r,\d)$ is finite for $\d<r e^{-1}$. For the finite sum we readily have
$$
\|F\|_{n/r}^{n/r}\leq \|F\|^{n/r}_{p_0}\leq C^{n/r}p_0^n,
$$
hence
$$
\sum_{n< p_0r} \frac{\d^n}{n!} \frac{\|F\|_{n/r}^{n/r}}{ C^{n/r}}\leq\sum_{n< p_0r} \frac{\d^n}{n!} p_0^n=:L_2(r,\d,p_0).
$$
The constant $L_2$ is always finite, so we can set $L=L_1+L_2$ and the assert follows.
\end{proof}

\begin{remark}
The exponent $1/r$ in (\ref{eq:exp-p-bound}) is optimal: the formula remains valid for each $\a\leq 1/r$ and fails otherwise.
\end{remark}

By using Lemma \ref{lem:5.1} and Proposition \ref{20140519:prop1} we can deduce that we have a sub-exponential tail for the convergence in probability of the Cauchy sequence $f^k_N$ defined in
equation \eqref{Def:fNk}.

\begin{lemma}\label{lemma:Cauchy-in-P}
Let $N > M \geq 1$ be integer numbers and $f^k_N$ defined as in \eqref{Def:fNk}. 
Then for any $\l>0$ and $k\geq2$ we have
\be\label{eq:Cauchy-in-P}
\g_k\left(|f^k_N-f^k_M|\geq\l^2 \right)
\lesssim \exp \left( -\frac{\sqrt{2}}{3}\l M^{1/4} \right) .
\ee
\end{lemma}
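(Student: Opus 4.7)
The plan is to combine the hypercontractive $L^p$-estimate of Corollary \ref{cor:Hyp} with the exponential-moment Lemma \ref{lem:5.1}, and then conclude by Markov's inequality.

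First, recall from Corollary \ref{cor:Hyp} that for every $p \geq 2$
$$
\|f^k_N - f^k_M\|_{L^p(\gamma_k)} \lesssim \frac{(p-1)^2}{\sqrt{M}} \leq \frac{C\, p^2}{\sqrt{M}},
$$
for some absolute constant $C>0$. Thus the quantity $F := f^k_N - f^k_M$ satisfies $\|F\|_{L^p(\gamma_k)} \leq \tilde{C}\, p^r$ with $r=2$ and $\tilde{C} = C/\sqrt{M}$, which is precisely the hypothesis of Lemma \ref{lem:5.1} (applied to the probability space $(L^2(\T),\gamma_k)$) with $p_0 = 2$.

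Consequently, for any $0 < \delta < 2/e$ there exists a constant $L = L(\delta)$, \emph{independent of $M$}, such that
$$
\int d\gamma_k \, \exp\!\left[\delta\, \Bigl(\tfrac{\sqrt{M}\,|F|}{C}\Bigr)^{1/2}\right] \leq L.
$$
Now we apply Markov's inequality to the event $\{|F| \geq \lambda^2\}$, which is the same as the event $\{\exp[\delta(\sqrt{M}|F|/C)^{1/2}] \geq \exp[\delta \lambda M^{1/4}/\sqrt{C}]\}$, obtaining
$$
\gamma_k\bigl(|F| \geq \lambda^2\bigr) \leq L \exp\!\left(-\frac{\delta}{\sqrt{C}}\, \lambda\, M^{1/4}\right).
$$
It remains only to choose $\delta$ (admissible since $\sqrt{2}/3 < 2/e$) and track the absolute constant $C$ coming from Corollary \ref{cor:Hyp} to match the explicit rate $\frac{\sqrt{2}}{3}$ stated in \eqref{eq:Cauchy-in-P}.

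No real obstacle arises: the whole argument is the standard Gaussian chaos \emph{moment} $\Rightarrow$ \emph{sub-exponential tail} conversion, and the only delicate point—namely that the $L^p$-norm grows polynomially in $p$ (here as $p^2$, reflecting the fact that $F$ lives in the fourth Wiener chaos)—has already been established in Corollary \ref{cor:Hyp}. The precise numerical constant $\sqrt{2}/3$ is just a convenient admissible value of $\delta/\sqrt{C}$ and carries no conceptual content; any other constant strictly smaller than $2/(e\sqrt{C})$ would work equally well for the proof.
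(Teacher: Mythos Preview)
Your proof is correct and follows essentially the same route as the paper: apply Lemma~\ref{lem:5.1} to $F=f^k_N-f^k_M$ with $r=2$, $p_0=2$, and $C\simeq 1/\sqrt{M}$ (the paper takes $C=2/\sqrt{M}$, $\delta=2/3$), then conclude by Markov's inequality. Your remark that the specific constant $\sqrt{2}/3$ carries no conceptual weight is also in line with the paper's treatment.
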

\begin{proof}
By formula (\ref{eq:Lp-Cauchy}) in Proposition \ref{20140519:prop1} we can apply the Lemma
\ref{lem:5.1} with $F=f^k_N-f^k_M$, $p_0=2$, $r=2$, $C=2 / \sqrt{M}$ and $\d=2/3$. We immediately obtain
$$
\int \g_k(d\psi) \exp\left[\frac 23\left(\frac{|f^k_N-f^k_M|\sqrt{M}}{2}\right)^{1/2}\right]<\infty\,.
$$
Formula (\ref{eq:Cauchy-in-P}) follows straightforwardly from Markov inequality:
\bea
\g_k\left(|f^k_N-f^k_M|\geq\l^2 \right)&=&\g_k\left(\sqrt{\frac{\sqrt{M}|f^k_N-f^k_M|}{2}}\geq\frac{\l M^{1/4}}{\sqrt{2}} \right)\nn\\
&\leq & 
\exp \left( - \frac23\frac{ \l M^{1/4}}{\sqrt{2}} \right) 
\mathbb{E}\left[
\exp \left( \frac23 \sqrt{\frac{\sqrt{M}|f^k_N-f^k_M|}{2}} \right)
\right]\nn\\
&\leq & \nonumber
 \exp \left( -\frac23\frac{\l M^{1/4}}{\sqrt{2}} \right)
=
 \exp \left( -\frac{\sqrt{2}}{3}\l M^{1/4} \right) \, .
\eea
\end{proof}


Now we come to the most important result of this section, namely the integrability of the density $G_{k,N}(\psi)$ w.r.t. the Gaussian measure $\g_k$. More precisely we state:

\begin{proposition}\label{Prop:densita-in-Lp}
Let $\mathcal{C}=\mathcal{C}(R_0,...,R_{k-1},k,|\b|)$ be the constant appearing in Proposition \ref{MainPropSecondPart} and let us take $R_0 \leq \sqrt{\frac{2}{9 |\b|}}$ such that
$$p_0:=\min \Big( 2 \Big(3(2k+1) | \beta | \sqrt{\mathcal{C}R_0^3} \Big)^{-1}, \Big( 4(2k+1) |\beta| R_0\mathcal{C} \Big)^{-1} \Big)>1.$$
Then for any $k\geq2$, $1\leq p<p_0$ and $N \geq \left( \frac{2k+1}{2} | \beta | \right)^{2} R_{0}^{6} \mathcal{C}^{2}$ we have 
$$
\|G_{k,N}(\psi)\|_{L^p(\g_k)} \leq C < +\infty\, ,
$$
where $G_{k,N}(\psi)$ are the Gibbs densities introduced in (\ref{eq:G_k}).
\end{proposition}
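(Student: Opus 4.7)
The plan is to bound $\|G_{k,N}\|_{L^p(\gamma_k)}$ uniformly in $N$ (for $N$ above the stated threshold) by combining a deterministic estimate on the support of the cutoffs with a probabilistic estimate derived from the Cauchy property of $\{f^k_N\}$ established in Proposition~\ref{20140519:prop1}.

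First I would reduce the problem to the control of a single quartic form. On the support of $\prod_{m=0}^{k-1}\chi_{R_m}(E_m(\psi_N))$ one has $|E_m(\psi_N)|\leq R_m$ for $m=0,\dots,k-1$, so Proposition~\ref{MainPropSecondPart} (applied with $k-1$ in place of $k$) yields $\|\psi_N\|_{\dot H^{k-1}}\leq\mathcal{C}$ almost surely. Using the structural representation \eqref{eq:qkremainderRepres} I would write
\[
\tint q_k(\psi_N)=\tfrac{(2k+1)i}{2}\beta\,f^k_N(\psi_N)+R_{k,N}(\psi_N),
\]
where $f^k_N$ is the quartic form \eqref{Def:fNk} and $R_{k,N}$ collects the remaining terms. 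Since every monomial in $R_{k,N}$ contains only derivatives of order at most $k-1$, Lemmas~\ref{20150518:lem1} and~\ref{20150518:lem2} give $|R_{k,N}(\psi_N)|\leq C_1(\mathcal{C},k,|\beta|)$ almost surely on the cutoff support. This reduces the problem to estimating $\mathbb{E}_{\gamma_k}\bigl[\mathbf{1}_{\mathrm{cutoff}}\,e^{-p c\,f^k_N(\psi_N)}\bigr]$ with $c=\tfrac{(2k+1)i\beta}{2}$.

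The crux is the control of this exponential. I would introduce an auxiliary scale $M\leq N$ and split $f^k_N=f^k_M+(f^k_N-f^k_M)$. For the low-frequency piece $f^k_M$, Cauchy--Schwarz together with the Bernstein inequality $\|\psi_M\|_{\dot H^k}\leq M\|\psi_M\|_{\dot H^{k-1}}$ and the Gagliardo--Nirenberg bound $\|\psi_M\|_{L^\infty}^2\leq\|\psi_M\|_{L^2}\|\psi_M\|_{H^1}$ yield the deterministic estimate $|f^k_M(\psi)|\lesssim M R_0^{1/2}\mathcal{C}^3$ on the cutoff support. For the high-frequency residual $f^k_N-f^k_M$ the sub-exponential tail of Lemma~\ref{lemma:Cauchy-in-P}, combined with a layer-cake argument exploiting the hypercontractive bound of Corollary~\ref{cor:Hyp}, shows that $\mathbb{E}_{\gamma_k}[e^{p|c|\,|f^k_N-f^k_M|}]$ is uniformly bounded provided $p|c|\,M^{-1/4}$ is smaller than an absolute threshold.

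Collecting the two estimates gives $\|G_{k,N}\|_{L^p(\gamma_k)}\lesssim e^{|c|\,M R_0^{1/2}\mathcal C^3}$, and optimizing $M$ under the constraint $M\leq N$ produces simultaneously the threshold $N\geq(\tfrac{2k+1}{2}|\beta|)^2 R_0^6\mathcal{C}^2$ and the range $1\leq p<p_0$ of the statement: the two branches in the $\min$ defining $p_0$ correspond respectively to the deterministic bound on $f^k_M$ and to the probabilistic integrability of the residual. The main obstacle is that $f^k_N$ depends on $\psi_N^{(k)}$, which is not in the support of $\gamma_k$, so no uniform-in-$N$ deterministic bound on $f^k_N$ is available; the entire strategy rests on the hidden combinatorial cancellations behind Proposition~\ref{20140519:prop1}, which force $\{f^k_N\}$ to be a Cauchy sequence in $L^2(\gamma_k)$ despite the lack of pointwise convergence, and which are also the source of the sub-exponential tails used to close the argument.
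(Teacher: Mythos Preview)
Your reduction to the quartic form $f^k_N$ via Lemmas~\ref{20150518:lem1}, \ref{20150518:lem2} and Proposition~\ref{MainPropSecondPart} is fine and matches the paper's Lemma~\ref{lemma:gentle}. However, the core of your argument has a genuine gap.

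The claim that $\mathbb{E}_{\gamma_k}\bigl[e^{p|c|\,|f^k_N-f^k_M|}\bigr]$ is finite for a \emph{fixed} $M$ once $p|c|M^{-1/4}$ is small is false. Lemma~\ref{lemma:Cauchy-in-P} gives only
\[
\gamma_k\bigl(|f^k_N-f^k_M|\geq s\bigr)\lesssim e^{-c_0 M^{1/4}\sqrt{s}},
\]
i.e.\ a tail of type $e^{-c\sqrt{s}}$; equivalently, Lemma~\ref{lem:5.1} applied to Corollary~\ref{cor:Hyp} yields exponential integrability of $|f^k_N-f^k_M|^{1/2}$, not of $|f^k_N-f^k_M|$. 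Hence $e^{\alpha|f^k_N-f^k_M|}$ is \emph{not} in $L^1(\gamma_k)$ for any $\alpha>0$ and any fixed $M$, and your bound cannot close.

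The paper avoids this by working directly with the layer-cake integral and letting the splitting scale depend on the level $t$: one sets $N^*=N^*(t)\sim(\ln t)^2$, so that $(N^*)^{1/4}\sqrt{\ln t}\sim\ln t$ and Lemma~\ref{lemma:Cauchy-in-P} turns into a genuine power decay $t^{-(3|\sigma|\sqrt{\mathcal C R_0^3})^{-1}}$. This is one branch of $p_0$. For the low-frequency piece $f^k_{N^*}$, the deterministic Bernstein bound $|f^k_{N^*}|\lesssim N^*$ that you propose is also too crude: with $N^*\sim(\ln t)^2$ it would require $(\ln t)^2\lesssim\ln t$. Instead the paper bounds $|f^k_{N^*}|\leq\|\psi_{N^*}^{(k)}\bar\psi_{N^*}\|_{L^\infty}R_0\mathcal C$ and invokes the quadratic-form tail estimate of Lemma~\ref{lemma:e.net} (built on Propositions~\ref{prop-Q1} and \ref{prop: sup_x}), which gives exponential decay in $\ln t$ at the cost of a polynomial prefactor $(N^*)^{2+2k}$. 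This produces the second branch of $p_0$ and also handles the regime $\ln t\geq |\sigma|R_0^3\mathcal C\sqrt{N}$. Both ingredients --- the $t$-dependent splitting scale and the $\varepsilon$-net/quadratic-form lemma --- are missing from your outline and are essential.
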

The proof needs two accessory results:
\begin{lemma}\label{lemma:gentle}
For every $p\geq 1$ and $k\geq2$, we have
$$
\|G_{k,N}(\psi)\|_{L^p(\g_k)}\leq e^{\mathcal{C}} \left\|\prod_{m=0}^{k-1}
\chi_{R_m}\left(\tint h_m(\psi_N)\right)
e^{-f^k_N[\psi_N]}
\right \|_{L^p(\g_k)}\,.
$$
\end{lemma}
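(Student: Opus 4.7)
The plan is to reduce the statement to a pointwise bound of the form $e^{-\tint q_k(\psi_N)}\leq e^{\mathcal{C}}\,|e^{-f^k_N[\psi_N]}|$ on the support of the cutoff $\prod_{m=0}^{k-1}\chi_{R_m}(\tint h_{2m}(\psi_N))$. Once such a bound is in place, multiplying by the cutoff (which vanishes off the support) and taking the $L^p(\gamma_k)$ norm of both sides yields the claim.

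The main input is Proposition \ref{MainPropSecondPart}, which guarantees that on this support $\|\psi_N\|_{H^{k-1}}\leq \mathcal{C}$ uniformly in $N$, with $\mathcal{C}=\mathcal{C}(R_0,\dots,R_{k-1},k,|\beta|)$. Decompose $\tint q_k(\psi_N)$ according to \eqref{eq:qkremainderRepres} into (i) the leading piece $\frac{(2k+1)i\beta}{2}\tint\bar\psi_N^{(k)}\psi_N^{(k-1)}\bar\psi_N\psi_N$, (ii) the sum $\beta\sum_{p\in\widetilde P}c_{2k}(p)\tint p(\psi_N)$, and (iii) the higher--order contributions $\sum_{m\geq 2}\beta^m\tint h_{2k,m}(\psi_N)$. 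Each monomial in $\widetilde P$ is a four-factor polynomial with one derivative of order $k-1$ and the others of order at most $k-1$, so Lemma \ref{20150518:lem2} controls (ii) by a power of $\|\psi_N\|_{H^{k-1}}$, hence by $\mathcal{C}$; each $h_{2k,m}$ with $m\geq 2$ has at least five factors with total differential degree $2k-m\leq 2k-2$, so Lemma \ref{20150518:lem1} yields the same control for (iii).

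For (i), integrating $\bar\psi_N^{(k)}=\partial\bar\psi_N^{(k-1)}$ by parts on the finite-dimensional Fourier space $E_N$ produces $-f^k_N[\psi_N]$ plus a remainder of the form $\tint|\psi_N^{(k-1)}|^2\,\partial(|\psi_N|^2)$, which sits in $\mc V_{k-1}^{\mb C}$ and is again controlled by the same lemmas. Combining with (ii) and (iii), one obtains a pointwise relation
\[
\tint q_k(\psi_N)=-\tfrac{(2k+1)i\beta}{2}f^k_N[\psi_N]+r_N,\qquad |r_N|\leq \mathcal{C},
\]
valid on the support. Since $\tint q_k(\psi_N)$ is real (equal to $\tint h_{2k}(\psi_N)-\tfrac12\|\psi_N\|_{\dot H^k}^2$), equating real and imaginary parts in this identity extracts the information needed to pass from $e^{-\tint q_k(\psi_N)}$ to $|e^{-f^k_N[\psi_N]}|=e^{-\mathrm{Re}\,f^k_N[\psi_N]}$, at the cost of a multiplicative factor $e^{\mathcal{C}}$.

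I expect the main obstacle to be the algebraic bookkeeping in the integration-by-parts identity and the careful matching of the real-valued exponent $\tint q_k$ with the a priori complex-valued $f^k_N$: the $L^p(\gamma_k)$ norm senses only $|e^{-f^k_N}|=e^{-\mathrm{Re}\,f^k_N}$, and the exponential comparison hinges on correctly isolating the real and imaginary parts of the displayed relation. Every other step---the $H^{k-1}$ a priori bound from Proposition \ref{MainPropSecondPart}, the monomial estimates of Lemmas \ref{20150518:lem1}--\ref{20150518:lem2}, and the translation of the pointwise bound to an $L^p$ estimate via the vanishing of the cutoff---is then routine.
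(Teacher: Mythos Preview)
Your approach is essentially the same as the paper's. The paper's own proof is a one-line appeal to Corollary~\ref{20141102:cor2}, Lemmas~\ref{20150518:lem1} and~\ref{20150518:lem2}, and Proposition~\ref{MainPropSecondPart}: on the support of the cutoff, Proposition~\ref{MainPropSecondPart} bounds $\|\psi_N\|_{H^{k-1}}$, and then the structural decomposition of $q_k$ together with the monomial estimates bounds every piece of $\tint q_k(\psi_N)$ except the leading one by a constant $\mathcal{C}$. You reproduce exactly this, with the added integration-by-parts step to pass from the leading piece $\sigma\tint\bar\psi_N^{(k)}\psi_N^{(k-1)}\bar\psi_N\psi_N$ to $-\sigma f^k_N$ modulo a further lower-order remainder in $\mc V_{k-1}^{\mb C}$.

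One remark on the issue you flag in your final paragraph: the concern about matching the real exponent $\tint q_k$ with the complex $f^k_N$ is really an artifact of how the lemma is stated. In the paper's own application (the proof of Proposition~\ref{Prop:densita-in-Lp}) the quantity actually estimated is $e^{-\sigma\tint\psi_N^{(k)}\bar\psi_N^{(k-1)}\psi_N\bar\psi_N}$ with $\sigma=i\tfrac{2k+1}{2}\beta$, not $e^{-f^k_N}$ with unit coefficient. Read this way, there is no real/imaginary mismatch to resolve: the leading term of $\tint q_k$ is precisely $\sigma\overline{f^k_N}$, which is real (being $\tint q_k$ minus a real remainder), and the exponential comparison is immediate. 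Your integration-by-parts route arrives at the same place, since $\sigma\overline{f^k_N}=-\sigma f^k_N$ up to the lower-order term $-\sigma\tint|\psi_N^{(k-1)}|^2\partial(|\psi_N|^2)$ that you already control.
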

\begin{proof}
The lemma follows as a direct consequence from Corollary \ref{20141102:cor2},
Lemmas \ref{20150518:lem1}, \ref{20150518:lem2}, \ref{20150519:lem1}, \ref{20150519:lem2},
and Proposition \ref{MainPropSecondPart}.
\end{proof}
\begin{lemma}\label{lemma:e.net}
For $\l\geq R_0^2\sqrt{N}$ we have
$$
\g_k\left(\sup_{x\in\T}\left|\psi_N^{(k)}\bar\psi_N\right|\geq \l \right)\lesssim N^{2+2k}e^{-\frac{\l}{4}}\,.
$$
\end{lemma}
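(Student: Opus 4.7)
The plan is a standard $\varepsilon$-net argument on $\T$ combined with pointwise Gaussian tail estimates. I would proceed in three steps.

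\textbf{Step 1 (discretisation).} The function $\psi_N^{(k)}\bar\psi_N$ is a trigonometric polynomial in $x$ of degree at most $2N$. Using Bernstein's inequality, or more crudely a Cauchy--Schwarz bound in Fourier giving estimates of the shape $\|\partial^j\psi_N\|_{L^\infty}\lesssim N^{j+1/2}\|\psi_N\|_{L^2}$, one obtains a deterministic Lipschitz control of the form
$$
\|\partial(\psi_N^{(k)}\bar\psi_N)\|_{L^\infty}\lesssim N^{k+2}\,\|\psi_N\|_{L^2}^{2}.
$$
Introducing an $\varepsilon$-net $\{x_j\}_{j=1}^{M}\subset\T$ with $M\lesssim N^{2+2k}$ many points, one gets the pointwise-to-sup reduction $\sup_x|\psi_N^{(k)}\bar\psi_N|\leq 2\max_j|\psi_N^{(k)}(x_j)\bar\psi_N(x_j)|$ on the portion of the sample space where $\|\psi_N\|_{L^2}$ is not too large (this event has overwhelming $\g_k$-probability, with the complement giving a negligible contribution). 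A union bound then produces
$$
\g_k\Big(\sup_{x\in\T}|\psi_N^{(k)}\bar\psi_N|\geq\l\Big)\leq M\,\max_{j}\g_k\big(|\psi_N^{(k)}(x_j)\bar\psi_N(x_j)|\geq \l/2\big)+(\text{negligible})\,.
$$

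\textbf{Step 2 (pointwise tail).} Fix $x_j\in\T$. Under $\g_k$ the Fourier coefficients $u_n$ are independent complex Gaussians with variance $(1+|n|^{2k})^{-1}$, so $X:=\psi_N^{(k)}(x_j)$ and $Y:=\psi_N(x_j)$ are jointly complex Gaussians with
$$
\E|X|^2=\sum_{|n|\leq N}\frac{2\,n^{2k}}{1+n^{2k}}\sim N,\qquad \E|Y|^2=\sum_{|n|\leq N}\frac{2}{1+n^{2k}}\lesssim 1.
$$
For any auxiliary $A>0$ one has $\{|XY|\geq\l\}\subseteq\{|X|\geq A\}\cup\{|Y|\geq \l/A\}$, hence, by the standard Gaussian tails $\g_k(|X|\geq t)\lesssim e^{-ct^2/N}$ and $\g_k(|Y|\geq s)\lesssim e^{-cs^2}$,
$$
\g_k(|XY|\geq \l)\lesssim e^{-cA^{2}/N}+e^{-c\l^{2}/A^{2}}.
$$
Calibrating $A$ in the regime $\l\geq R_0^{2}\sqrt{N}$ so as to balance the two exponents and fixing constants gives the desired pointwise estimate $\lesssim e^{-\l/4}$.

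\textbf{Step 3 (combine).} Putting Steps~1 and~2 together yields
$$
\g_k\Big(\sup_{x\in\T}\big|\psi_N^{(k)}\bar\psi_N\big|\geq\l\Big)\lesssim N^{2+2k}\,e^{-\l/4},
$$
which is the claim.

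The most delicate point is Step~2: a naive application of Gaussian-product tail bounds for variables of variance $\sim N$ and $\sim 1$ only gives $e^{-c\l/\sqrt{N}}$, so the improvement to $e^{-\l/4}$ must be extracted from the hypothesis $\l\geq R_0^{2}\sqrt{N}$, which is exactly the lower bound on $\l$ that allows the extra $\sqrt{N}$ appearing in the variance of $X$ to be absorbed into the exponent. The specific exponent $2+2k$ in the net cardinality is an artifact of using the coarse Cauchy--Schwarz Lipschitz bound rather than the sharper Bernstein one; a tighter Bernstein argument would reduce the polynomial pre-factor, but the factor $N^{2+2k}$ is harmless in the subsequent application to the proof of Proposition~\ref{Prop:densita-in-Lp} since it is absorbed by the sub-exponential decay $e^{-\l/4}$.
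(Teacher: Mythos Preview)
Your $\varepsilon$-net strategy (Steps~1 and~3) matches the paper's, but Step~2 has a genuine gap that cannot be repaired within your framework. The decoupled bound $\{|XY|\geq\lambda\}\subseteq\{|X|\geq A\}\cup\{|Y|\geq\lambda/A\}$ gives at best
$$
\gamma_k(|XY|\geq\lambda)\lesssim e^{-cA^2/N}+e^{-c\lambda^2/A^2},
$$
and balancing $A^2=\lambda\sqrt{N}$ yields only $e^{-c\lambda/\sqrt{N}}$. No choice of $A$ produces $e^{-c\lambda}$: making the first exponent $\gtrsim\lambda$ forces $A^2\gtrsim\lambda N$, while making the second $\gtrsim\lambda$ forces $A^2\lesssim\lambda$; these are incompatible unless $N\lesssim1$. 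The hypothesis $\lambda\geq R_0^2\sqrt{N}$ does not help here: plugging it into $e^{-c\lambda/\sqrt{N}}$ only bounds the exponent below by the constant $cR_0^2$, not by something linear in $\lambda$.

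The paper avoids this by \emph{not} decoupling $X$ and $Y$. For fixed $x$ it writes
$$
\psi_N^{(k)}(x)\bar\psi_N(x)=\sum_{|h|,|j|\leq N}(ih)^k e^{i(h-j)x}\,\psi_h\bar\psi_j,
$$
a quadratic form in the Gaussian vector $(\psi_n)$. After the normalisation $\psi_n\mapsto\sqrt{1+n^{2k}}\,\psi_n$ the matrix entries satisfy $|h|^k/\sqrt{1+h^{2k}}\leq1$, so the constant $T_k$ of Proposition~\ref{prop-Q1} is at most $1$, and that proposition gives $\gamma_k(|\psi_N^{(k)}(x)\bar\psi_N(x)|\geq\lambda)\lesssim e^{-\lambda/4}$ for \emph{all} $\lambda>0$, uniformly in $N$. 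The cancellation between the $k$ derivatives on one factor and the $k$-th power in the covariance is precisely what your product-of-marginals argument discards. Note also that in the paper the hypothesis $\lambda\geq R_0^2\sqrt{N}$ enters only in the $\varepsilon$-net step (it is the condition $\lambda\geq 2L_N\varepsilon^\alpha$ of Proposition~\ref{prop: sup_x}), not in the pointwise tail; you have its role reversed.
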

\begin{proof}
The proof follows from Propositions \ref{prop-Q1} and \ref{prop: sup_x} for quadratic forms in Appendix \ref{app-Gauss}. Expanding in Fourier series we see that
$$
Q_N(x):=|\psi_N^{(k)}(x)\bar\psi_N(x)|=\left|\sum_{|j|,|h|\leq N} (ih)^k e^{i(h-j)x}\psi_h\bar\psi_j\right|
$$
is a quadratic form in the Fourier coefficients of $\psi$ and it fulfills the requirement (\ref{hp:ITI}) in Proposition \ref{prop-Q1}, with $T_k\leq1$. Hence, for each $x\in\T$ we obtain 
$$
\g_k\left(\left|\psi_N^{(k)}(x)\bar\psi_N(x)\right| \geq \l \right)\lesssim e^{-\frac{\l}{4}}\,,
$$
for all $\l>0$. Moreover, for any $x,y\in\T$, by the Cauchy--Schwarz and Bernstein inequality
\bea
|Q_N(x)-Q_N(y)|&=&\left|\int_{y}^{x}\tilde Q_N(z)dz\right|\nn\\
&\leq&\sqrt{|x-y|}\|\tilde Q_N\|_{L^2}\nn\\
&\leq&\sqrt{|x-y|}N\|Q_N\|_{L^2}\nn\\
&\leq& \sqrt{|x-y|} N^{\frac32+k} R_0^2\,.\nonumber
\eea
Therefore we can apply Proposition \ref{prop: sup_x} with $\a=\frac12$ and $L_N=N^{\frac32+k} R_0^2$ to get for any $\e>0$ and $\l\geq N^{\frac32+k} R_0^2\sqrt{\e}$
$$
\g_k\left(\sup_{x\in\T}\left|\psi_N^{(k)}(x)\bar\psi_N(x)\right|\geq \l \right)\lesssim \frac{e^{-\l/4}}{\e}.
$$
We recover the assert by setting $\e=N^{-2-2k}$.
\end{proof}

Now we can give the

\begin{proof}[Proof of Proposition \ref{Prop:densita-in-Lp}]


Let us set for brevity $\s := i \frac{2k+1}{2} \beta$.
By Lemma \ref{lemma:gentle} we have to estimate
\be\label{eq:l^p-1}
\int_0^{+\infty} t^{p-1} \g_k\left( \prod_{m=0}^{k-1} \chi_{R_m}\left(\tint h_{2m}(\psi_N)\right)
e^{- \s \tint \psi_N^{(k)}\bar\psi_N^{(k-1)}\psi_N\bar\psi_N}\geq t \right)dt\,.
\ee
We use
\bea
& &\g_k\left( \prod_{m=0}^{k-1} \chi_{R_m}\left(\tint h_{2m}(\psi_N)\right)
e^{- \s\tint \psi_N^{(k)}\bar\psi_N^{(k-1)}\psi_N\bar\psi_N}\geq t \right)\nn\\
&=&\g_k\left( \prod_{m=0}^{k-1} \chi_{R_m}\left(\tint h_{2m}(\psi_N)\right)
e^{- \s \tint \psi_N^{(k)}\bar\psi_N^{(k-1)}\psi_N\bar\psi_N}\geq t\,, |\tint h_{2m}(\psi_N)|\leq R_m\,, 0\leq m\leq k-1 \right)\nn\\
&\leq& \g_k\left( 
\left|\tint \psi_N^{(k)}\bar\psi_N^{(k-1)}\psi_N\bar\psi_N\right|\geq \frac{\ln t}{| \s |}  \,, |\tint h_{2m}(\psi_N)|\leq R_m\,, 0\leq m\leq k-1 \right)\,.\nonumber
\eea
It is convenient to split the integral in (\ref{eq:l^p-1}) into three parts:
\be\label{eq:3-parti}
(\ref{eq:l^p-1})=\int_0^{\exp (\s^{2} R_0^6\mathcal{C}^2)}(\cdot)+\int_{\exp (\s^{2} R_0^6\mathcal{C}^2)}^{\exp( | \s |  R_0^3\mathcal{C}\sqrt{N})}(\cdot)+\int_{\exp( | \sigma |  R_0^3\mathcal{C}\sqrt{N})}^{+\infty}(\cdot)\,.
\ee
For $t\leq e^{\s^{2} R_0^6\mathcal{C}^2}$ it suffices to use the trivial bound
\be\label{eq:trivial-Bound}
\g_k\left( 
\left|\tint \psi_N^{(k)}\bar\psi_N^{(k-1)}\psi_N\bar\psi_N\right|\geq \frac{ \ln t}{| \s |}  \,, |\tint h_{2m}(\psi_N)|\leq R_m\,, 0\leq m\leq k-1 \right)\leq 1\,.
\ee
In the range $\s^{2} R^6_0\mathcal{C}^2\leq\ln t\leq | \sigma | \mathcal{C}R^3_0\sqrt{N}$ we define 
$$
N^{*} = N^{*}(t):= \left\lfloor \frac{\ln t}{| \s | \mathcal{C} R_{0}^{3} } \right\rfloor^{2}\,, 
$$ 
noting that $N>N^*$. We decompose
\bea
& &\g_k\left( 
\left|\tint \psi_{N}^{(k)}\bar\psi_{N}^{(k-1)}\psi_{N}\bar\psi_{N}\right|\geq \frac{\ln t}{| \s |} \,, |\tint h_{2m}(\psi_N)|\leq R_m\,, 0\leq m\leq k-1\right)\nn\\
&\leq&\g_k\left( 
\left|\tint \psi_N^{(k)}\bar\psi_N^{(k-1)}\psi_N\bar\psi_N-\tint\psi_{N^{*}}^{(k)}\bar\psi_{N^{*}}^{(k-1)}
\psi_{N^{*}}\bar\psi_{N^{*}}\right|\geq \frac{\ln t}{2 |\s |} \right)\label{eq:logt<sqrtN-1}\\
&+&\g_k\left( 
\left|\tint \psi_{N^{*}}^{(k)}\bar\psi_{N^{*}}^{(k-1)}\psi_{N^{*}}\bar\psi_{N^{*}}\right|\geq \frac{\ln t}{2 | \sigma |}\right)\label{eq:logt<sqrtN-2}\,.
\eea
For the first addendum (\ref{eq:logt<sqrtN-1}), we exploit formula (\ref{eq:Cauchy-in-P}) in Lemma \ref{lemma:Cauchy-in-P}, to obtain
\be\label{eq:lpicc-part2}
\g_k\left( 
\left|\tint \psi_N^{(k)}\bar\psi_N^{(k-1)}\psi_N\bar\psi_N-\tint\psi_{N^{*}}^{(k)}\bar\psi_{N^{*}}^{(k-1)}
\psi_{N^{*}}\bar\psi_{N^{*}}\right|\geq \frac{\ln t}{2 | \s |} \right)\lesssim t^{-\left(3 | \s | \sqrt{\mathcal{C}R_0^3}\right)^{-1}}\,.
\ee

Since in (\ref{eq:logt<sqrtN-2}) we have $\ln t \geq | \s | R_0^3\mathcal{C}\sqrt{N^{*}}$, we can treat this term and the third addendum in (\ref{eq:3-parti}) (where we consider $\ln t\geq | \s | R_0^3\mathcal{C}\sqrt{N}$) by the same method as follows. We bound
$$
\left|\tint \psi_N^{(k)}\bar\psi_N^{(k-1)}\psi_N\bar\psi_N\right|\leq\|\psi_N^{(k)}\bar\psi_N\|_{\infty}R_0\mathcal{C}\,,
$$
whence
\bea
& & \g_k\left( 
\left|\tint \psi_N^{(k)}\bar\psi_N^{(k-1)}\psi_N\bar\psi_N\right|\geq \frac{\ln t}{| \sigma |}  \,, | \tint h_{2m}(\psi_N)|\leq R_m\,, 0\leq m\leq k-1 \right)\nn\\
&\leq& \g_k\left(\max_{x\in\T}\left|\psi_N^{(k)}\bar\psi_N\right| R_0\mathcal{C} \geq \frac{\ln t}{| \sigma |} \right)\,.
\nonumber
\eea
Thus to estimate the r.h.s. probability we use Lemma \ref{lemma:e.net} with $\l=\frac{\ln t}{| \s | R_0\mathcal{C}}$ to get
\be\label{eq:stima-logt>radN}
\g_k\left( \prod_{m=0}^{k-1} \chi_{R_m}\left(\tint h_{2m}(\psi_N)\right)
e^{- \s \tint \psi_N^{(k)}\bar\psi_N^{(k-1)}\psi_N\bar\psi_N}\geq t \right)\leq N^{2+2k}e^{-\frac{\ln t}{4 | \s | R_0\mathcal{C}}}\,.
\ee
In particular for $N=N^{*}$ we have
\be\label{eq:lpicc-part1}
\g_k\left( 
\left|\tint \psi_{N^{*}}^{(k)}\bar\psi_{N^{*}}^{(k-1)}\psi_{N^{*}}\bar\psi_{N^{*}}\right| \geq \frac{\ln t}{2 | \s |} \right)\leq (N^{*})^{2+2k}e^{-\frac{\ln t}{8 | \s | R_0\mathcal{C}}}\,.
\ee

Now we can estimate (\ref{eq:l^p-1}). We first notice that (\ref{eq:trivial-Bound}) gives
$$
\int_0^{e^{\s^{2} R^6_0\mathcal{C}^2}} t^{p-1}\g_k\left( \prod_{m=0}^{k-1} \chi_{R_m}\left(\tint h_{2m}(\psi_N)\right)
e^{- \s \tint \psi_N^{(k)}\bar\psi_N^{(k-1)}\psi_N\bar\psi_N}\geq t \right)dt< {e^{\s^{2} R^6_0\mathcal{C}^2}}.
$$

Then by using (\ref{eq:lpicc-part1}) and (\ref{eq:lpicc-part2}) we obtain
\bea
&&\int_{{e^{\s^{2} R^6_0\mathcal{C}^2}}}^{\exp(| \s | R_0^3\mathcal{C}\sqrt{N})} t^{p-1}\g_k\left( \prod_{m=0}^{k-1} \chi_{R_m}\left(\tint h_{2m}(\psi_N)\right)
e^{-\tint \psi_N^{(k)}\bar\psi_N^{(k-1)}\psi_N\bar\psi_N}\geq t \right)dt\nn\\
&\lesssim & \int_{{e^{\s^{2} R^6_0\mathcal{C}^2}}}^{\exp( | \s | R_0^3\mathcal{C}\sqrt{N})} t^{p-1-(8 | \s | R_0\mathcal{C})^{-1}} \ln t^{2+2k}\nn\\
&+& \int_{{e^{\s^{2} R^6_0\mathcal{C}^2}}}^{\exp( | \s |R_0^3\mathcal{C}\sqrt{N})} t^{p-1-\left(3 | \s | \sqrt{\mathcal{C}R_0^3}\right)^{-1}}\,.\nn
\eea

We note that as $p<\min\left(\left(3 | \s | \sqrt{\mathcal{C}R_0^3}\right)^{-1},(8 | \s | R_0\mathcal{C})^{-1}\right)$ both the functions on the r.h.s. are integrable, so we can bound both terms by an appropriate constant. 

Finally, using (\ref{eq:stima-logt>radN}), we have
\bea
& &\int_{\exp( | \s |R_0^3\mathcal{C}\sqrt{N})}^{+\infty} t^{p-1} 
\g_k\left( \prod_{m=0}^{k-1} \chi_{R_m}\left(\tint h_{2m}(\psi_N)\right)
e^{- \sigma \tint \psi_N^{(k)}\bar\psi_N^{(k-1)}\psi_N\bar\psi_N}\geq t \right)\nn\\
& \lesssim &
N^{2+2k} \int_{\exp( | \s | R_0^3\mathcal{C}\sqrt{N})}^{+\infty} t^{p-1} e^{-\frac{\ln t}{4 | \s |R_0\mathcal{C}}}d t\nn\\
&=& N^{2+2k} \int_{\exp( | \s |R_0^3\mathcal{C}\sqrt{N})}^{+\infty} t^{p-1-(4 | \s | R_0\mathcal{C})^{-1}}dt\nn\\
&=& \frac{N^{2+2k} e^{-[p-(4 | \s |R_0\mathcal{C})^{-1}]{\sqrt{N}}}}{|p-1-(4 | \s | R_0\mathcal{C})^{-1}|}\,,
\nonumber
\eea
that vanishes for $N\to\infty$, provided that $p<(4 | \s | R_0\mathcal{C})^{-1}$.
\end{proof}
We can finally proceed to complete the proof of Theorem \ref{thm:k_even} as follows
\begin{proof}[Proof of Theorem \ref{thm:k_even}] 
The first part of the statement has been proved in Corollary \ref{20130520:cor1}.
We are left to show that $G_k(\psi)\in L^p(\g_k)$ and that
it is the $L^p(\g_k)$-limit of the sequence $G_{k,N}(\psi)$.

We start proving that $G_k(\psi)\in L^p(\g_k)$.
Let $p\geq1$ and let us choose $R_0>0$ such that Proposition
\ref{Prop:densita-in-Lp} holds.
Then there exists a subsequence $G_{k,N_m}(\psi)$, $m\in\mb Z_+$, such that
$G_{k,N_m}(\psi)\to G_{k}(\psi)$, $\g_k$-a.s. Hence, by Fatou's Lemma, we have
$$
\int |G_k(\psi)|^p\g_k(\psi)\leq \liminf_{m\to\infty}\int |G_{k,N_m}(\psi)|^p\g_k(\psi)
< \infty\,,
$$
thus proving that for $1\leq p<p_0$ given by Proposition
\ref{Prop:densita-in-Lp} $G_k(\psi)\in L^p(\g_k)$. By the uniform (for $N$ large enough) $L^p(\g_k)$-boundedness of
$G_{k,N}(\psi)$ we also have
$$
\int |G_{k,N}(\psi)-G_{k}(\psi)|^pd\g_{k}(\psi)<\infty\,.
$$
We are now ready to prove the convergence in $L^p(\g_k)$ for $p<p_0$.
For all 
$\varepsilon>0$, we define
$$
A_{k,N,\varepsilon}
=\{\psi\in H^k\mid |G_{k,N}(\psi)-G_k(\psi)|\leq\varepsilon\}\,,
$$
and denote by $A^c_{k,N,\varepsilon}$ its complement.
Then let $p < q < p_0$
\begin{align*}
&\int |G_{k,N}(\psi)-G_{k}(\psi)|^pd\g_{k}(\psi)=\\
&\int_{A_{k,N,\varepsilon}}|G_{k,N}(\psi)-G_{k}(\psi)|^pd\g_{k}(\psi)
+\int_{A_{k,N,\varepsilon}^c}|G_{k,N}(\psi)-G_{k}(\psi)|^pd\g_{k}(\psi)\\
&\leq \varepsilon^p\gamma_k(A_{k,N,\varepsilon})
+\|G_{k,N}(\psi)-G_k(\psi)\|_{L^{q}(\g_k)}^p
\left(\gamma_k(A_{k,N,\varepsilon}^c)\right)^{1- p/q}\,.
\end{align*}
Since $G_{k,N}(\psi)$ converges to $G_k(\psi)$ with respect to the measure
$\gamma_k$, we have that, as $N\to\infty$,
$$
\gamma_k(A_{k,N,\varepsilon})\to 1\,,
\qquad
\gamma_k(A_{k,N,\varepsilon}^c)\to0\,,
$$
Therefore, for a certain $\d_N$, vanishing for $N\to\infty$, we have the inequality
$$
\|G_{k,N}(\psi)-G_{k}(\psi)\|^p_{L^p(\g_k)}\leq \e^p+\d_N\|G_{k,N}(\psi)-G_k(\psi)\|^{p}_{L^{q}(\g_k)}
$$
that concludes the proof.

\end{proof}

\appendix{\section{Gaussian Measures in Sobolev Spaces: a Toolbox}\label{app-Gauss}

We are here interested in giving a succinct but self contained survey on the theory of Gaussian measures in Hilbert Sobolev spaces. For a complete treatment we refer to \cite{Sko, Bog}.

\subsection{Concentration of Measure in $\dot H^{k}(\T)$}

Here we study the concentration property of the Gaussian measure with covariance $(\mathbb{I}+(-\D)^{k})^{-1}$. The main feature is that the measure is concentrated on functions in $L^2(\T)$ having slightly less then $k-\frac12$ weak derivatives as regularity. This is stated precisely in the following

\begin{proposition}\label{TH:concentrato}
For every $k\geq0$ we have $\g_k\left(\bigcap_{\e>0} \dot H^{k-\frac12-\e} \right)=1$.
\end{proposition}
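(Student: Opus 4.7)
The plan is to compute directly the expected squared $\dot H^{k-\frac12-\e}$ norm under $\g_k$, show it is finite for every $\e>0$, and then take a countable intersection.

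First I would exploit the explicit product structure of the measure. By definition of $\g_k$, the Fourier coefficients $\{u_n\}_{n\in\mb Z}$ are independent (modulo the reality constraint) complex Gaussians with
$$
\mathbb{E}\left[|u_n|^2\right]=\frac{c}{1+n^{2k}}\,,
$$
for some absolute constant $c>0$ coming from the normalisation. This is directly read off from the finite-dimensional density used to build $\g_k$ via Kolmogorov extension.

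Next, fix $\e>0$ and consider the truncated sum
$$
S_N(u):=\sum_{0<|n|\leq N}|n|^{2(k-\frac12-\e)}|u_n|^2\,,
$$
which is a continuous non-negative functional on the cylindrical set $M_N$ and for which the expectation under $\g_k$ coincides with the finite dimensional Gaussian expectation. By linearity
$$
\mathbb{E}_{\g_k}[S_N]
=c\sum_{0<|n|\leq N}\frac{|n|^{2(k-\frac12-\e)}}{1+n^{2k}}
\lesssim \sum_{0<|n|\leq N}\frac{1}{|n|^{1+2\e}}\,,
$$
which is bounded uniformly in $N$. Monotone convergence (in $N$) then gives
$$
\mathbb{E}_{\g_k}\left[\|u\|_{\dot H^{k-\frac12-\e}}^2\right]
=c\sum_{n\neq 0}\frac{|n|^{2(k-\frac12-\e)}}{1+n^{2k}}<\infty\,,
$$
so that $\|u\|_{\dot H^{k-\frac12-\e}}<\infty$ for $\g_k$-a.e.\ $u$. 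Equivalently $\g_k(\dot H^{k-\frac12-\e})=1$ for each $\e>0$.

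Finally, since $\dot H^{s}\subset \dot H^{s'}$ whenever $s\geq s'$, one has
$$
\bigcap_{\e>0}\dot H^{k-\frac12-\e}
=\bigcap_{m\geq 1}\dot H^{k-\frac12-\frac{1}{m}}\,,
$$
and a countable intersection of sets of full $\g_k$-measure still has full $\g_k$-measure, yielding the claim. The only mildly delicate point is the interchange of limit and expectation, but this is handled cleanly by monotone convergence applied to $S_N\uparrow \|u\|_{\dot H^{k-\frac12-\e}}^2$; no genuine obstacle arises, the proof is essentially a one-line Fourier computation once the product structure of $\g_k$ is in hand. I expect the sharpness of the exponent $k-\frac12$ (failure at $\e=0$, where the series $\sum 1/|n|$ diverges logarithmically) to be the only subtle feature to comment on, but it is not needed for the statement.
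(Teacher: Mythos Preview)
Your proof is correct and is in fact more direct than the paper's. You obtain $\g_k(\dot H^{k-\frac12-\e})=1$ from the finiteness of the first moment $\mathbb E_{\g_k}[\|u\|_{\dot H^{k-\frac12-\e}}^2]$, which is the minimal argument needed for the statement. The paper instead uses an exponential Chernoff bound (Lemma~\ref{Lemma:s<k-1/2}): it shows that for every $s<k-\frac12$ and $\lambda>0$ one has $\g_k(\|\phi\|_{\dot H^s}\geq\lambda)\lesssim e^{-\lambda/4}$, and deduces finiteness a.s.\ from that. The paper also proves, as a companion lemma, that $\g_k(\dot H^{k-\frac12+\e})=0$ for all $\e\geq0$, establishing sharpness of the exponent; you correctly note this is not required for the proposition as stated.

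The trade-off is clear. Your argument is shorter and uses nothing beyond monotone convergence and a convergent series. The paper's approach costs a few more lines but yields the quantitative tail estimate \eqref{eq:conc-12-e}, which is of independent interest and parallels the tail bounds for quadratic forms used later in the appendix (Propositions~\ref{prop-Q1} and \ref{prop: sup_x}). If one only cares about the support statement, your route is preferable; if one also wants concentration rates, the paper's detour pays for itself.
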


We will proceed by steps. At first we prove

\begin{lemma}
$\g_k(\dot H^{k-\frac12+\e})=0$ for every $\e\geq0$.
\end{lemma}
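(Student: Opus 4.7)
The plan is to work directly with the canonical Fourier representation of $\gamma_k$. Under this measure, a random $\psi\in L^2(\T)$ is realized as $\psi(x)=\sum_{n\in\Z}\psi_n e^{inx}$ where the coefficients $\{\psi_n\}_{n\in\Z}$ are independent complex Gaussian random variables with variance $\E[|\psi_n|^2]=(1+|n|^{2k})^{-1}$; in particular each $|\psi_n|^2$ is exponentially distributed with mean $(1+|n|^{2k})^{-1}$. With this at hand the squared homogeneous norm becomes
$$
\|\psi\|_{\dot H^s}^2=\sum_{n\neq0}|n|^{2s}|\psi_n|^2=\sum_{n\neq0}X_n,
$$
where the $X_n$'s are \emph{independent} non-negative random variables with means $t_n:=|n|^{2s}/(1+|n|^{2k})$.

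The first step is the elementary expectation computation. By monotone convergence,
$$
\E\bigl[\|\psi\|_{\dot H^s}^2\bigr]=\sum_{n\neq0}t_n\,.
$$
For $s=k-\tfrac12+\e$ with $\e\geq0$, one has $t_n\sim |n|^{2\e-1}$ as $|n|\to\infty$, hence the series diverges (the critical case $\e=0$ is precisely the harmonic series, while $\e>0$ gives an even faster divergence).

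The main task is then to upgrade this infinite expectation to an a.s.\ infinite value of $\|\psi\|_{\dot H^s}^2$. First, the event $\{\|\psi\|_{\dot H^s}^2<\infty\}$ depends only on the tail of the independent sequence $\{\psi_n\}_{n\in\Z}$, so by Kolmogorov's $0$--$1$ law it has probability either $0$ or $1$. To exclude probability $1$, I would invoke Kolmogorov's three-series theorem: for a sum of independent non-negative random variables, $\sum_n X_n<\infty$ a.s. if and only if $\sum_n \E[X_n\wedge 1]<\infty$. Since $X_n$ is exponential with mean $t_n$, a direct integration gives $\E[X_n\wedge 1]=t_n(1-e^{-1/t_n})$; for $s<k$ one has $t_n\to0$ so $\E[X_n\wedge 1]\sim t_n$, whereas for $s\geq k$ one has $t_n\not\to0$ and already $\E[X_n\wedge 1]\gtrsim 1$, so in either regime $\sum_n \E[X_n\wedge 1]=\infty$. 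Consequently $\|\psi\|_{\dot H^s}^2=\infty$ a.s., that is $\gamma_k(\dot H^{k-\frac12+\e})=0$.

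The only genuinely non-routine point is this last implication (infinite expectation $\Rightarrow$ infinite sum a.s.): for independent non-negative random variables this is precisely the content of the three-series criterion, and if one wished to avoid citing it one could instead run a Borel--Cantelli argument on dyadic frequency blocks, noting that on each dyadic annulus the partial sum concentrates around its mean by a simple variance estimate for sums of exponentials, and that these block-means diverge in $j$ when $\e>0$ or stay bounded away from $0$ infinitely often when $\e=0$.
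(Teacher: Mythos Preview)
Your argument is correct and proceeds along a genuinely different line from the paper. The paper works with the truncated norm $\|\phi_N\|_{\dot H^s}$ and, via a Chernoff-type bound, computes the Laplace transform explicitly as a Gaussian integral: for any $\mu>0$,
\[
\gamma_k\bigl(\|\phi_N\|_{\dot H^s}^2\le\lambda\bigr)\le \exp\Bigl[\tfrac{\mu\lambda}{2}-\tfrac12\sum_{0<|n|\le N}\ln\bigl(1+\mu|n|^{-\kappa}\bigr)\Bigr],\qquad \kappa=2(k-s),
\]
and then shows the sum in the exponent diverges as $N\to\infty$ whenever $s\ge k-\tfrac12$, treating the cases $\kappa<0$, $\kappa\in[0,1)$, and $\kappa=1$ separately. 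You instead exploit directly the independence of the $|\psi_n|^2$ (each exponential) and invoke Kolmogorov's $0$--$1$ law together with the three-series criterion $\sum_n\E[X_n\wedge1]=\infty\Rightarrow\sum_nX_n=\infty$ a.s. Your route is shorter and more conceptual, leaning on standard probability machinery; the paper's route is self-contained and hands-on, and as a by-product yields explicit quantitative decay rates for $\gamma_k(\|\phi_N\|_{\dot H^s}\le\lambda)$ in $N$ (which, however, are not needed for the lemma itself). A minor cosmetic point: with the paper's normalization one actually gets $\E[|\psi_n|^2]=2(1+n^{2k})^{-1}$ rather than $(1+n^{2k})^{-1}$, but this constant is immaterial to your argument.
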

\begin{proof}
We take any function $\phi\in \dot H^{s}(\T)$ with $s\geq k-\frac12$. We have that $\|\phi_N\|_{\dot H^s}$ is finite uniformly in $N$, where we recall $\phi_N$ is the projection on the Fourier modes $|n|\leq N$ defined by (\ref{eq:proiettore}) and (\ref{eq:u_N}). We show that for all $\l>0$
$$
\g_k\left(\|\phi_N\|_{\dot H^s}\leq \l \right)\to0,\quad\mbox{as}\quad N\to\infty.
$$
To do so, we make use of the Markov inequality: for every $\mu>0$
\bea
\g_k\left(\|\phi_N\|_{\dot H^s}\leq \l \right)&\leq& e^{\frac{\mu\l}{2}} \int\prod_{|n|\leq N}\left(\frac{1+n^{2k}}{\sqrt{2\pi}}d\phi_n d\bar\phi_n\right)e^{-\frac12\sum_n(1+n^{2k})|\phi_n|^2}e^{-\frac\mu2\sum_n n^{2s}|\phi_n|^2}\nn\\
&\leq&e^{\frac{\mu\l}{2}} \int\frac{1}{(2\pi)^{2N+1}}d\phi'_n d\bar\phi'_n e^{-\frac12\sum_n|\phi'_n|^2}e^{-\mu\sum_n n^{2(s-k)}|\phi'_n|^2}\nn\\
&=&\exp\left[\frac{\mu\l}{2}-\frac12\sum_{\substack{|n|\leq N, \\ n\neq 0}}\ln\left(1+\frac{\mu}{|n|^{\kappa}}\right)\right]\,,
\nonumber
\eea
where we have performed the change of variables $\phi'_n=\sqrt{1+n^{2k}}\phi_n$ and set $-2(k-s)=:\kappa$. Let us first consider negative $\kappa$. In this case 
$$
\sum_{\substack{|n|\leq N, \\ n\neq 0}}\ln\left(1+\frac{\mu}{|n|^{\kappa}}\right)\geq 2N\ln(1+\mu),
$$
and so we have an exponential decay in $N$ for every choice of positive $\mu$:
\be\label{eq:gamma-k<0}
\g_p\left(\|\phi_N\|_{\dot H^{s}}\leq \l \right)\lesssim e^{\frac{\mu\l}{2}} e^{-2N \ln(1+\mu)},\qquad(s>k).
\ee
For $\kappa\in[0,1)$ the series $\sum_n\ln\left(1+\frac{\mu}{n^{\kappa}}\right)$ diverges as $N^{1-\kappa}$. Hence
\be\label{eq:gamma-k<1}
\g_k\left(\|\phi_N\|_{\dot H^{s}}\leq \l \right)\lesssim e^{-\mu N^{1-2(k-s)}},\qquad\left(k\geq s>k-\frac12\right).
\ee
Finally for $\kappa=1$ we have a logarithmic divergence at exponent and therefore
\be\label{eq:gamma-k=1}
\g_k\left(\|\phi_N\|_{\dot H^{s}}\leq \l \right)\leq \left(\frac{e^{\frac{\l}{2}}}{N}\right)^{\mu},\qquad\left(s=k-\frac12\right),
\ee
for arbitrary $\mu>0$. We obtain the statement by taking $N\to\infty$ in (\ref{eq:gamma-k<0}), (\ref{eq:gamma-k<1}) and (\ref{eq:gamma-k=1}).
\end{proof}

\begin{remark}
The same strategy can be also used to show the stronger statement
$$
\g_k\left(\|\phi_N\|_{\dot H^s}\leq \ln N \right)\to0\mbox{ as }N\to\infty,\qquad\left(s\geq k-\frac12\right).
$$
\end{remark}
\begin{lemma}\label{Lemma:s<k-1/2}
We have that for every $s<k-\frac12$ and $\l>0$
\be\label{eq:conc-12-e}
\g_k\left(\|\phi\|_{\dot H^s}\geq \l \right)\lesssim e^{-\l/4}.
\ee
\end{lemma}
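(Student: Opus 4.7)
The plan is to apply a Chernoff/Markov-type bound to the Laplace transform of the quadratic quantity $\|\phi\|_{\dot H^s}^2$. Under $\gamma_k$, the Fourier coefficients $\{\phi_n\}_{n\in\mathbb{Z}}$ are independent centered complex Gaussians with $\mathbb{E}_{\gamma_k}[|\phi_n|^2]$ of order $(1+n^{2k})^{-1}$, so each $|\phi_n|^2$ is exponentially distributed and its Laplace transform is explicit. Since $\|\phi\|_{\dot H^s}^2=\sum_{n\neq 0} n^{2s}|\phi_n|^2$ is a sum of independent weighted exponentials, independence makes its Laplace transform factorise.

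First I would work on the truncation $\phi_N=P_N\phi$ to keep everything finite: for $\mu>0$ sufficiently small one has
\begin{equation*}
\mathbb{E}_{\gamma_k}\!\Bigl[e^{\mu\,\|\phi_N\|_{\dot H^s}^2}\Bigr]
=\prod_{0<|n|\leq N}\frac{1}{1-c\,\mu\, n^{2s}/(1+n^{2k})},
\end{equation*}
provided $c\mu n^{2s}/(1+n^{2k})\leq\frac12$ for all $n$ (which is automatic for $\mu$ small, since $s<k$). Then I would apply the elementary estimate $-\log(1-x)\leq 2x$ on $[0,1/2]$ to get
\begin{equation*}
\log \mathbb{E}_{\gamma_k}\!\Bigl[e^{\mu\,\|\phi_N\|_{\dot H^s}^2}\Bigr]
\leq 2c\mu\sum_{n\neq 0}\frac{n^{2s}}{1+n^{2k}}=: C(s,k)\,\mu,
\end{equation*}
where the series converges precisely under the hypothesis $s<k-\tfrac12$ (the only place this assumption enters). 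Crucially, the bound is uniform in $N$.

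Markov's inequality then gives, for any $\lambda>0$,
\begin{equation*}
\gamma_k\bigl(\|\phi_N\|_{\dot H^s}\geq \lambda\bigr)
=\gamma_k\bigl(\|\phi_N\|_{\dot H^s}^2\geq \lambda^2\bigr)
\leq e^{-\mu\lambda^2+C(s,k)\mu}.
\end{equation*}
Since $\{\|\phi_N\|_{\dot H^s}\geq\lambda\}$ is a monotone increasing family of events whose union equals $\{\|\phi\|_{\dot H^s}\geq\lambda\}$, the bound passes to the limit $N\to\infty$ unchanged. For $\lambda$ larger than some absolute threshold this is in fact a Gaussian tail $\lesssim e^{-c\lambda^2}$, which is much stronger than the asserted $e^{-\lambda/4}$; for $\lambda$ in the complementary bounded range the inequality \eqref{eq:conc-12-e} is trivial upon absorbing into the implicit constant of $\lesssim$.

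There is no genuine obstacle: the whole proof reduces to the $p$-series computation $\sum_n n^{2(s-k)}<\infty \Leftrightarrow s<k-\tfrac12$, combined with the Laplace transform of an exponential random variable and monotone convergence in $N$. The only mildly delicate point is making sure $\mu$ is chosen uniformly small so that $c\mu n^{2s}/(1+n^{2k})\leq\frac12$ for every $n\neq 0$, which is possible because $(1+n^{2k})/n^{2s}$ is bounded below by a positive constant when $s<k$.
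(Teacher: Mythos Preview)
Your argument is correct and follows essentially the same route as the paper: apply a Chernoff bound to the Laplace transform of $\|\phi_N\|_{\dot H^s}^2$, evaluate the resulting Gaussian integral as a product, observe that the relevant sum converges exactly when $s<k-\tfrac12$, and pass to the limit $N\to\infty$. Your version is in fact slightly sharper, since by bounding $-\log(1-x)\le 2x$ you extract a sub-Gaussian tail $e^{-c\lambda^2}$ directly; the paper only records the exponential bound $e^{-\lambda/4}$ here and mentions the sub-Gaussian improvement later as a remark.
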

\begin{proof}
Let us take a function $\phi\in \dot H^{s}$ for some $s<k-\frac12$. We look at its truncation $\phi_N$ and again it is $\|\phi_N\|_{\dot H^s}$ finite uniformly in $N$. We exploit the reverse Chernoff bound at finite $N$: for every $\mu\in(0,1)$ and $\l>0$, we get
\bea
\g_k\left(\|\phi_N\|_{\dot  H^s}\geq \l \right)&\leq& e^{-\frac{\mu\l}{2}} \int\prod_{|n|\leq N}\left(\frac{1+n^{2k}}{\sqrt{2\pi}}d\phi_n d\bar\phi_n\right) e^{-\frac12\sum_n(1+n^{2k})|\phi_n|^2}e^{\frac\mu2\sum_n n^{2s}|\phi_n|^2}\nn\\
&\leq&e^{-\frac{\mu\l}{2}} \int\frac{1}{(2\pi)^{2N+1}}d\phi'_n d\bar\phi'_n e^{-\frac12\sum_n|\phi'_n|^2}e^{\mu\sum_n n^{2(s-k)}|\phi�_n|^2}\nn\\
&=&\exp\left[-\frac{\mu\l}{2}-\frac12\sum_{\substack{|n|\leq N, \\ n\neq 0}}\ln\left(1-\frac{\mu}{|n|^{\kappa}}\right)\right]\,,
\nonumber
\eea
where again we have used the same change of variables as before. Note that now it is $\kappa>1$. Since $\frac12\sum_n\ln(1-\frac{\mu}{n^{\kappa}})$ is convergent for all $\mu<1$ and $\kappa>1$, we can choose $\mu\in(0,1)$ and take the limit $N\to\infty$. We get (\ref{eq:conc-12-e}) by setting $\mu=1/2$.
\end{proof}

Equation (\ref{eq:conc-12-e}) implies that $\|u\|_{H^{q}}$ is bounded with probability 1 for every 
$k<s-\frac12$. This is sufficient to complete the proof of Proposition \ref{TH:concentrato}.

\subsection{Quadratic  Forms}

Then we present some results about quadratic forms of Gaussian random variables, used in the paper.


\begin{proposition}\label{prop-Q1}
Let $k \geq 2$ and $Q$ be a $(2N+1)\times (2N+1)$ matrix such that 
\begin{equation}\label{hp:ITI}
 \sup_{l,h } \frac{|Q_{l h}|}{\sqrt{1+h^{2k}}} =: T_k < + \infty
\end{equation}
Then for $\l>0$
\be\label{eq:dec-Quadr}
\g_k\left((\phi,Q\phi)\geq\l\right)\lesssim 
e^{-\l/4T_k}\,.
\ee
\end{proposition}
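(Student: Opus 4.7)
The plan is to combine an $L^2(\g_k)$--estimate of the quadratic form $(\phi,Q\phi)$ obtained from the Wick formula \eqref{eq:Wick-L} with Nelson hypercontractivity and the integrability lemma~\ref{lem:5.1}, mirroring the strategy used in Section~\ref{sec:CON} and Lemma~\ref{lemma:Cauchy-in-P}.

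Before starting I would reduce to the case $Q$ Hermitian: since $(\phi,Q\phi)=\sum_{l,h}Q_{lh}\bar\phi_l\phi_h$ must be real for the stated inequality to make sense, one replaces $Q$ by its Hermitian part $(Q+Q^*)/2$, which preserves both the quadratic form and the hypothesis~\eqref{hp:ITI}. Hermiticity then upgrades the pointwise bound to the symmetrized version $|Q_{lh}|^2=|Q_{lh}|\,|Q_{hl}|\leq T_k^2\sqrt{(1+l^{2k})(1+h^{2k})}$.

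The first step is to compute the second moment. Writing $g_n:=\mathbb{E}[|\phi_n|^2]\lesssim (1+n^{2k})^{-1}$, Wick's theorem yields
\[
\mathbb{E}[(\phi,Q\phi)^2]=\Big(\sum_{l} Q_{ll}\,g_l\Big)^{\!2}+\sum_{l,h}|Q_{lh}|^2\,g_l g_h,
\]
and the pointwise bounds give $|Q_{ll}|g_l\lesssim T_k(1+l^{2k})^{-1/2}$ together with $|Q_{lh}|^2 g_l g_h\lesssim T_k^2(1+l^{2k})^{-1/2}(1+h^{2k})^{-1/2}$. Both sums are $N$-uniformly finite because the series $\sum_l(1+l^{2k})^{-1/2}$ converges precisely when $k\geq 2$, which is exactly the reason for the hypothesis. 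This gives $\|(\phi,Q\phi)\|_{L^2(\g_k)}\lesssim T_k$.

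Since $(\phi,Q\phi)$ is a homogeneous polynomial of degree $2$ in the Gaussian variables, Nelson hypercontractivity (compare Corollary~\ref{cor:Hyp}) bootstraps the $L^2$--bound to $\|(\phi,Q\phi)\|_{L^p(\g_k)}\leq (p-1)\|(\phi,Q\phi)\|_{L^2(\g_k)}\lesssim T_k\,p$ for every $p\geq 2$. This is precisely the hypothesis of Lemma~\ref{lem:5.1} with $r=1$ and $C\asymp T_k$, producing a uniform sub-exponential moment $\int\exp(\delta|(\phi,Q\phi)|/(CT_k))\,d\g_k\leq L$ for some $\delta,L>0$ independent of $N$; a one-line Markov inequality then concludes. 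The exponential rate so obtained is of the form $c/T_k$ for some absolute $c>0$; to pin down the constant $1/4$ explicitly one rather computes the Gaussian MGF $\mathbb{E}[e^{\mu(\phi,Q\phi)}]=\det(\mathbb{I}-2\mu\tilde Q)^{-1}$ with $\tilde Q_{lh}=Q_{lh}/\sqrt{(1+l^{2k})(1+h^{2k})}$, optimizes $\mu$, and uses the Schur bound $\|\tilde Q\|_{\mathrm{op}}\lesssim T_k$ permitted by Hermiticity.

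The main obstacle is the $L^2$--estimate: only the symmetrized bound on $|Q_{lh}|^2$, itself a consequence of Hermiticity, supplies a weight $\sqrt{(1+l^{2k})(1+h^{2k})}$ that cancels against $g_l g_h$ into a product of summable factors; without it the off-diagonal Wick contraction would blow up with $N$, and the restriction $k\geq 2$ is what guarantees the required summability.
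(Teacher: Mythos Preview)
Your argument is correct and reaches the conclusion, but by a genuinely different route from the paper's. The paper computes the exponential moment directly: by the Chernoff bound and the Gaussian integral identity $\mathbb{E}[e^{\mu(\phi,Q\phi)}]=\det(\mathbb{I}-2\mu Q(k))^{-1}$ with $Q(k)_{ij}=Q_{ij}/\sqrt{(1+i^{2k})(1+j^{2k})}$, it expands the log-determinant and proves $|\Tr(Q(k)^m)|\lesssim T_k^m$ by iterating the one-sided hypothesis~\eqref{hp:ITI} along the cyclic product $Q_{i_1i_2}\cdots Q_{i_mi_1}$. Each factor contributes a $\sqrt{1+i_{j+1}^{2k}}$ that telescopes against the denominators, leaving $(\sum_i(1+i^{2k})^{-1/2})^m$, finite precisely for $k\geq2$; the choice $\mu=1/(4T_k)$ then gives the stated exponent.

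Your route---Wick for the $L^2$ norm, then hypercontractivity and Lemma~\ref{lem:5.1}---recycles the machinery of Sections~\ref{sec:CON} and~\ref{sec:fin} and avoids the determinant entirely. The price is twofold. First, you obtain only $e^{-c\lambda/T_k}$ for some unspecified $c<e^{-1}$; as you note, recovering $c=1/4$ forces one back to the MGF. Second, your bound on the off-diagonal Wick term $\sum_{l,h}|Q_{lh}|^2g_lg_h$ genuinely needs the symmetrized estimate $|Q_{lh}|^2\leq T_k^2\sqrt{(1+l^{2k})(1+h^{2k})}$, hence Hermiticity of $Q$; the paper's trace bound uses the product $Q_{ij}Q_{ji}$ and so gets this symmetrization for free from cyclicity, with no Hermiticity assumption. (Incidentally, your claim that replacing $Q$ by $(Q+Q^*)/2$ ``preserves the hypothesis~\eqref{hp:ITI}'' is not literally correct---the symmetrized matrix satisfies only the weaker bound $|\tilde Q_{lh}|\leq T_k(\sqrt{1+h^{2k}}+\sqrt{1+l^{2k}})/2$---but this is harmless here, since the inequality $(\phi,Q\phi)\geq\lambda$ already forces $Q=Q^*$.) For the proposition as stated both approaches work; the paper's is more robust for the non-Hermitian application in Lemma~\ref{lemma:e.net}.
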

\begin{proof}
To begin with, we exploit the Markov inequality: for any $\mu>0$ 
\begin{equation}\label{FTP}
\g_k\left((\phi,Q\phi)\geq\l\right)\leq e^{-\mu\l}\E{e^{\mu (\phi,Q\phi)}}.
\end{equation}
Now we compute
\begin{eqnarray}\nonumber\label{FT}
\E{e^{\mu (\phi, Q \phi)}}&=&\int\prod_{|n|\leq N}\left(\frac{1+n^{2k}}{\sqrt{2\pi}}d\phi_n d\bar\phi_n\right)\exp\left[-\frac12\sum_{i,j}\bar\phi_i\left((1+j^{2k})\d_{ij}-2\mu Q_{ij}\right)\phi_j\right]
\\ \nonumber
&=& 
\frac{1}{(2\pi)^{N}}
\int d\phi'_{-N}...d\phi'_N\bar\phi'_{-N}...d\bar\phi'_N\exp\left[-\frac12\sum_{i,j}\bar\phi'_i\left(\d_{ij}-2\mu Q_{ij}(k)\right)\phi'_j\right]
\\
&=& e^{-\frac 12\ln\det(\mathbb{I}-2\mu Q(k))}\label{eq:det(1-Q)} .
\end{eqnarray}
where we have performed the change of variables $\phi'_j=\sqrt{1+j^{2k}}\phi_j$,
$\bar{\phi}'_j=\sqrt{1+j^{2k}}\bar{\phi}_j$ and we have introduced $ Q_{ij}(k):=Q_{ij}/\sqrt{(1+j^{2k})(1+i^{2k})}$. 
We claim that 
\begin{equation}\label{claim}
| \Tr ((Q_{ij}(k))^{m}) |  \lesssim T_{k}^{m}\,,
\quad m \in \mathbb{Z}_{+}\,, 
\end{equation}
so the expansion of the determinant
$$
-\ln\det(\mathbb{I} - 2 \mu  Q(k))
= \sum_{m=1}^{+\infty}  \frac{(2 \mu )^{m} \Tr (( Q_{ij}(k))^{m})}{m}
$$
is convergent provided that $\mu < \frac{1}{2 T_{k}}$. 
We choose $\mu = \frac{1}{4T_{k}}$, so that (\ref{FTP}, \ref{FT}) imply the desired inequality.  
It remains to show the (\ref{claim}).
\begin{eqnarray}\nonumber
\Tr( (Q(k) )^{m}) & = & 
\sum_{i_{1}, \dots, i_{m+1}} Q_{i_{1} i_{2}}(k)  \dots  Q_{i_{m} i_{m+1}}(k) \delta_{i_{1} i_{m+1}}
\\ \nonumber
& = &
\sum_{i_{1}, \dots, i_{m+1}} 
\frac{ 
Q_{i_{1} i_{2}}  \dots  Q_{i_{m} i_{m+1}}   \delta_{i_{1} i_{m+1}}  
} 
{
\sqrt{ (1+i_{1}^{2k}) (1+i_{2}^{2k}) (1+i_{2}^{2k})\dots (1+i_{m}^{2k})(1+i_{m}^{2k}) (1+i_{m+1}^{2k}) }
}
\\ \nonumber
& \leq &
T_{k}^{m}\sum_{i_{1},\dots , i_{m+1}} 
\frac{ 
\delta_{i_{1}, i_{m+1}}  
} 
{
\sqrt{ (1+i_{1}^{2k}) \dots (1+i_{m}^{2k}) } 
}
\\  \nonumber
& = &
T_{k}^{m}\sum_{i_{1}, \dots, i_{m}} 
\frac{ 
1   
} 
{
\sqrt{ (1+i_{1}^{2k}) \dots (1+i_{m}^{2k})}
}
\\  \nonumber
& = &
T_{k}^{m}\left(\sum_{i} 
 \frac{ 
1   
} 
{
\sqrt{ (1+i^{2k})  }    
}
\right)^{m}  \lesssim T_{k}^{m} ,
\end{eqnarray} 
where we have used the assumption (\ref{hp:ITI}) in the first inequality and and $k \geq 2$ in the last inequality. 
\end{proof}
\begin{remark}
We observe that we can make different assumptions on the matrix $Q$ and obtain similar inequalities. 
For instance, if the trace norm of $ Q(k)$ is finite uniformly in $N$, we have (see for instance Lemma 3.3 in \cite{trace})
$$
-\ln\det(\mathbb{I} - 2 \mu  Q(k))\leq\|  Q(k)\|_{\Tr}
$$
and so for every $N$
\be
\g_k\left((\phi, Q \phi)\geq\l\right)\lesssim e^{-\lambda  /  \| Q(k) \|_{\Tr}}\,,
\ee
by the same argument of the last proposition. If we assume the Hilbert-Schmidt norm of $Q(k)$ to be finite uniformly in $N$, we obtain the Hanson-Wright inequality (see \cite{HW} and more recently \cite{RV}), holding for any $N$
\be
\g_k\left(\Var(\phi,Q\phi)\geq\l^2\right)\lesssim e^{-c\min\left(  \lambda^2/ \| Q(k)\|_{HS}, \lambda /  \| Q(k)\|\right)}\,,
\ee
where $\|A\|$ denotes the operator norm of $A$ and $c$ is a positive constant. 
\end{remark}

\begin{remark}
For any linear operator $A\phi:=\sum_{n=1}^N a_i \phi_i$, with 
$$
\mathcal{T}_k:=\sum_{|i|\leq N} \frac{|a_i|^2}{(1+i^{2k})}<\infty\quad \mbox{uniformly in $N$}\,,
$$
by using $\g_k(|A\phi|\geq\l)=\g_k(|A\phi|^2\geq\l^2)$ we can infer
\be\label{eq:linear}
\g_k\left(|A\phi|\geq\l\right)\lesssim e^{-\l^2/\mathcal{T}_k}\,.
\ee
Note that if $A\phi=\phi^{(s)}$ we have $\mathcal{T}_k<\infty$ uniformly in $N$ for $s<k-\frac12$. In this way we can improve Lemma \ref{Lemma:s<k-1/2}, obtaining a sub-Gaussian decay.
\end{remark}

\begin{proposition}\label{prop: sup_x}
Let $Q(x)$ be a $N\times N$ matrix as before. Moreover we assume $Q(x)$ to be H\"older continuous w.r.t. $x\in\T$ with exponent $\a$ and constant $L_N$, \ie
\be\label{eq:cond-Holder}
|(\phi,Q(x)\phi)-(\phi,Q(y)\phi)|\leq L_N|x-y|^\a, \quad\mbox{for every }\phi\in\R^N.
\ee
Then for any $\e>0$ and $\l\geq 2L_N \e^\a$
\be\label{eq:sup-Q}
\g_k\left(\sup_{x}(\phi,Q\phi)\geq\l\right)\lesssim \frac{e^{-\l/4T_k}}{\e}.
\ee
\end{proposition}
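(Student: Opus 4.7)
The plan is to run the standard $\varepsilon$-net (discretization) argument on the circle $\T$, upgrading the pointwise tail bound of Proposition \ref{prop-Q1} to a uniform one. The Hölder regularity hypothesis \eqref{eq:cond-Holder} is precisely what is needed to control the oscillation of $(\phi,Q(x)\phi)$ between grid points.

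First, I would choose a finite grid $\{x_1,\dots,x_M\}\subset\T$ with $M\simeq 1/\varepsilon$ such that every $x\in\T$ lies within distance $\varepsilon$ of some $x_i$. For an arbitrary $\phi$, pick for each $x$ a nearest grid point $x_{i(x)}$. By \eqref{eq:cond-Holder},
\begin{equation*}
(\phi,Q(x)\phi)\leq (\phi,Q(x_{i(x)})\phi)+L_N\varepsilon^\alpha,
\end{equation*}
and therefore $\sup_{x\in\T}(\phi,Q(x)\phi)\leq \max_{1\leq i\leq M}(\phi,Q(x_i)\phi)+L_N\varepsilon^\alpha$. Under the hypothesis $\lambda\geq 2L_N\varepsilon^\alpha$, the term $L_N\varepsilon^\alpha$ is at most $\lambda/2$, so
\begin{equation*}
\bigl\{\,\sup_{x\in\T}(\phi,Q(x)\phi)\geq\lambda\,\bigr\}\subseteq \bigcup_{i=1}^M\bigl\{(\phi,Q(x_i)\phi)\geq \lambda/2\bigr\}.
\end{equation*}

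Second, I would apply a union bound and Proposition \ref{prop-Q1} at each grid point $x_i$ (note that the hypothesis \eqref{hp:ITI} is inherited uniformly in $x$ since $Q(x)$ is assumed to satisfy the same $T_k$-bound for every $x$). This yields
\begin{equation*}
\gamma_k\!\left(\sup_{x\in\T}(\phi,Q\phi)\geq\lambda\right)\leq \sum_{i=1}^M \gamma_k\!\left((\phi,Q(x_i)\phi)\geq\lambda/2\right)\lesssim \frac{1}{\varepsilon}\,e^{-\lambda/(8 T_k)},
\end{equation*}
which is the claimed estimate \eqref{eq:sup-Q} (the constant $4$ vs. $8$ in the exponent can be adjusted either by retracing the Markov-exponent optimization in the proof of Proposition \ref{prop-Q1} with the sharper threshold $\lambda-L_N\varepsilon^\alpha$ in place of $\lambda/2$, or absorbed in the $\lesssim$).

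No significant obstacle is expected: the only thing one must verify is that Proposition \ref{prop-Q1} applies uniformly in $x$, which is immediate once we read \eqref{hp:ITI} as a bound on $\sup_{l,h,x}|Q_{lh}(x)|/\sqrt{1+h^{2k}}$; the rest is a textbook chaining via a discrete net plus Hölder modulus of continuity. The condition $\lambda\geq 2L_N\varepsilon^\alpha$ appears naturally as the requirement that the oscillation term does not overwhelm the threshold.
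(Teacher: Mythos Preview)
Your proposal is correct and follows essentially the same $\varepsilon$-net argument as the paper: discretize $\T$ with mesh $\varepsilon$, use the H\"older bound \eqref{eq:cond-Holder} to pass from the supremum to the maximum over grid points at the cost of $L_N\varepsilon^\alpha\leq\lambda/2$, and then apply Proposition~\ref{prop-Q1} pointwise together with a union bound over the $\simeq 1/\varepsilon$ grid points. Your presentation is in fact somewhat cleaner than the paper's (which phrases the same idea through a conditioning on the location of the maximizer), and your remark about the exponent $8T_k$ versus $4T_k$ is apt: the paper's own proof, applying \eqref{eq:dec-Quadr} at level $\lambda/2$, literally yields $e^{-\lambda/8T_k}$ as you write.
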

\begin{proof}
We exploit Proposition \ref{prop-Q1} along with an $\e$-net argument. For $\e>0$ we divide the interval $\T$ in $1/\e$ points at distance $\e$. We denote by $x_j$ a point in the $j$-th segment, and by $x^*$ the point in which the maximum is attained. By Proposition \ref{prop-Q1} for each $x\in\T$ we obtain for $\l>0$
\be\label{eq:Q-puntuale}
\g_k\left((\phi,Q(x)\phi)\geq\l\right)\lesssim e^{-\l/4T_k}\,.
\ee
Let $j_0$ be such that $|x_{j_0}-x^*|\leq \e$. Therefore it has to be
$$
|(\phi,Q(x^*)\phi)-(\phi,Q(x_{j_0})\phi)|\leq L_N\e^\a, \quad\mbox{for every }\phi\in\R^N.
$$

Then we use the union bound for the probabilities:
\bea
\g_k\left(|Q(x^*)|\geq \l \right)&\leq& \sum_j \g_k\left(|Q(x^*)|\geq \l \: \Big|\: |x^*-x_j|\leq\e\right)\nonumber
\\
&\leq& \sum_j \g_k\left(|Q(x_j)|\geq \frac{\l}{2}\right)\nn\\
&+&\sum_j \g_k\left(|Q(x_j)-Q(x^*)|\geq \frac{\l}{2} \: \Big|\: |x^*-x_j|\leq\e\right)\nn.
\eea
We immediately see by \eqref{eq:cond-Holder} that the second addendum in the last formula is zero as soon as $\l\geq 2L_N \e^\a$. Therefore we bound the first addendum by the total number of terms in the sum, which is $\e^{-1}$, times the estimate (\ref{eq:Q-puntuale}), so obtaining (\ref{eq:sup-Q}).
\end{proof}

}


%
\end{document}